\crefname{equation}{}{}			
\newcommand{\takeout}[1]{\empty}
\numberwithin{equation}{section}		
\spnewtheorem{notation}[theorem]{Notation}{\bfseries}{}
\title{An Expressive Coalgebraic Modal Logic \\ for Cellular Automata}
\author{Henning Basold\inst{1}\orcidlink{0000-0001-7610-8331} \and 
   	   Chase Ford\inst{2}\orcidlink{0000-0003-3892-5917} \and 
   	   Lulof Pir\'{e}e\inst{3}\orcidlink{0009-0004-4802-6514}
}
 \institute{Leiden University, Netherlands \email{h.basold@liacs.leidenuniv.nl} \\
   \and Leiden University, Netherlands \email{m.c.ford@liacs.leidenuniv.nl} \\
   \and Tallinn University of Technology, Estonia \email{lulof.piree@taltech.ee}
 }
\begin{document}
\maketitle

\begin{abstract}
Cellular automata provide models of parallel computation
based on cells, whose connectivity is given by an action of a
monoid on the cells. At each step in the computation, every
cell is decorated with a state that evolves in discrete steps
according to a local update rule, which determines the next
state of a cell based on its neighbour's states. In this paper,
we provide a coalgebraic view on cellular automata, which
does not require typical restrictions, such as uniform
neighbourhood connectivity and uniform local rules. Using
the coalgebraic view, we devise a behavioural equivalence
for cellular automata and a modal logic to reason about their
behaviour. We then prove a Hennessy-Milner style theorem,
which states that pairs of cells satisfy the same modal formulas
exactly if they are identified under cellular behavioural equivalence.

\keywords{cellular automata \and coalgebra \and modal logic \and Hennessy-Milner theorem}
\end{abstract}

\section{Introduction}

Cellular automata provide a model for computations that are distributed in space~\cite{Bhattacharjee20,Ilachinski01:CellularAutomataDiscrete,Sarkar00:BriefHistoryCellular}.
Besides being an immediate model for parallel computing, cellular automata find applications in optimisation~\cite{BOM15:NaturalComputingAlgorithms}, reversible computing~\cite{Morita17:TheoryReversibleComputing}, quantum computing~\cite{LLW+23:PhotonicElementaryCellular}, physics~\cite{ZPT21:StepBunchesNanowires}, chemistry~\cite{Ulam62:MathematicalProblemsConnected,CW22:AsymmetricCellDEVSModels}, biology~\cite{VonNeumann66:TheorySelfreproducingAutomata,DD17:CellularAutomatonModeling}, sociology~\cite{MP07:ComplexAdaptiveSystems}, and group theory~\cite{CC10:CellularAutomataGroups}.
Cellular automata (CA) comprise a collection of cells that hold a state and interact with other cells in their neighbourhood in order to update their state.
In this paper, we will focus on synchronous cellular automata that update cell states in discrete rounds, and where the parallelism only stems from spatial distribution.
A cellular automaton starts from an initial \emph{configuration}, that is, an assignment of states to each cell, and evolves by means of local rules that compute the next state of the cell from the states of its neighbours.
The spatial neighbourhood of a cell is given by a collection of elements of a monoid, which in turn acts on the cells to provide the spatial connectivity~\cite{roka}.

Let us explore as example a cellular automaton that presents a (flawed) attempt at leader election in a cyclic network~\cite{Santoro06:DesignAnalysisDistributed}.
The network has six nodes, given by the modular group $\Z_{6}$ arranged as the circle in \cref{fig:cycle-ca} on the left.
\begin{figure}[ht]
  \begin{equation*}
    \begin{tikzpicture}[x=0.8cm, y=0.8cm]
      \draw(0,0) circle (1);
      \pgfmathsetmacro\n{6}
      \foreach \i in {0,1,...,5} {
        \pgfmathsetmacro\r{\i*(360/\n)}
        \fill (\r:1) circle (1pt) coordinate (n-\i);
        \node (n-\i) at (\r:1.2) {\i};
      };
      \draw[thick,red,->] ([shift=(5:1.4)]0,0) arc (3:35:1.4) node[midway, right]{$+1$};
      \draw[thick,red,->] ([shift=(-5:1.4)]0,0) arc (-3:-35:1.4) node[midway, right]{$-1$};
    \end{tikzpicture}
    \quad
    \begin{aligned}[b]
      & N = \set{-1, 0, 1}
      & & \gamma_{1}(k)(n) = (k + n) \bmod 6, \quad (k\in\Z_{6}, n\in\Z) \\
      & S = \N
      & & \gamma_{2}(k)(f) = (f(-1) + f(0) + f(1)) \bmod (k + 1) \\
      & \phantom{x}
      & & \qquad \qquad \text{for } k \in \Z_{6} \text{ and } f \from N \to S
    \end{aligned}
  \end{equation*}
  \vspace*{-5ex}
  \caption{Cellular automaton, where the spatial connectivity is given by the natural action $\gamma_{1}$ of $\Z$ on $\Z_{6}$ and local rules $\gamma_{2}$ for cells $k \in \Z_{6}$ by modular arithmetic.}
  \label{fig:cycle-ca}
\end{figure}
The spatial connectivity of this cellular automaton is given by the natural action of the additive monoid $\Z$ on $\Z_{6}$, which we denote by $\gamma_{1}$.
Thus an integer $n \in \Z$ corresponds to taking $n$ steps anti-clockwise if $n$ is positive or clockwise if it is negative, see \cref{fig:cycle-ca}.
Every cell can access the state of itself and its direct neighbours, which is modelled by the neighbourhood $N \subset \Z$ with $N = \set{-1, 0, 1}$.
This attempt at leader election updates in every round the state of a cell $k$ by adding the states of a cell and its neighbours modulo $k+1$.
A leader is elected once cells no longer change their state and the leader will be the cell with the highest number.
The global behaviour of the cellular automaton is given by the evolution of configurations. \Cref{tab:cyclic-ca-behaviour} shows two sequences of behaviour, where the left ends with an elected leader, while the right shows periodic behaviour without the election of a leader.
\begin{table}[ht]
  \centering
  \begin{NiceTabular}{w{c}{1.2cm}|cccccc}
    \RowStyle[bold]{}
    \rule[-2mm]{0pt}{0.6cm} 
    \diagbox{step}{cell} & 0 & 1 & 2 & 3 & 4 & 5 \\ \hline
    0 & 0 & 0 & 2 & 0 & 0 & 0 \\
    1 & 0 & 0 & 2 & 2 & 0 & 0 \\
    2 & 0 & 0 & 1 & 0 & 2 & 0 \\
    3 & 0 & 1 & 1 & 3 & 2 & 2 \\
    4 & 0 & 0 & 2 & 2 & 2 & 4 \\
    5 & 0 & 0 & 1 & 2 & 3 & 0 \\
    \RowStyle[rowcolor=red!50, nb-rows=2]{}   
    6 & 0 & 1 & 0 & 2 & 0 & 3 \\
    7 & 0 & 1 & 0 & 2 & 0 & 3
  \end{NiceTabular}
  \qquad \qquad
  \begin{NiceTabular}{w{c}{1.2cm}|cccccc}
    \RowStyle[bold]{}
    \rule[-2mm]{0pt}{0.6cm} 
    \diagbox{step}{cell} & 0 & 1 & 2 & 3 & 4 & 5 \\ \hline
    \RowStyle[rowcolor=red!50]{}	
    0 & 0 & 0 & 1 & 0 & 1 & 4 \\
    1 & 0 & 1 & 1 & 2 & 0 & 5 \\
    2 & 0 & 0 & 1 & 3 & 2 & 5 \\
    3 & 0 & 1 & 1 & 2 & 0 & 1 \\
    4 & 0 & 0 & 1 & 3 & 3 & 1 \\
    5 & 0 & 1 & 1 & 3 & 2 & 4 \\
    6 & 0 & 0 & 2 & 2 & 4 & 0 \\
    \RowStyle[rowcolor=red!50]{}	
    7 & 0 & 0 & 1 & 0 & 1 & 4
  \end{NiceTabular}
  \vspace*{1mm}
  \caption{Evolution of different configurations for the cellular automaton in \cref{fig:cycle-ca}. Left: a fixed point is reached in line 6; right: configuration with period 7.}
  \label{tab:cyclic-ca-behaviour}
\end{table}

There are various ways of implementing leader election~\cite{HS80:DecentralizedExtremafindingCircular}, which can be modelled with cellular automata.
However, correctness proofs for distributed algorithms can be tedious.
In this paper, we propose a \emph{modal logic to reason about the behaviour of cellular automata}, in which properties such as periodicity and nilpotency (existence of fixed points) can be expressed.
Our main result is a Hennessy-Milner style theorem, showing that the logical equivalence induced by our logic captures precisely behavioural equivalence of cellular automata.

In order to formulate behavioural equivalence of cellular automata, we also contribute a \emph{coalgebraic description of cellular automata} in the class of, what is traditionally referred to as, synchronous, non-regular and non-uniform cellular automata~\cite{Bhattacharjee20}.
Understanding simulations and bisimulations for such models is generally challenging~\cite{roka}, but our coalgebraic perspective provides a clear understanding of homomorphism and bisimulations between cellular automata via an extension of the standard notion of coalgebra homomorphisms~\cite{Rutten00} and coalgebraic Aczel-Mendler (or span) bisimulations~\cite{staton_bisim} to, what we call, \emph{cellular morphisms} and \emph{cellular bisimulations}.
We show that cellular morphisms and bisimulations are precisely what is required to preserve and reflect the global behaviour of cellular automata, and that these notions of behavioural equivalence correspond to logical equivalence in our logic.

\paragraph*{Related work}
We refer to Bhattacharjee~et~al.~\cite{Bhattacharjee20} for a comprehensive account and comparison of the various classes of cellular automata and, in particular, for an overview of notions of non-uniformity in models of cellular networks. 
A more historical account is given in~Sarkar~\cite{Sarkar00:BriefHistoryCellular}.
Nishio~\cite{Nishio06,Nishio07} provides an analysis of the relative expressive power between uniform cellular automata and those with non-uniform neighbourhood structure.
Many notions of behavioural equivalence and simulation orders for specific cellular automata classes have appeared in the literature, typically involving a fixed monoid and binary state set~\cite{DR99,handbook_ch6}.
Our notion of behavioural equivalence generalises these notions, although we do not cover simulations.

Modal logics for cellular automata with similar modalities and atomic formulas as ours
have been described by Delivorias~et~al.~\cite{ca_tdl} and
Hagiya~et~al.~\cite{ca_via_templog}, but their semantics are given only for uniform CA on regular grids of the form $\Z^{d}$
and they lack an analogue of our Hennessy-Milner style theorem. 
Hagiya~et~al. provide an explicit algorithm for satisfiability checking, while we focus on logical foundations and not on algorithms or complexity.

Other authors have proposed several logics with orthogonal goals in mind. 
Ishida~et~al.~\cite{proplog_for_local_rule,formulae_on_monoids} investigate the use of propositional logic as a framework for specifying the local rules of binary cellular automata; Zhang and Bölcskei~\cite{many_val_logic} provide an extension thereof using propositional {\L}ukasiewicz logic.
Das and Chakraborty~\cite{das_formal_logic} present a deduction system capable of computing the global rule of a fixed uniform cellular automaton with finite neighbourhood structure.

There have been a number of approaches to modelling cellular automata network structure via category-theoretic machinery. 
Capobianco and Uustalu~\cite{unif_comon_cas,grad_comon_cas} model synchronous uniform cellular automata as morphisms in the Kleisli category of the cowriter comonad on the category of uniform spaces and uniformly continuous maps.
This gives a static description of CA but no immediate way to induce the dynamic behaviour of CA which is needed to interpret a modal logic.
In contrast, we package the cellular space and local rules into coalgebras, rather than coalgebra morphisms, which leads  directly to a model of cellular behaviour. 
Our model enables us to capture transitions between configurations of cells in the manner sketched above.

Widemann and Hauhs~\cite{Widemann11:DistributiveLawCellularAutomata} provide a model of two-dimensional uniform cellular automata, but no logic.
Their approach is based on giving a syntax for moves along the grid $\Z^{2}$ and then equipping coalgebras for terms over this syntax with bialgebraic semantics~\cite{Klin-GSOS,Turi97:TowardsMathOS} by constructing a distributive law.
In contrast, we provide a general coalgebraic description of cellular automata which enables us to handle arbitrary synchronous CA and, more importantly, allows us to extract a general definition of cellular behavioural equivalence.

Staton~\cite{staton_bisim} describes and relates various notions of bisimulation native to coalgebras for a given type functor.
We make essential use of, what he calls, Aczel-Mendler bisimulation for relating cells in cellular automata locally.
However, in order to deal with the global behaviour on configurations, we enhance this definition with an additional relation on configurations subject to suitable compatibility conditions.
This means that the standard coalgebraic definitions are not directly applicable and our notion of cellular bisimulation is non-trivial.
That being said, our functor to model cellular automata as coalgebras preserves weak pullbacks, which implies that all four definitions of coalgebraic bisimulation are equivalent~\cite{staton_bisim}.
We leave it for future work to extend this equivalence to cellular bisimulations.

Basic modal logic~\cite{BRV01} has been extended to the level of coalgebras in the seminal work of Moss~\cite{Moss99}.
Subsequently, Pattinson~\cite{pattinson_sem_principles,pattinson_pred_lift} introduced predicate liftings
as a framework for establishing expressiveness results linking logical equivalence and behavioural equivalence on the level of coalgebras~\cite{klin,schroeder_pred_lift}. For an overview, see Kupke and Pattinson~\cite{kupke_coalg_mollog_survey}  and C\^{i}rstea~et al.~\cite{cirstea_coalg_mollog_survey}. 
We note that this framework does not directly yield our expressiveness result (\Cref{theo:hm_theorem}) as our notion of cellular morphism refines the usual coalgebra homomorphisms. It would be interesting to understand whether our description of cellular automata could be understood in the context of open maps~\cite{joyal_1996_open_maps}.

\paragraph*{Outline}
The remainder of this paper is organised as follows.
After the preliminaries in \cref{sec:prelim}, we present in \cref{sec:ca} our coalgebraic definition of cellular automata.
The resulting notion of coalgebra morphism is refined in \cref{sec:sim} to cellular morphisms: these preserve global behaviour and can be considered as the right notion of behavioural equivalence.
In~\cref{sec:relation-equivalence}, we introduce our notion of cellular bisimulation along with a construction of the largest such relation.
\Cref{sec:logic} introduces the syntax and semantics of our modal logic, and gives examples of its expressivity.
In \cref{sec:hennessy_milner}, we prove the correspondence between behavioural and logical equivalence.
We conclude with a summary and suggestions for future work in \cref{sec:conclusion}.
%
%
\section{Preliminaries}\label{sec:prelim}
We assume familiarity with basic concepts
from category theory~\cite{AHS90,MacLane98}.
We briefly gather the necessary background on
(universal) coalgebra~\cite{Rutten00} and
comonads~\cite{BW90}.

Unless mentioned otherwise, we work
over the category~$\Set$ of sets and maps which 
is complete and cocomplete. We fix a terminal 
object $1 = \{\star\}$, and we write~$!_X$ for the 
unique map~$X\to 1$. Furthermore,~$\Set$ admits
the structure of a \emph{Cartesian closed category}%
~\cite[Ch.~VII]{AHS90}; for each pair of sets~$Y, Z$,
the \emph{internal hom}~$[Y,Z]$ is just the set of all 
maps from $Y$ to $Z$. Cartesian closure means that 
there is natural bijection~$\Set(X\times Y, Z)\cong \Set(X, [Y, Z])$. 
From right to left, it maps $f\colon X\to Z^Y$ to its \emph{transpose} $\lTransp{f} \colon X\times Y\to Z$ given by $f(x, y) = f(x)(y)$.
%

\subsection{Coalgebra}
(Universal) coalgebra~\cite{Rutten00,Jacobs16,Adamek05} is a
framework for the uniform analysis of (state-based)
transition structures. We first give the general definition
before providing a specialised discussion of coalgebras
on~$\Set$.

\begin{definition}\label{defn:coalgebra}
Let $F\colon\BC\to\BC$ be a functor on a
category~$\BC$. An~\emph{($F$-)coalgebra} is a
pair~$(X, \gamma)$ consisting of an object~$X\in\BC$ and
a morphism~$\gamma\colon X\to FX$.
A \emph{homomorphism} from~$(X, \gamma)$ to $(Y, \xi)$ 
is a morphism $h\colon X\to Y$ in $\BC$ such that 
$Fh \comp \gamma = \xi \comp h$.
%
We write $\Coalg(F)$ for the category of coalgebras 
and their homomorphisms, and we denote by 
$\U{-} \from \Coalg(F) \to \BC$ the faithful forgetful 
functor that maps a coalgebra to its underlying carrier.
\end{definition}

We often omit the carriers of coalgebras from the notation and simply write~$\gamma$ in lieu of~$(X,\gamma)$.
Thus,~$|\gamma| = X$.
Slightly abusing notation, we write $h$ instead of $|h|$ for the underlying map of a coalgebra homomorphism $h\colon\gamma\to\delta$.

Throughout the paper, we will follow the established cellular
automata terminology: \emph{cells} are elements of the carrier of
coalgebras, while \emph{states} are labels assigned to those cells
by coalgebras. The intuition is that cells generate the spatial
layout of cellular automata, while the labelling represents the
current state of a cell that accumulates into the global computational
state of a cellular automaton. This deviates from the common
terminology in coalgebra, where the elements of the carrier
are typically referred to as states, thereby borrowing intuition
from automata theory. However, this terminology does not align
well with that of cellular automata because the state of a system
arises only from the labelling. For instance, in the game of life,
cells provide a 2-dimensional grid layout, which has no computational
content without a labelling of ``dead'' and ``alive'' cells~\cite{DR99}.

\begin{example}\label{expl:exponent}
Let $M$ be a set.
Using the Cartesian closure, we obtain a functor $\intHom{M}{-} \from \Set \to \Set$
that acts by post-composition, that is
\begin{equation*}
\intHom{M}{f}(g) = M\xra{g} X\xra{f} Y.
\end{equation*}
We denote the functor $\intHom{M}{-}$ by $\cowriterfunc$ and call it the \emph{co-writer} functor, see \cref{example:cowriter_comonad}.
An $\cowriterfunc$-coalgebra is a set $X$ equipped with a map $\gamma\colon X \to \intHom{M}{X}$.
Writing $x \xra{m} y$ for $x, y \in X$ and $m \in M$ with $\gamma(x)(m) = y$, we view $\gamma$ as a \emph{deterministic $M$-labelled transition system}.
A map $h\colon X\to Y$ is a homomorphism~$(X, \gamma)\to(Y,\xi)$ precisely if $h \comp \gamma(x) = \xi(h(x))$ for all $x\in X$.
In terms of the transition function, this means that $h(x) \xra{m} h(y)$ in $\xi$ if $x \xra{m} y$ in $\gamma$.
\end{example}

\subsection{Comonads}
A \emph{comonad} on $\BC$ is a functor~$F\colon\BC\to\BC$ equipped with
natural transformations~$\eps \from F \to \id_{\BC}$ (the
\emph{counit}) and~$\delta \from F \to FF$
(the \emph{comultiplication}) such that the following
diagrams commute.
\begin{equation}\label{dia:comonad_laws}
\begin{tikzcd}[column sep=29, row sep = 28]					
        				& F \ar[dl, "\delta" ']
            	     		       \ar[dr, "\delta"]	 \\
FF  \ar[r, "F\eps"]    	& F  \ar[u, "\id"] 			& FF\ar[l, swap, "\eps F"]
\end{tikzcd}
\qquad\qquad
\begin{tikzcd}[row sep = 28]					
F \ar[r, "\delta"] \ar[d, "\delta" ']   & FF \ar[d, "F\delta"] \\
FF \ar[r, "\delta F"] 	   & FFF
\end{tikzcd}
\end{equation}
We refer to these diagrams as the \emph{counit law}
(left) and \emph{coassociativity law} (right).

An \emph{Eilenberg-Moore coalgebra} for a comonad~$(F, \epsilon, \delta)$ is an~$F$-coalgebra $(X,\gamma)$ satisfying the coherence laws expressed by the following commutative diagrams.
\begin{equation}\label{dia:comonad_coalg_laws}
\begin{tikzcd}
    X
    \ar[r, "\gamma"]
    \ar[dr, "\id" ']
    & FX
        \ar[d, "\eps"]
    \\
    & X
\end{tikzcd}
\qquad\qquad
\begin{tikzcd}
    X
        \ar[r, "\gamma"]
        \ar[d, swap, "\gamma"]
    & FX
        \ar[d, "F\gamma"]
    \\
    FX
        \ar[r, "\delta"]
    & FFX
\end{tikzcd}
\end{equation}
We write~$\coEM(F)$ for the full subcategory of~$\Coalg(F)$
spanned by the Eilenberg-Moore coalgebras of the comonad
$(F, \eps, \delta)$; this notation should not lead to confusion 
since the comonad structure will be fixed to be the cowriter 
comonad for most of the article.

\paragraph*{The cowriter comonad}\label{example:cowriter_comonad}
Let $(M,\monop,\monid)$ be a monoid.
Then the functor $\cowriterfunc$ from \cref{expl:exponent} carries the structure of a comonad.
Its counit $\eps_{X} \from [M, X]\to X$ is defined by $\eps(f)=f(\monid)$, while the comultiplication $\delta\colon[M, X]\to[M, [M, X]]$ is given by $\delta(f)(m)(n) = f(n\monop m)$.
We follow Ahman and Uustalu~\cite{AU17} in referring
to this as the \emph{cowriter comonad} induced by the monoid~$(M, \monop,\monid)$.
The category $\coEM(\cowriterfunc)$ is complete and cocomplete, and limits and colimits are
created by the forgetful functor~$\U{-} \from \coEM(\cowriterfunc) \to \Set$.
Indeed, colimits are created by any comonadic functor, and limit
creation follows from the fact that~$\cowriterfunc$ is also a right adjoint.
These claims follow from the dual of~\cite[Exercise IV2.2]{MacLane98}.

The Eilenberg-Moore coalgebras of the cowriter comonad take on the familiar form of monoid actions.
A \emph{left action} of $M$ on a set $X$ is a map~$\alpha\colon M\times X\to X$ such that
\begin{equation}\label{eq:left_action}
\alpha(\monid, x) = x
\qquad\text{and}\qquad
\alpha(m, \alpha(n, x)) = \alpha (m\monop n, x)
\end{equation}
for all~$x\in X$ and $m,n\in M$.
A set equipped with a left action of~$M$ is an \emph{$M$-set}.

We denote by $\MSet$ the category of~$M$-sets and \emph{equivariant maps}.
This means that a morphism $(X, \alpha)\to (Y, \beta)$ in $\MSet$ is a map $h\colon X\to Y$, such that $h(\alpha(m, x)) = \beta(m, h(x))$ for all~$m\in M$ and~$x\in X$.
There is a faithful forgetful functor $U \from \MSet \to \Set$, which maps an action to the underlying set.

\begin{lemma}
\label{lemma:cowriter_is_left_action}
The categories $\EM(\cowriterfunc)$ and $\MSet$ are isomorphic as concrete categories, i.e., the isomorphism preserves the underlying carrier sets and maps.
\end{lemma}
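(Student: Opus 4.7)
The plan is to exhibit an explicit concrete isomorphism by transposing between $\gamma \from X \to [M, X]$ and a candidate left action $\alpha \from M\times X \to X$ via Cartesian closure, and then verifying that the Eilenberg-Moore coherence diagrams translate exactly into the monoid action axioms~\eqref{eq:left_action}. Concretely, to each $\cowriterfunc$-coalgebra $(X, \gamma)$ I would associate the action $\alpha_{\gamma}(m, x) = \gamma(x)(m)$, and conversely to each $M$-set $(X, \alpha)$ I would associate the coalgebra $\gamma_{\alpha}(x)(m) = \alpha(m, x)$. These assignments clearly invert each other, and both fix the underlying carrier set, so the remaining content of the lemma is that (i) the coalgebra laws match the action laws and (ii) morphisms match on the nose.

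For the law comparison, I would unfold the counit law $\eps_{X} \comp \gamma = \id_{X}$: since $\eps_{X}(f) = f(\monid)$, this reduces to $\gamma(x)(\monid) = x$, i.e.\ $\alpha_{\gamma}(\monid, x) = x$, which is the first axiom of~\eqref{eq:left_action}. For the coassociativity square, I would compute both composites pointwise: on an element $x \in X$ and arguments $m, n \in M$, the composite $\cowriterfunc\gamma \comp \gamma$ yields $\gamma(\gamma(x)(m))(n)$, while $\delta \comp \gamma$ yields $\gamma(x)(n\monop m)$. Equality of these amounts to $\alpha_{\gamma}(n, \alpha_{\gamma}(m, x)) = \alpha_{\gamma}(n \monop m, x)$, which is the second axiom of~\eqref{eq:left_action} (after renaming bound variables). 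The converse direction, starting from an action and verifying the coalgebra diagrams, is entirely symmetric and reuses the same pointwise calculations.

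For morphisms, a map $h \from X \to Y$ is an $F$-coalgebra homomorphism $(X,\gamma) \to (Y, \xi)$ iff $\xi(h(x))(m) = h(\gamma(x)(m))$ for all $x$ and $m$; rewriting with $\alpha_{\gamma}$ and $\alpha_{\xi}$ this is exactly $\alpha_{\xi}(m, h(x)) = h(\alpha_{\gamma}(m, x))$, i.e.\ equivariance. Hence the passage $\gamma \leftrightarrow \alpha$ extends to a bijection of hom-sets over the identity on underlying maps, and composition and identities are trivially preserved because carriers and maps are left untouched. This yields functors in both directions whose composites are literally the identity on objects and morphisms, which is exactly what "isomorphism of concrete categories" demands.

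There is no real obstacle here; the only mild pitfall is bookkeeping around argument order, since the transpose of $\gamma$ naturally gives a map $X \times M \to X$ while a left action has signature $M \times X \to X$, and the comultiplication $\delta(f)(m)(n) = f(n\monop m)$ swaps its arguments in a way that must line up with $\alpha(m, \alpha(n, x)) = \alpha(m \monop n, x)$ rather than its opposite. A single careful unfolding of one diagram suffices to fix the convention, after which everything else follows mechanically.
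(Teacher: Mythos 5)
Your proposal is correct and follows essentially the same route as the paper: transpose $\gamma$ (with the argument swap) to get $\alpha_{\gamma}(m,x)=\gamma(x)(m)$, unfold the counit and coassociativity squares pointwise to recover the two action axioms of~\eqref{eq:left_action}, and check that coalgebra homomorphisms are exactly equivariant maps. The pointwise computations you sketch match the paper's, so there is nothing to add.
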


\takeout{
\begin{align}
    \label{eq:cowriter_counit}
    & \eps_X \colon X^\mon \to X
    \qquad\qquad
    & \eps_X (f\colon \mon\to X) \bydef f(\monid),
    \\
    \shortintertext{and comultiplication}
    \label{eq:cowriter_comult}
    & \delta_X \colon X^\mon \to (X^\mon)^\mon
    \qquad\qquad
    & \delta(f)(m)(n) \bydef f(n\monop m).
\end{align}
}
%
%
\section{Cellular automata}\label{sec:ca}

In this section, we introduce our coalgebraic definition of cellular automata and discuss several examples.
Cellular automata consist of cells arranged in a spatial structure, while their behaviour is determined by the evolution of cell states over time~\cite{roka}.
For example, in Conway's game of life, the cells are arranged in the discrete plane $\Z^{2}$ and each cell can be in one of the states dead or alive.
The state of a cell evolves by means of local rules that determine, given the state of each cell in the neighbourhood of the cell, its next state.
For the game of life, the neighbourhood comprises the direct neighbours (horizontal, vertical and diagonal) of a cell.
In fact, we can see the cells of the game of life as being generated from a chosen cell by taking steps up/down and left/right.
Such moves are precisely captured by the monoid $\Z^{2}$, while the neighbours are reached via moves in the subset $\setDef{(x, y) \in \Z^{2}}{x, y \in \set{1, 0, -1}}$.

More generally, cellular automata are based on a monoid $(M, \monop, \monid)$ that describes possible moves between cells, a subset $N$ of $M$ with inclusion $i \from N \hra M$ that describes the \emph{neighbourhood} of local rules, and a set $S$ of \emph{states} that individual cells can be in.
A cellular automaton then comprises a set of cells, a description of the cell connectivity via a monoid action $\alpha \from M \times X \to X$ and for every cell $x \in X$ a local rule $\gamma_{x}$ that computes the next state of $x$ when given the state of each cell in the neighbourhood $N$ of $x$.

There is a minor complication when it comes to local rules that does not appear in the game of life.
Na\"{i}vely, we may hope that local configurations are maps $f\colon N\to S$, to which $\gamma_{x}$ assigns a state~$\gamma_x(f)\in S$.
However, this view is inadequate for treating cellular automata whose action is not free, that is, if there are $m, n \in M$ with $m \neq n$ and $\alpha(m, x) = \alpha(n, x)$.
For example, equipping the mod-2 group $\Z_{2}$ with its natural $\Z$-action is not free because~$0$ can be reached by adding any odd number to $1$: $(2k + 1) + 1 = 0 \mod 2$.
Thus, if we want to inspect the two neighbours via $N = \set{-1, 1}$, we need to ensure that the map $f \from N \to S$ additionally fulfils $f(-1) = f(1)$.
We call such maps \emph{local configurations}.

Following \cref{lemma:cowriter_is_left_action}, we know that a left action on a set $X$ of cells is equivalently presented by an Eilenberg-Moore coalgebra $c \from X \to \cowriterfunc X$.
Thus, for every $x \in X$, we get a map $c(x) \from M \to X$ and then $c$ is free if $c(x)$ is injective for every~$x$.
Hereafter, we will focus on maps $a \from M \to X$ at a distinguished cell~$a(\monid)\in X$.
In this notation, a local rule should be a map $\gamma$ that takes as argument a local configuration $f$ at $a(\monid)$ consistent with $a$, that is, $f$ should be a map~$N\to S$ such that $f(m) = f(n)$ whenever $a(i(m)) = a(i(n))$.
Since the image of $a$ is the orbit of $a(\monid)$, we call such an $f$ \emph{orbit-invariant}.
In what follows, we construct a set $I^{a}$ consisting of all local configurations as an equaliser.
This construction comes with a universal property that we use to easily obtain the functor involved in our coalgebraic model of cellular automata.

Define the \emph{orbits}~$E^{a}$ of~$a\colon M\to X$ along with projections~$p^{a}_{k} \from E^{a} \to N$ by
\begin{equation*}
  E^{a} = \setDef{(m ,n)\in N\times N}{a(i(m)) = a(i(n))}
  \, \text{ and } \,
  p^{a}_{k}(m_{1}, m_{2}) = m_{k}.
\end{equation*}
This is the kernel pair of $a \comp i$, where $i\colon N\hra M$ is the inclusion, arising from the pullback in the left of \cref{fig:orbit}.
\begin{figure}[ht]
  \centering
  \begin{tikzcd}[sep=large]
    E^{a} \dar[swap]{p_{2}^{a}} \rar{p_{1}^{a}}
    \ar[dr, phantom, "\pullback", very near start]
    & N \dar{a \comp i}
    \\
    N \rar[swap]{a \comp i}
    & X
  \end{tikzcd}
  \qquad \qquad
  \begin{tikzcd}[column sep=large]
    E^{a} \ar[drr, bend left, "p^{a}_{1}"] \ar[ddr, bend right, swap, "p^{a}_{2}"] \ar[dr, dashed, "r^{a}_{h}"]
    \\
    & E^{h \comp a} \rar{p^{h \comp a}_{1}} \dar{p^{h \comp a}_{2}} & N \dar{h \comp a \comp i}
    \\
    & N \rar[swap]{h \comp a \comp i} & Y
  \end{tikzcd}
  \caption{Orbit of $M \xra{a} X$ as pullback (left) and restriction along $X \xra{h} Y$ (right).}
  \label{fig:orbit}
\end{figure}

Let $h \from X \to Y$ be a map.
By the universal property of~$E^{a}$ and associativity of composition, we have that $(h \comp a \comp i) \comp p^{a}_{1} = (h \comp a \comp i) \comp p^{a}_{2}$, which yields a unique inclusion $r^{a}_{h}$
as in \cref{fig:orbit} on the right.
The set of orbit-invariant maps is then the following equaliser, where $c^{a}_{k}(f) = f \comp p^{a}_{k}$.
\begin{equation*}
  \begin{tikzcd}
    I^{a} \rar{j^{a}}
    & \intHom{N}{S}
    \ar[r, shift left, "c^{a}_{1}"{above}]
    \ar[r, shift right, "c^{a}_{2}"{below}]
    & \intHom{E^{a}}{S}
  \end{tikzcd}
\end{equation*}
Concretely,~$I^{a}$ consists of all maps~$f \from N \to S$ with $c^{a}_{1}(f) = c^{a}_{2}(f)$. 
This means that for all $n,m\in M$ with $a(m) = a(n)$, we have $(n, m) \in E^{a}$
and thus
\begin{equation*}
  f(n) = (f \comp p^{a}_{1})(n, m) = c^{a}_{1}(f)(n, m) = c^{a}_{2}(f)(n, m) = (f \comp p^{a}_{2})(n, m) = f(m) \, .
\end{equation*}

The following lemma shows that all maps are orbit-invariant, thus local configurations, if $a$ is injective on the neighbourhood.
This simplifies reasoning about so-called free cellular automata, see \cref{def:ca} below.
\begin{lemma}
\label{lem:orb-inv}
If $a\comp i$ is injective, then $E^{a}\cong N$ and $I^{a} \cong \intHom{N}{S}$.
\end{lemma}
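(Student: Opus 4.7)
The plan is to observe that injectivity of $a \comp i$ collapses the kernel pair $E^{a}$ to the diagonal of $N \times N$, after which the two precomposition maps $c^{a}_{1}$ and $c^{a}_{2}$ that define $I^{a}$ as an equaliser become literally equal, so the equaliser is the whole domain.

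For the first isomorphism, I would consider the diagonal map $\Delta \from N \to N \times N$, $m \mapsto (m, m)$. By the defining pullback of $E^{a}$, any element $(m, n) \in E^{a}$ satisfies $a(i(m)) = a(i(n))$; under the injectivity assumption on $a \comp i$ this forces $m = n$, so $\Delta$ factors through $E^{a}$ and its image equals $E^{a}$. The resulting map $\Delta \from N \to E^{a}$ is evidently injective, and by the above also surjective, hence a bijection. Moreover, by construction $p^{a}_{1} \comp \Delta = \id_{N} = p^{a}_{2} \comp \Delta$, so the same argument shows that $p^{a}_{1} = p^{a}_{2}$ when postcomposed with the inverse, i.e.\ the two projections of $E^{a}$ coincide along the isomorphism $E^{a} \cong N$.

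For the second isomorphism, I would apply the universal property of the equaliser. Since $p^{a}_{1} = p^{a}_{2}$ as maps $E^{a} \to N$ (by the previous paragraph), the two precomposition arrows $c^{a}_{1}, c^{a}_{2} \from \intHom{N}{S} \to \intHom{E^{a}}{S}$ are identical. The equaliser of a pair of equal parallel morphisms is the identity on the common domain, so $j^{a} \from I^{a} \to \intHom{N}{S}$ is an isomorphism.

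The only real subtlety is the first step of relating the set-theoretic description $E^{a} = \{(m, n) \in N \times N : a(i(m)) = a(i(n))\}$ to its universal characterisation as the kernel pair, which the paper has already made explicit via the pullback square in \cref{fig:orbit}; so this should be a short and routine argument, with no real obstacle beyond being careful about which projection corresponds to which coordinate.
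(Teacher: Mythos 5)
Your proposal is correct and follows essentially the same route as the paper's proof: identify $E^{a}$ with the diagonal of $N\times N$ using injectivity of $a\comp i$, observe that the two projections (hence the two precomposition maps $c^{a}_{1}, c^{a}_{2}$) coincide, and conclude that the equaliser is all of $\intHom{N}{S}$. The paper phrases the last step by directly verifying the universal property of the equaliser for $(\intHom{N}{S},\id)$, but this is the same argument as your appeal to ``the equaliser of two equal morphisms is the identity.''
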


\noindent
Next, note that the inclusion $r^{a}_{h}$ (\cref{fig:orbit}) yields an inclusion
$t^{a}_{h} \from I^{h \comp a} \to I^{a}$ using
\begin{equation*}
  (c^{a}_{1} \comp j^{h \comp a})(f)
  = f \comp p^{a}_{1}
  = f \comp p^{h \comp a}_{1} \comp r^{a}_{h}
  = f \comp p^{h \comp a}_{2} \comp r^{a}_{h}
  = f \comp p^{a}_{2}
  = (c^{a}_{2} \comp j^{h \comp a})(f)
\end{equation*}
and the universal property of the equaliser as in the following diagram:
\begin{equation*}
  \begin{tikzcd}[column sep=large]
    I^{a} \rar{j^{a}}
    & \intHom{N}{S}
    \ar[r, shift left, "c^{a}_{1}"{above}]
    \ar[r, shift right, "c^{a}_{2}"{below}]
    & \intHom{E^{a}}{S}
    \\
    I^{h \comp a} \rar{j^{h \comp a}} \ar[u, dashed, "t^{a}_{h}"]
    & \intHom{N}{S}
    \ar[r, shift left, "c^{h \comp a}_{1}"{above}]
    \ar[r, shift right, "c^{h \comp a}_{2}"{below}]
    \uar{\id}
    & \intHom{E^{h \comp a}}{S}
    \uar[swap]{\intHom{r^{a}_{h}}{S}}
  \end{tikzcd}
\end{equation*}
This allows us to define a functor~$C \from \Set \to \Set$ by
\begin{equation}\label{eq:functor}
  CX = \coprod_{a \from M \to X} \intHom{I^{a}}{S}
  \quad \text{and} \quad
  (Ch)(a, f) = (h \comp a, f \comp t^{a}_{h})
\end{equation}
Indeed, preservation of identities and composition follows from
the unique mapping properties of $t$ and $r$.
We write~$(a, f)$ to indicate that $f\in CX$ sits in the~$a^{th}$ coproduct summand.
Moreover, there is a natural transformation $\pi \from C \to \cowriterfunc$, given by $\pi_{X}(a, f) = a$, which induces a functor $\overline{\pi} \from \Coalg(C) \to \Coalg(\cowriterfunc)$ by post-composition~\cite[Proposition A.3]{HermidaJacobs97:IndCoindFib}.

\begin{definition}
\label{def:ca}
The category~$\CApre$ of cellular automata and \emph{pre-cellular morphisms} is the full subcategory of~$\Coalg(C)$ with a functor $\tilde{\pi} \from \CApre \to \EM(\cowriterfunc)$ that makes the following diagram a pullback of categories.
  \begin{equation*}
    \begin{tikzcd}
      \CApre \ar[r, hook] \ar[d, "\tilde{\pi}"{swap}] \ar[dr, phantom, "\pullback", very near start]
      & \Coalg(C) \ar[d, "\overline{\pi}"] \\
      \EM(\cowriterfunc) \ar[r, hook]
      & \Coalg(\cowriterfunc)
    \end{tikzcd}
  \end{equation*}
We call $\gamma \from X \to CX$ \emph{free} if the underlying action $\tilde{\pi}(\gamma)$ is free, that is, if $\tilde{\pi}(\gamma)(x)$ is injective for all $x \in X$.
  In this case, $I^{\gamma(x)} \cong \intHom{N}{S}$ for all $x \in X$ and we say that $\gamma$ is \emph{uniform} if there is a map $u \from N \to S$ such that for every~$x\in X$, $\gamma(x) = (a, f)$ with $j^{a}(f) = u$.
\end{definition}

One interesting aspect of this definition is that it contains a general definition of homomorphism, the pre-cellular morphisms, for cellular automata which, to the best of our knowledge, has been absent from the literature so far.
However, pre-cellular morphisms are insufficient for relating the global behaviour of cellular automata.
In \cref{sec:sim}, we will refine them to \emph{cellular morphisms} and analyse the ensuing notion of behavioural equivalence.

We may sometimes need access to the spatial and the local rule components of cellular automata.
Given $\gamma \from X \to CX$ in $\CApre$, we define
\begin{align*}
  \gamma_{1} & \from X \to \cowriterfunc X
  & & \gamma_{2} : \prod_{x \in X} \intHom*{I^{\gamma_{1}(x)}}{S}
  \\
  \gamma_{1} & = \tilde{\pi}(\gamma)
  & & \gamma_{2}(x) = f, \text{ where } \gamma(x) = (a, f) \, .
\end{align*}
These definitions show the benefit of the coalgebraic approach, which hides the complex dependency of the type of $\gamma_{2}$ on $\gamma_{1}$.
In this notation, \cref{def:ca} means that $\CApre$ consists of all coalgebras $\gamma$ for which $\gamma_{1}$ is an Eilenberg-Moore coalgebra.

In order to understand the behaviour of cellular automata, we will introduce a global transition structure on the state of cells.
To this end, we need the \emph{evaluation map}~$e\colon CS\to S$ defined for all pairs~$(a,f)$ by
\begin{equation}
  \label{eq:evaluation-map}
  e(a, f) = f(a \comp i) \, ,
\end{equation}
where $i$ is the inclusion $N \hra M$.
This is well-defined  because if $(m, n) \in E^{a}$, then by definition $(a \comp i)(m) = (a \comp i)(n)$ and so $a \comp i$ sits in $I^{a}$.
\begin{definition}
  \label{def:configuration-global-rule}
Let $\gamma \from X \to CX$ be a cellular automaton.
A \emph{configuration} for $\gamma$ is a map $c \from X \to S$.
We denote by $\dual{X}$ the set $\intHom{X}{S}$ of all configurations.
  The \emph{global rule} for $\gamma$ is the map $G_{\gamma} \from \dual{X} \to \dual{X}$ given by the following composition.
  \begin{equation*}
    G_{\gamma}(c) = X \xrightarrow{\gamma} CX \xrightarrow{Cc} CS \xrightarrow{e} S
  \end{equation*}
\end{definition}
If we unfold the definition and omit the inclusion $t_{c}^{\gamma_{1}(x)}$, then we find that
\begin{equation*}
  G_{\gamma}(c)(x) = \gamma_{2}(x)(c \comp \gamma_{1}(x) \comp i) \, .
\end{equation*}
Intuitively,~$c$ evolves under $G_{\gamma}$ by looking up the current states of cells in the neighbourhood of $x$ and then applying the local rule $\gamma_{2}(x)$ in order to obtain the new state of $x$.

As the internal hom~$[-, -]$ is a bifunctor~$\op{\Set}\times\Set\to\Set$, 
taking configurations yield a functor $\dual{(-)} \from \op{\Set} \to \Set$ that is given by on objects by $\dual{X} = \intHom{X}{S}$ and on maps $f\colon X\to Y$ by pre-composition:
\begin{equation}
\label{eqn:dual}
\dual{f}\colon [Y, S]\to [X, S],\qquad	\dual{f}(c) = (X\xra{f} Y\xra{c} S)
\end{equation}
Composing this with the forgetful functor $\U{-} \from \CApre \to \Set$, we obtain the configurations functor $\dual{\U{-}} \from \op{(\CApre)} \to \Set$.
Following our convention, we will typically write $\dual{h}$ instead of $\dual{\U{h}}$.
With this notation set up, we obtain that pre-cellular morphisms preserve the behaviour of the global rule.
\begin{lemma}
  \label{lem:global-rule-natural}
  The global rule is a natural transformation $G_{-} \from \dual{\U{-}} \to \dual{\U{-}}$, that is, for all cellular morphisms $h \from \gamma \to \delta$ the following diagram commutes.
  \begin{equation*}
    \begin{tikzcd}
      \dual{\U{\gamma}} \rar{G_{\gamma}}
      & \dual{\U{\gamma}}
      \\
      \dual{\U{\delta}} \uar{\dual{h}} \rar{G_{\delta}}
      & \dual{\U{\delta}} \uar[swap]{\dual{h}}
    \end{tikzcd}
  \end{equation*}
\end{lemma}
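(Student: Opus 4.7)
The plan is to verify the square by a direct diagram chase, relying only on two ingredients: functoriality of $C$, and the $C$-coalgebra homomorphism identity $Ch \comp \gamma = \delta \comp h$, which every pre-cellular morphism satisfies by the pullback definition in \cref{def:ca} (via the horizontal embedding $\CApre \hookrightarrow \Coalg(C)$).

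Fix a pre-cellular morphism $h \from \gamma \to \delta$ and an arbitrary configuration $d \in \dual{\U{\delta}} = \intHom{\U{\delta}}{S}$. I would first unfold the composite $G_{\gamma} \comp \dual{h}$ at $d$: using $\dual{h}(d) = d \comp h$ from \eqref{eqn:dual} and the definition of the global rule (\cref{def:configuration-global-rule}), this gives $e \comp C(d \comp h) \comp \gamma$, which by functoriality of $C$ equals $e \comp Cd \comp Ch \comp \gamma$. I would then unfold the other composite $\dual{h} \comp G_{\delta}$ at $d$: this is $(e \comp Cd \comp \delta) \comp h$, and substituting the homomorphism property $\delta \comp h = Ch \comp \gamma$ produces exactly the same expression $e \comp Cd \comp Ch \comp \gamma$.

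The proof thus reduces to a single one-line chase and I do not anticipate any real obstacle; the only calculational prerequisite, functoriality of the coproduct-of-equalisers functor $C$ defined in \eqref{eq:functor}, is already granted in the text. What is perhaps worth flagging in the writeup is what is \emph{not} used: the Eilenberg-Moore conditions on $\gamma_{1}$ and $\delta_{1}$, the specific form of the evaluation map $e$ in \eqref{eq:evaluation-map}, and the equaliser description of $I^{a}$ all play no role. In fact, naturality of $G$ holds already at the level of $\Coalg(C)$ for the composite $d \mapsto e \comp Cd$ regarded as a natural family, and the statement for $\CApre$ is obtained by restriction along $\CApre \hookrightarrow \Coalg(C)$.
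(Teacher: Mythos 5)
Your proposal is correct and coincides with the paper's own proof: both reduce the square to the one-line chase $\dual{h}(G_{\delta}(d)) = e \comp Cd \comp \delta \comp h = e \comp Cd \comp Ch \comp \gamma = e \comp C(d \comp h) \comp \gamma = G_{\gamma}(\dual{h}(d))$, using only functoriality of $C$ and the coalgebra homomorphism identity. Your closing observation that the Eilenberg--Moore conditions and the concrete form of $e$ are not needed, so that naturality already holds over all of $\Coalg(C)$, is accurate but does not change the argument.
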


The following examples illustrate how our definition captures both classical as well as non-uniform cellular automata by instantiation.

\begin{example}
\label{example:ca}
We discuss cases where the spatial layout is determined by actions of a freely generated monoid, by a free monoid action, and by a trivial action.
\begin{enumerate}
\item
\label{example:basic_general_ca}
Given an alphabet, we can view the transition structure of a deterministic automaton as a monoid action.
For example, let $M$ be the free monoid of finite words over $\{\leftelem, \rightelem\}$ with concatenation as monoid operation and the empty word $\monid$ as unit.
The action $\gamma_{1}$ of $M$ on~$X = \{w, x, y, z\}$ is generated from the transition graph in \cref{fig:intro_example_general_ca} just as words are read by a deterministic automaton.
For example, the map~$\gamma_1(w)\colon M\to X$ at cell~$w$ fulfills
\begin{equation*}
\gamma_1(w)(\leftelem) = x = \gamma_1(w)(\rightelem)\text{ and }\gamma_1(w)(\leftelem\cdot\rightelem) = y.
\end{equation*}
\begin{figure}[ht]
  \centering
  \begin{tikzpicture}[every state/.style={inner sep=2pt, minimum size=0pt}, shorten >=1pt]
    \node[state] (w) {$w$};
    \node[state, right=of w] (x) {$x$};
    \node[state, right=of x] (y) {$y$};
    \node[state, right=of y] (z) {$z$};
    \path[->]
    (w) edge node[above] {$\ell, r$} (x)
    (x) edge [loop above] node {$\ell$} ()
        edge [bend left] node[above] {$r$} (y)
    (y) edge [bend left] node[below] {$\ell$} (x)
        edge node[above] {$r$} (z)
    (z) edge [loop right] node {$\ell, r$} ()
    ;
  \end{tikzpicture}
  \vspace*{-2ex}
   \caption{A cellular automaton, where transitions are freely generated over $\set{\leftelem, \rightelem}$.}
\label{fig:intro_example_general_ca}
\end{figure}
For the neighborhood $N = \{\monid, \leftelem,\rightelem\}$ and state set $S = \swhiteblack$, the set~$I^{\gamma_1(w)}$ of local configurations~$f\from N \to S$ at cell~$w$ can be inferred from \cref{fig:intro_example_general_ca}. 
First,
\begin{align*}
E^{\gamma_1(w)} &= \setDef{(n,m)\in N\times N}{\gamma_1(w)(i(n))  = \gamma_1(w)(i(m))}  \\
			    &= \set{(\ell, r), (r,\ell)}\cup\set{(\ell, \ell), (r, r), (\monid, \monid)}.
\end{align*}
It follows that~$I^{\gamma_1(w)} = \setDef{f\in [N, S]}{f(\ell) = f(r)}$.   
Note that~$\gamma_1(y)$ is injective on~$N$ so~$I^{\gamma_1(y)}\cong \intHom{N}{S}$ by~\cref{lem:orb-inv}.
Computing the sets of orbits and orbit-invariant maps at the remaining cells is similarly straightforward:
 \begin{equation*}
 I^{\gamma_{1}(x)} = \setDef{f}{f(\ell) = f(\monid)}\text{ and }
 I^{\gamma_{1}(z)} = \setDef{f}{f(\ell) = f(\monid) = f(r)}.
 \end{equation*}
 Now, the local rule~$\gamma_2(v)$ of cell $v\in X$ is a map~$I^{\gamma_1(v)}\to S$.
  If we define them as below, we can iteratively compute a typical trace sequence as in \cref{table:intro_example_trace_seq}.
  \begin{gather*}
    \gamma_2(w)(k) \bydef
    \begin{cases}
      \sblack & \text{if $k(r) = \sblack,k(\monid)=\swhite$}   \\
      \swhite & \text{otherwise}
    \end{cases}
    \quad
    \gamma_2(z)(k) \bydef
    \begin{cases}
      \sblack & \text{if $k(\monid) = \swhite$} \\
      \swhite & \text{otherwise}
    \end{cases}
    \\[7pt]
    \gamma_2(x)(k) = \gamma_2(y)(k) \bydef
    \begin{cases}
      \sblack & \text{ if $k(\monid) = \sblack \text{ or } k(\rightelem)=\sblack$}
      \\
      \swhite & \text{otherwise}
    \end{cases}
  \end{gather*}
 \item
By letting the additive monoid~$(\Z^d, +, 0)$, where~$d\in\N$, act freely on itself, and by a choosing a neighbourhood format~$N\subseteq\Z^d$ and state set~$S = \2$, we recover a generalisation of \emph{binary $d$-dimensional cellular automaton}~\cite{Bhattacharjee20} in which the local rules of individual cells may differ across~$\Z^d$.
\item
Let~$\gamma$ be a cellular automaton with~$\gamma_1(x)(m) = x$ for every cell~$x$ and every~$m\in M$,~i.e.~the underlying monoid action of~$\gamma$ is the trivial one.
Then a map~$f\colon N\to S$ is orbit invariant precisely when it is constant.
\end{enumerate}
\end{example}

 \begin{table}[ht]
    \centering
    \begin{NiceTabular}{w{c}{1.5cm}|cccc}
      \RowStyle[bold]{}
      \rule[-2mm]{0pt}{0.5cm} 
      \diagbox{step}{cell} & $w$ & $x$ & $y$ & $z$ \\ \hline
      0 & $\sdead$ & $\sdead$ & $\sdead$ & $\sdead$ \\
      1 & $\sdead$ & $\sdead$ & $\sdead$ & $\salive$ \\
      2 & $\sdead$ & $\sdead$ & $\salive$ & $\sdead$ \\
      3 & $\sdead$ & $\salive$ & $\salive$ & $\salive$ \\
      4 & $\salive$ & $\salive$ & $\salive$ & $\sdead$ \\
      5 & $\sdead$ & $\salive$ & $\salive$ & $\salive$ \\
    \end{NiceTabular}
\caption{Initial segment of a configuration sequence of the CA in~\cref{example:ca}.\ref{example:basic_general_ca}.}
\label{table:intro_example_trace_seq}
\end{table}

\section{Functional behavioural equivalence}
\label{sec:sim}

In \cref{def:ca}, we established a category of cellular automata and pre-cellular morphisms.
The reason why we call these \emph{pre-}cellular morphisms stems from the fact that we cannot easily push configurations forward along a pre-morphism $f \from X \to Y$ between cellular automata $\gamma \from X \to CX$ and $\delta \from Y \to CY$:
if $f$ maps two cells $x$ and $x'$ in $X$ to the same cell $y \in Y$, then we are unable to decide whether $y$ should inherit the state from $x$ or $x'$.
Similarly, if $f$ is not surjective, we have cells in $Y$ with no pre-image under~$f$ and thus no states to chose from in a configuration.
This situation is analogous to that of vector fields on smooth manifolds which can, in general, not be pushed forward along a smooth map.
Of course, if $f$ is an isomorphism of cellular automata, this problem is easily remedied.
However, this solution rules out reasonable situations where $f$ is a quotient map or an embedding, while configurations can still be given to all cells in~$\delta$.
This leads us to the following definition.

\begin{definition}
\label{def:ca-morphism}
A \emph{cellular morphism} $h \from \gamma \to \delta$ between cellular automata is a pair $(f, s)$, where $f$ is a pre-cellular morphism and $s \from \dual{\U{\gamma}} \to \dual{\U{\delta}}$ is a section of the configuration mapping $\dual{f}:\dual{\U{\delta}}\to\dual{\U{\gamma}}$ from~\cref{eqn:dual}. That is,~$\dual{f} \comp s = \id$.
\end{definition}

In the following, a functor~$F\colon\BC\to\BD$ is \emph{conservative} if it reflects isomorphisms:
a morphism~$g$ in~$\BC$ is an isomorphism whenever~$Fg$ is.

\begin{lemma}
\label{lem:ca-category}
Cellular automata and cellular morphisms form a category $\CA$, which comes with a conservative, identity-on-objects functor $P \from \CA \to \CApre$.
\end{lemma}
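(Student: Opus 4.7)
The plan is to verify the category axioms for $\CA$ by lifting composition and identities from $\CApre$ and $\Set$ componentwise, then construct $P$ as the forgetful functor on the section data, and finally establish conservativity by observing that sections of isomorphisms are forced to be inverses.

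First I would define the composition of $(f, s) \from \gamma \to \delta$ and $(g, t) \from \delta \to \varepsilon$ as the pair $(g \comp f, t \comp s)$. The pre-cellular component composes in $\CApre$, while the section component composes in the opposite order: since $\dual{(-)}$ from~\cref{eqn:dual} is contravariant, $\dual{g \comp f} = \dual{f} \comp \dual{g}$, and a short calculation
\begin{equation*}
    (\dual{f} \comp \dual{g}) \comp (t \comp s) = \dual{f} \comp (\dual{g} \comp t) \comp s = \dual{f} \comp \id \comp s = \id
\end{equation*}
shows $t \comp s$ is a section of $\dual{g \comp f}$. The identity on $\gamma$ is the pair $(\id_{\U{\gamma}}, \id_{\dual{\U{\gamma}}})$; unit and associativity laws reduce to the corresponding axioms in $\CApre$ on first components and in $\Set$ on second components.

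Next I would define $P$ on morphisms by $P(f, s) = f$ and note that $P$ preserves identities and composition by construction; it is identity on objects since $\CA$ and $\CApre$ share the same collection of cellular automata. Functoriality is immediate from the fact that the first-coordinate composition in $\CA$ was defined to agree with composition in $\CApre$.

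For conservativity, suppose $(f, s) \from \gamma \to \delta$ is a morphism whose image $f$ under $P$ is an isomorphism in $\CApre$ with inverse $g \from \delta \to \gamma$. Then $\dual{f}$ and $\dual{g}$ are mutually inverse in $\Set$ by contravariance, so $\dual{f} \comp \dual{g} = \id$ and $\dual{g} \comp \dual{f} = \id$. Since $s$ is a section of the bijection $\dual{f}$, it must coincide with the two-sided inverse, giving $s = \dual{g}$. Now set $t = \dual{f}$, which is trivially a section of $\dual{g}$. The pair $(g, t)$ is a cellular morphism $\delta \to \gamma$, and the composition formula yields
\begin{equation*}
    (g, t) \comp (f, s) = (g \comp f, \, \dual{g} \comp \dual{f}) = (\id, \id), \qquad (f, s) \comp (g, t) = (\id, \id),
\end{equation*}
so $(f, s)$ is an isomorphism in $\CA$.

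The only mildly non-obvious step is reconciling the variance: one has to notice that the section data composes \emph{covariantly} in $\Set$ even though $\dual{(-)}$ is contravariant, which is why $(g, t) \comp (f, s)$ has second component $t \comp s$ rather than $s \comp t$. Once this is in place, the rest of the argument is formal, with conservativity hinging on the elementary fact that a one-sided inverse of a bijection is automatically two-sided.
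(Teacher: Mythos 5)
Your proof is correct and follows essentially the same route as the paper: componentwise composition $(g,t)\comp(f,s) = (g\comp f, t\comp s)$ with the section condition verified via contravariance of $\dual{(-)}$, the evident identity-on-objects projection $P$, and conservativity via lifting the inverse $g$ to $(g,\dual{f})$. Your explicit observation that $s$ is forced to equal $\dual{g}$ because a section of a bijection is its two-sided inverse is in fact slightly more careful than the paper's own argument, which only exhibits $(f,\dual{g})$ as an isomorphism without noting that the given $(f,s)$ must coincide with it.
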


The main motivation for cellular morphisms is that they allow us to push forward configurations.
This allows us to simulate the global behaviour of a cellular automaton in the codomain of a cellular morphism, as the following lemma shows.
\begin{lemma}[Naturality of global rules on cellular morphisms]
  \label{lem:global-rule-natural-full}
  For all cellular morphisms $(f, s) \from \gamma \to \delta$, we have $\dual{\U{f}} \comp G_{\delta} \comp s = G_{\gamma}$.
\end{lemma}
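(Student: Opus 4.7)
The plan is to reduce the statement to \cref{lem:global-rule-natural} applied to the underlying pre-cellular morphism $f$, and then cancel using the section identity $\dual{f} \comp s = \id$.

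First, I would unpack the data: a cellular morphism $(f,s) \from \gamma \to \delta$ consists of a pre-cellular morphism $f \from X \to Y$ together with a map $s \from \dual{X} \to \dual{Y}$ satisfying $\dual{f} \comp s = \id_{\dual{X}}$. The type of $\dual{f} \comp G_\delta \comp s$ is indeed $\dual{X} \to \dual{X}$, matching that of $G_\gamma$, so the equation is well-posed.

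Next, I would invoke \cref{lem:global-rule-natural}, which asserts the naturality square for the pre-cellular morphism $f$, giving the identity
\begin{equation*}
  \dual{f} \comp G_\delta = G_\gamma \comp \dual{f} \, .
\end{equation*}
Composing both sides on the right with $s$ yields
\begin{equation*}
  \dual{f} \comp G_\delta \comp s = G_\gamma \comp \dual{f} \comp s = G_\gamma \comp \id_{\dual{X}} = G_\gamma \, ,
\end{equation*}
where the penultimate equality uses the section property from \cref{def:ca-morphism}. This completes the argument.

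There is no real obstacle here: once \cref{lem:global-rule-natural} is in place, the statement is a one-line consequence of the sectional identity. The only thing worth double-checking while writing up is the direction of the arrows in the naturality square (since $\dual{(-)}$ is contravariant, $\dual{f}$ runs from $\dual{Y}$ to $\dual{X}$, which is exactly what makes the composition $\dual{f} \comp G_\delta \comp s$ typecheck and lets the section identity apply on the right). No further computation with the evaluation map $e$ or with orbit-invariant maps is needed, because all of that work has already been absorbed into \cref{lem:global-rule-natural}.
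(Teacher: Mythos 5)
Your proof is correct and is exactly the paper's argument: invoke the naturality identity $\dual{f} \comp G_\delta = G_\gamma \comp \dual{f}$ from \cref{lem:global-rule-natural} and cancel with the section identity $\dual{f} \comp s = \id$. Nothing further is needed.
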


(Pre-)Cellular morphisms also reflect the behaviour of a cellular automaton back into the domain in the following sense.
Let $f \from \gamma \to \delta$ be a pre-cellular morphism.
We denote by $\im f \subseteq \U{\delta}$ the image of $f$ with inclusion $i_{f}$ and corestriction $e_{f} \from X \to \im f$ of $f$, such that $f$ factorises as $f = i_{f} \comp e_{f}$.
For $y \in \im f$, which means that $y = f(x)$ for some $x \in X$, we have for all $m \in M$ that
\begin{equation*}
  \delta_{1}(y)(m) = \delta_{1}(f(x))(m) = f(\gamma_{1}(x)(m)) \in \im f
\end{equation*}
and we can thus restrict $\delta$ to $\delta' \from \im f \to C(\im f)$.
The map $e_{f}$ is then also immediately a pre-cellular morphism $\gamma \to \delta'$.
Since $e_{f}$ is surjective, we can choose some $g \from \im f \to X$, which induces a pullback $\dual{g} \from \dual{X} \to \dual{(\im f)}$.
With this notation set up, we can prove that the behaviour of the cells in the image of $f$ is reflected in $\gamma$.
\begin{lemma}
  \label{lem:behaviour-reflection}
  Let $f \from \gamma \to \delta$ be a pre-cellular morphism, $X \xrightarrow{e_{f}} \im f \xrightarrow{i_{f}} Y$ its image factorisation as a map of sets and $\delta' \from \im f \to C(\im f)$ the induced cellular automaton.
  For all sections $g$ of $e_{f}$, we have that $\dual{g} \comp G_{\gamma} \comp \dual{e_{f}} = G_{\delta'}$.
\end{lemma}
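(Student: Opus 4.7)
The plan is to reduce the equation to naturality of the global rule combined with the section property of $g$, bypassing any direct manipulation of the functor $C$ or of the orbit construction. The key steps, in order, are: (i) invoke the fact, already recorded in the paragraph preceding the lemma, that the corestriction $e_f \from \gamma \to \delta'$ is a pre-cellular morphism; (ii) apply \cref{lem:global-rule-natural} to $e_f$; (iii) use that $g$ is a section of $e_f$ together with contravariant functoriality of $\dual{(-)}$; and (iv) chain the two resulting equalities.

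Carrying this out, step (ii) yields
\[
    G_\gamma \comp \dual{e_f} \;=\; \dual{e_f} \comp G_{\delta'},
\]
while step (iii) yields $\dual{g}\comp\dual{e_f} = \dual{e_f \comp g} = \dual{\id_{\im f}} = \id_{\dual{(\im f)}}$. Composing the first equality with $\dual{g}$ on the left and then substituting using the second gives
\[
    \dual{g}\comp G_\gamma \comp \dual{e_f}
    \;=\; \dual{g}\comp\dual{e_f}\comp G_{\delta'}
    \;=\; G_{\delta'},
\]
as required.

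The main obstacle I foresee is that \cref{lem:global-rule-natural} is literally phrased for ``cellular morphisms'', whereas $e_f$ is only known to be pre-cellular and need not lift canonically to a cellular morphism (the map $\dual{e_f}$ has no natural section when $e_f$ collapses distinct cells). I expect the proof of that lemma to use only the pre-cellular identity $C(h)\comp\gamma = \delta\comp h$, so that it applies verbatim at the pre-cellular level; if this reading is not intended, the fallback is to unfold both sides at a pair $(c,y)\in\dual{(\im f)}\times\im f$, use $e_f\comp g = \id_{\im f}$ together with the pre-cellular identity for $e_f$ to rewrite $\delta'_1(y) = e_f \comp \gamma_1(g(y))$ and $\delta'_2(y) = \gamma_2(g(y))\comp t^{\gamma_1(g(y))}_{e_f}$, and then observe that $t^{a}_{h}$ is the inclusion of subsets of $\intHom{N}{S}$ and so disappears from the formula, leaving $\gamma_2(g(y))(c \comp e_f \comp \gamma_1(g(y))\comp i)$ on both sides.
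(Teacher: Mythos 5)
Your proof is correct and is essentially the paper's own argument: the paper merely inlines the naturality computation for the pre-cellular morphism $e_f$ (unfolding $G_{\gamma}(c \comp e_f)$ and using $Ce_{f} \comp \gamma = \delta' \comp e_{f}$ to land on $\dual{e_f}(G_{\delta'}(c))$) instead of citing \cref{lem:global-rule-natural}, and then cancels $\dual{g}\comp\dual{e_f} = \dual{(e_f\comp g)} = \id$ exactly as you do. Your worry about the phrasing of \cref{lem:global-rule-natural} is resolved the way you anticipated: the sentence introducing it asserts the property for pre-cellular morphisms and its proof uses only the homomorphism identity $Ch\comp\gamma = \delta\comp h$, so invoking it for $e_f$ is legitimate and no fallback is needed.
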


\cref{lem:global-rule-natural-full,lem:behaviour-reflection} show that cellular morphisms provide the right notion of behavioural equivalence for cellular automata, as they preserve and reflect the global behaviour.


\begin{example}
  For this example, we let $S$ be the integers modulo $2$, that is, $S = \Z_2$.
  The monoid will be $(\Z, 0, +)$ and the neighbourhood $N = \set{-1, 0, 1}$.
  We define one CA $\gamma \from \Z \to C\Z$, where the action $\gamma_{1}$ acts freely by addition and the local rule is given uniformly by addition modulo $2$: $\gamma_{2}(x)(v) = v(-1) + v(0) + v(1)$.
  The second CA we define is $\delta \from \Z^{2} \to C \Z^{2}$ on a grid.
  Its first component acts by addition on the first axis, that is, by $\delta_{1}(x, y)(k) = (x + k, y)$.
  The local rule is also given uniformly by $\delta_{2}(x, y)(v) = v(-1) + v(0) + v(1)$, which also uses addition modulo $2$.
  We can define a pre-cellular morphism $f \from \gamma \to \delta$ by simply embedding it on the line through the origin by setting $f(x) = (x, 0)$.
  It is straightforward to check that this is a pre-cellular morphism.

  In order to make this a morphism, we need to push forward configurations from $\gamma$ to $\delta$.
  We do this via the map $s \from \dual{\Z} \to \dual{\parens*{\Z^{2}}}$, which is defined as follows.
  \begin{equation*}
    s(c)(x, y) =
    \begin{cases}
      c(x), & y = 0 \\
      0, & y \neq 0
    \end{cases}
  \end{equation*}
This makes~$s$ into a section of $\dual{f}$ so that~$(f, s)$ is a cellular morphism $\gamma \to \delta$.
In particular,~\Cref{lem:global-rule-natural-full} entails that the global behaviour of $G_{\gamma}$ 
is preserved in $G_{\delta}$, see \cref{fig:cell_sim_tape_to_grid}.
\begin{figure}[ht]
\centering
\includegraphics[scale=0.75]{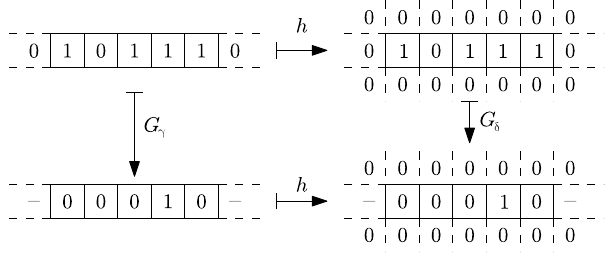}
\caption{
A cellular morphism~$h$ between a cellular automaton on~$\Z$ CA with states~$\Z/2\Z$
and a cellular automaton on~$\Z^2$:
it embeds the 1-dimensional grid as a row in the 2-dimensional grid.}
\label{fig:cell_sim_tape_to_grid}
\end{figure}
 Finally, $f$ is injective and thus $\im f \cong \Z$, which means that we not only reflect the global behaviour via \cref{lem:behaviour-reflection}, but also that $\delta' \from \im f \to C(\im f)$ is isomorphic to $\gamma$ in $\CA$.
\end{example}

In order to interpret a logic over cellular automata, it will be convenient to work with cellular automata that have a chosen base point.

\begin{definition}
\label{def:based-reachable-ca}
A \emph{based cellular automaton} is a pair $(\gamma, x_{0})$ with $x_{0} \in \U{\gamma}$ and a \emph{based cellular morphism} $(\gamma, x_{0}) \to (\delta, y_{0})$ is a cellular morphism $(f, s) \from \gamma \to \delta$ such that $f(x_{0}) = y_{0}$.
We call $(\gamma, x_{0})$ \emph{reachable} if the map $\gamma_{1}(x_{0})\colon M\to\U{\gamma}$ is surjective.
The category of based cellular automata and morphisms is denoted by $\pCA$ and that of reachable ones by $\prCA$.
\end{definition}

Our concept of reachability agrees with existing notions in the coalgebraic literature; 
see, e.g.,~Ad\'{a}mek et al.~\cite[Remark 3.14]{AdamekEA12} and Wi{\ss}mann~\cite[Inst.~4.4]{Wissmann21}.
Homomorphisms between based reachable cellular automata are automatically surjective.
Thus, \cref{lem:behaviour-reflection} ensures that the behaviour of all cells is reflected.

An interesting special case arises from pairs of based pre-morphisms between reachable cellular automata which, by \cref{lem:ca-category}, induce a cellular isomorphism.
\begin{proposition}
  \label{prop:pair-based-pre-morphisms-gives-iso}
  Let $f \from (\gamma, x_{0}) \to (\delta, y_{0})$ be a based pre-cellular morphism between reachable cellular automata.
  If there is a based pre-cellular morphism $g \from (\delta, y_{0}) \to (\gamma, x_{0})$, then $f$ is a cellular isomorphism with inverse $g$.
\end{proposition}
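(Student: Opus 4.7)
The plan is to exploit that every pre-cellular morphism is in particular equivariant with respect to the underlying monoid actions, and to combine this with reachability to force $f$ and $g$ to be mutually inverse set maps. Once this is established, contravariant application of $[-, S]$ to the resulting bijection automatically supplies the section needed to promote $f$ and $g$ into cellular morphisms in $\CA$.

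For the set-theoretic inverse identities, I would unpack that $\tilde{\pi}(f)$ is equivariant, meaning $f(\gamma_{1}(x)(m)) = \delta_{1}(f(x))(m)$ for all $x \in \U{\gamma}$ and $m \in M$, and analogously for $g$. Given any $x \in \U{\gamma}$, reachability of $(\gamma, x_{0})$ supplies an $m \in M$ with $x = \gamma_{1}(x_{0})(m)$, so that
\begin{equation*}
  g(f(x)) = g\bigl(\delta_{1}(f(x_{0}))(m)\bigr) = g\bigl(\delta_{1}(y_{0})(m)\bigr) = \gamma_{1}(g(y_{0}))(m) = \gamma_{1}(x_{0})(m) = x,
\end{equation*}
using $f(x_{0}) = y_{0}$ and $g(y_{0}) = x_{0}$ together with equivariance of both $f$ and $g$. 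Hence $g \comp f = \id_{\U{\gamma}}$; the symmetric argument applied to $(\delta, y_{0})$ yields $f \comp g = \id_{\U{\delta}}$. In particular, $f$ and $g$ are mutually inverse bijections of sets and, since both are already pre-cellular morphisms, mutually inverse isomorphisms in $\CApre$.

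To upgrade to a cellular isomorphism, functoriality of $[-, S]$ gives mutually inverse bijections $\dual{f}$ and $\dual{g}$ of configuration sets, since $\dual{f} \comp \dual{g} = \dual{g \comp f} = \id$ and likewise in the opposite direction. In particular, $\dual{g}$ is a section of $\dual{f}$, so the pair $(f, \dual{g})$ is a cellular morphism $\gamma \to \delta$, and symmetrically $(g, \dual{f})$ is a cellular morphism $\delta \to \gamma$. Under the composition rule $(f', s') \comp (f, s) = (f' \comp f, s' \comp s)$ in $\CA$, these two compose to $(\id, \id)$ in both directions, witnessing that $(f, \dual{g})$ is a cellular isomorphism with inverse $(g, \dual{f})$. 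Alternatively, since $f$ is already known to be an isomorphism in $\CApre$, one can simply invoke the conservativity of $P \from \CA \to \CApre$ from \cref{lem:ca-category} applied to the lift $(f, \dual{g})$.

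I do not anticipate a genuine obstacle: the heart of the proof is the standard reachability-plus-equivariance calculation on the orbit of $x_{0}$, while the passage from a pre-cellular bijection to a bona fide cellular isomorphism is purely formal. The only bookkeeping to be careful with is the contravariance of $\dual{(-)}$ when exhibiting the section, and the direction of composition in the definition of cellular morphism.
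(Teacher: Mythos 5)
Your proof is correct and follows essentially the same route as the paper: an equivariance-plus-reachability computation along the orbit of the base point to show $f$ and $g$ are mutually inverse, followed by the lift to a cellular isomorphism via $(f,\dual{g})$ and $(g,\dual{f})$, which is exactly the content of \cref{lem:ca-category} that the paper invokes at that point. The only cosmetic difference is that the paper verifies $f\comp g=\id$ first (using reachability of $\delta$) whereas you verify $g\comp f=\id$ first (using reachability of $\gamma$); both directions are needed and both automata are assumed reachable, so this is immaterial.
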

%
%
\section{Relational behavioural equivalence}\label{sec:relation-equivalence}
We proceed to describe a concept of cellular bisimulation relation between cells.
Typically, we think of a relation on sets~$X_1,X_2$ as a subset~$R\subseteq X_1\times X_2$. 
Under this view, relations comes naturally equipped with a pair of projection maps~$q_i\colon R\to X_i$ defined by~$q_i(x_1, x_2) = x_i$ resulting in the following \emph{span} in~$\Set$:
\begin{equation*}
X_1\xleftarrow{q_1} R \xra{q_2} X_2.
\end{equation*}
More generally, we view a relation in a category~$\BC$ as a span~$X\xleftarrow{q_1} R \xra{q_2} Y$. 
This affords the following notion of cellular bisimulation (cf.~\cite{staton_bisim}):

\begin{definition}
  A \emph{pre-cellular bisimulation} between cellular automata $\gamma$ and $\delta$ is a span $\gamma \xleftarrow{q_{1}} \rho \xrightarrow{q_{2}} \delta$ of pre-cellular morphisms in $\CApre$, which we denote by $(\rho, q) \from \proMor{\gamma}{\delta}$.
  Extending this, a \emph{cellular bisimulation} is a tuple $(\rho, q, R, p, g)$ consisting of a pre-cellular bisimulation $(\rho, q)$, a span $\dual{\U{\gamma}} \xleftarrow{p_{1}} R \xrightarrow{p_{2}} \dual{\U{\delta}}$ in $\Set$ and a map $g \from R \to R$, such that the following diagrams commute.
  \begin{equation}
    \label{eq:cellular-bisimulation-diagrams}
    \begin{tikzcd}
      R \rar{p_{1}} \arrow[d, "p_2" ']   
      & \dual{\U{\gamma}} \dar{\dual{q_{1}}}
      \\
      \dual{\U{\delta}} \rar{\dual{q_{2}}}
      & \dual{\U{\rho}}
    \end{tikzcd}
    \qquad
    \begin{tikzcd}[column sep=large]
      R \arrow[d, "g" '] \rar{\pair{p_{1}}{p_{2}}} & \dual{\U{\gamma}} \times \dual{\U{\delta}} \dar{G_{\gamma} \times G_{\delta}} \\
      R \rar{\pair{p_{1}}{p_{2}}} & \dual{\U{\gamma}} \times \dual{\U{\delta}}
    \end{tikzcd}
  \end{equation}
\emph{Based pre-cellular bisimulations} are defined analogously as spans in $\pCApre$.
\end{definition}

It was identified early on that weak preservation of pullbacks is a salient property for the type functor of coalgebras~\cite{Rutten00}. 
A functor $F \from \BC\to\BC$ is said to \emph{weakly preserve pullbacks} if it maps pullbacks to \emph{weak pullbacks}.
Weak pullbacks are commuting squares as below, such that for every cospan $X \xleftarrow{w_1} W \xra{w_2} Y$ with $f\comp w_1 = g\comp w_2$, there exists a morphism~$w\colon W\to P$, not necessarily unique, such that~$w_i = q_i\comp w$.
\begin{equation*}
\begin{tikzcd}
P\dar[swap]{q_1} \rar{q_2} &Y\dar{g}  \\
X\rar{f} 				 &Z
\end{tikzcd}
\end{equation*}

\begin{lemma}
 \label{lem:weak-pb-preservation}
 The cellular automaton functor $C$ weakly preserves pullbacks.
\end{lemma}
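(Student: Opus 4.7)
The plan is to prove $C$ weakly preserves pullbacks by explicit construction of preimages. Suppose
$$
\begin{tikzcd}[column sep=small]
P \rar{q_2} \dar[swap]{q_1} & Y \dar{g}
\\
X \rar{f} & Z
\end{tikzcd}
$$
is a pullback in $\Set$, and I am given $(a, \phi) \in CX$ and $(b, \psi) \in CY$ with $(Cf)(a, \phi) = (Cg)(b, \psi)$. The task is to produce $(c, \chi) \in CP$ whose images under $Cq_{1}$ and $Cq_{2}$ recover $(a, \phi)$ and $(b, \psi)$, respectively.

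First I unpack the equality in $CZ$. It splits into two conditions: $f \comp a = g \comp b$ (equality of first components in $[M, Z]$) together with $\phi \comp t^{a}_{f} = \psi \comp t^{b}_{g}$ (equality in $[I^{f \comp a}, S]$, noting $I^{f \comp a} = I^{g \comp b}$ since $f \comp a = g \comp b$). The first condition and the universal property of the pullback $P$ yield a unique $c \from M \to P$ with $q_{1} \comp c = a$ and $q_{2} \comp c = b$, which will be the first component of the desired preimage.

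Next, I analyse the orbit structure of $c$ to locate $\chi$. Because two points of $P$ coincide exactly when both of their projections to $X$ and $Y$ agree, the kernel-pair satisfies $E^{c} = E^{a} \cap E^{b}$ as subsets of $N \times N$. Consequently the inclusions $t^{c}_{q_{1}}$ and $t^{c}_{q_{2}}$ present $I^{a}$ and $I^{b}$ as subsets of $I^{c}$ inside $\intHom{N}{S}$. The second component $\chi \from I^{c} \to S$ is then built by gluing: set $\chi|_{I^{a}} = \phi$, set $\chi|_{I^{b}} = \psi$, and assign arbitrary values on $I^{c} \setminus (I^{a} \cup I^{b})$. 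This freedom on the complement is precisely what makes the preservation weak rather than strict. A direct unfolding of $(Cq_{k})(c, \chi) = (q_{k} \comp c, \chi \comp t^{c}_{q_{k}})$ then confirms that $(c, \chi)$ is indeed a preimage of both $(a, \phi)$ and $(b, \psi)$.

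The main obstacle will be the well-definedness of this gluing: on the overlap $I^{a} \cap I^{b} \subseteq I^{c}$ one needs $\phi$ and $\psi$ to coincide, whereas the hypothesis only guarantees agreement on the potentially smaller subset $I^{f \comp a} \subseteq I^{a} \cap I^{b}$. I expect bridging this gap to be the crux of the argument, requiring a detailed analysis of how the pullback identification $P \hookrightarrow X \times Y$ constrains the kernel pair $E^{f \comp a}$ relative to the equivalence closure of $E^{a} \cup E^{b}$. Exactly here the specific form of the $\Set$-pullback must do the crucial work; a delicate case distinction on whether an element of $I^{c}$ lies in $I^{a}$, in $I^{b}$, or only in $I^{c}$ should ultimately make the construction consistent and complete the proof.
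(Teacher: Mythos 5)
Your construction follows the same skeleton as the paper's own proof: obtain the first component $c = \langle a, b\rangle \colon M \to P$ from the universal property of the pullback (the paper gets it from $[M,-]$ being a right adjoint), observe that $E^{c} = E^{a} \cap E^{b}$ so that $t^{c}_{q_{1}}$ and $t^{c}_{q_{2}}$ exhibit $I^{a}$ and $I^{b}$ as subsets of $I^{c}$, and then define the second component by gluing $\phi$ and $\psi$ and extending arbitrarily. Your unpacking of the hypothesis $(Cf)(a,\phi)=(Cg)(b,\psi)$ as ``$f\circ a = g\circ b$ together with $\phi\circ t^{a}_{f}=\psi\circ t^{b}_{g}$ on $I^{f\circ a}$'' is exactly right, and so is your identification of the overlap condition as the crux.

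However, the step you defer is a genuine gap, and it cannot be closed. Since $E^{a}\cup E^{b}\subseteq E^{f\circ a}$, one only has $I^{f\circ a}\subseteq I^{a}\cap I^{b}$, and the hypothesis constrains $\phi$ and $\psi$ on nothing larger than $I^{f\circ a}$; no analysis of the set-theoretic pullback changes this, because the data $(a,\phi)$ and $(b,\psi)$ simply do not interact outside $I^{f\circ a}$. Concretely, take $Z=1$ and choose $a\circ i$ and $b\circ i$ injective while $f\circ a$ is constant (possible as soon as $N$ has two elements and $X$, $Y$ each have two points over the single point of $Z$). Then $c=\langle a,b\rangle$ is injective on $N$, so by \cref{lem:orb-inv} $I^{a}=I^{b}=I^{c}=[N,S]$ and both inclusions $t^{c}_{q_{k}}$ are identities, forcing any fill-in $\chi$ to equal $\phi$ and $\psi$ simultaneously; yet the hypothesis only forces $\phi$ and $\psi$ to agree on the constant maps in $I^{f\circ a}$. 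Choosing $\phi\neq\psi$ that agree on constants (possible whenever $S$ has two elements) yields a cone with no fill-in, so the required agreement on $I^{a}\cap I^{b}$ is not merely unproved but false in general. For comparison, the paper's proof makes the same leap: it asserts that commutativity of the outer square gives agreement of the two second components on all of $I^{q^{1}_{1}(y)}\cap I^{q^{2}_{1}(y)}$, whereas commutativity only gives it on the image of $t^{a}_{f}$; your write-up at least makes the missing assumption visible. Note that the preservation claim does hold for the pullbacks used later in the paper (e.g.\ the graph pullback of \cref{ex:cellular-bisimulation-from-morphism}, where one leg is an identity and $\chi=\phi$ works), which suggests that the correct repair is to restrict the lemma, or to add a hypothesis guaranteeing $I^{a}\cap I^{b}=I^{f\circ a}$, rather than to look for a cleverer gluing.
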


The following example shows how we can obtain a (pre-)cellular bisimulation from a (pre-)cellular morphism.
\begin{example}
  \label{ex:cellular-bisimulation-from-morphism}
  Let $f \from \gamma \to \delta$ be a pre-cellular morphism and form the following pullback to obtain the graph of $f$.
  \begin{equation*}
    \begin{tikzcd}
      \mathrm{Gr}_{f} \rar{q_{2}} \arrow[d, "{q_1}" '] 
      \ar[dr, phantom, "\pullback", very near start]
      & \U{\delta} \dar{\id} \\
      \U{\gamma} \rar{f}
      & \U{\delta}
    \end{tikzcd}
  \end{equation*}
  Since $C$ weakly preserves pullbacks, see \cref{lem:weak-pb-preservation}, and monoid actions are closed under pullbacks, there is a cellular automaton structure $\rho_{f}$ on $\mathrm{Gr}_{f}$ and $(\rho_{f}, q)$ forms a pre-cellular bisimulation.

  Next, we define $R = \mathrm{Gr}_{\dual{f}}$ with the projections $p_{1}$ and $p_{2}$ as in the following pullback diagram in~$\Set$.
  \begin{equation}
   \label{eq:graph}
    \begin{tikzcd}
      \mathrm{Gr}_{\dual{f}} \rar{p_{1}} \arrow[d, "p_2" '] 
      \ar[dr, phantom, "\pullback", very near start]
      & \dual{\U{\gamma}} \dar{\id} \\
      \dual{\U{\delta}} \rar{\dual{f}}
      & \dual{\U{\gamma}}
    \end{tikzcd}
  \end{equation}
Note here the reversal of the projections in comparison to the graph of $f$, which corresponds to formally reversing the graph relation in order to obtain a span $(R, p) \from \proMor{\dual{\U{\gamma}}}{\dual{\U{\delta}}}$.
By definition and functoriality of $\dual{(-)}$, we have that
  \begin{equation*}
    \dual{q_{2}} \comp p_{2}
    = \dual{(f \comp q_{1})} \comp p_{2}
    = \dual{q_{1}} \comp \dual{f} \comp p_{2}
    = \dual{q_{1}} \comp p_{1}.
  \end{equation*}
By naturality of $G$ (cf.~\Cref{lem:global-rule-natural}) and definition of $\mathrm{Gr}_{\dual{f}}$, we have
  \begin{equation*}
    \dual{f} \comp G_{\delta} \comp p_{2}
    = G_{\gamma} \comp \dual{f} \comp p_{2}
    = G_{\gamma} \comp p_{1}.
  \end{equation*}
  We thus obtain a unique map $g \from \mathrm{Gr}_{\dual{f}} \to \mathrm{Gr}_{\dual{f}}$ into the pullback~(\ref{eq:graph}) with
  \begin{equation*}
    \pair{p_{1}}{p_{2}} \comp g = (G_{\gamma} \times G_{\delta}) \comp \pair{p_{1}}{p_{2}} \,.
  \end{equation*}
Therefore, the tuple $(\rho_{f}, q, R, p, g)$ is a cellular bisimulation.
If $(f, s) \from \gamma \to \delta$ is a cellular morphism, then the following two diagrams commute and, using that~$R$ is defined as a pullback, we obtain unique maps $t_{1} \from \dual{\U{\gamma}} \to R$ and $t_{2} \from \dual{\U{\delta}} \to R$ with $p_{1} \comp t_{1} = \id$, $p_{2} \comp t_{1} = s$, $p_{1} \comp t_{2} = \dual{f}$ and $p_{2} \comp t_{2} = \id$.
  \begin{equation*}
    \begin{tikzcd}
      \dual{\U{\gamma}} \rar{\id} \dar[swap]{s}
      & \dual{\U{\gamma}} \dar{\id} \\
      \dual{\U{\delta}} \rar{\dual{f}}
      & \dual{\U{\gamma}}
    \end{tikzcd}
    \qquad
    \begin{tikzcd}
      \dual{\U{\delta}} \rar{\dual{f}} \dar[swap]{\id}
      & \dual{\U{\gamma}} \dar{\id} \\
      \dual{\U{\delta}} \rar{\dual{f}}
      & \dual{\U{\gamma}}
    \end{tikzcd}
  \end{equation*}
  That is, any pre-cellular morphism yields a cellular bisimulation, but only a cellular morphism gives that the relation is inhabited via sections of the projections.
\end{example}

The following example shows that every pre-cellular bisimulations induces a cellular bisimulation, with the caveat that the constructed relation on configurations may be empty.
\begin{example}
  \label{ex:consistent-configuration-bisimulation}
  Let $(\rho, q)$ be a pre-cellular bisimulation $\proMor{\gamma}{\delta}$.
  We construct the relation of \emph{consistent configurations} of $(\rho, q)$ as the following pullback.
  \begin{equation}
    \label{eq:consistent-config-pullback}
    \begin{tikzcd}
      \Cons \rar{\ConsProj_{1}} \dar[swap]{\ConsProj_{2}}
      \ar[dr, phantom, "\pullback", very near start]
      & \dual{\U{\delta}} \dar{\dual{q_{2}}} \\
      \dual{\U{\gamma}} \rar{\dual{q_{1}}}
      & \dual{\U{\rho}}
    \end{tikzcd}
  \end{equation}
  The first condition of a cellular bisimulation is fulfilled by definition.
  For the second condition, we note that the follow identity holds.
  \begin{align*}
    \dual{q_{1}} \comp G_{\gamma}\comp \ConsProj_{1}
    & = G_{\rho} \comp \dual{q_{1}} \comp \ConsProj_{1}
      \tag*{by \cref{lem:global-rule-natural}} \\
    & = G_{\rho} \comp \dual{q_{2}} \comp \ConsProj_{2}
      \tag*{cellular bisimulation} \\
    & = \dual{q_{2}} \comp G_{\delta}\comp \ConsProj_{2}
      \tag*{by \cref{lem:global-rule-natural}}
  \end{align*}
This induces, by the mapping property of the pullback in (\ref{eq:consistent-config-pullback}), a unique map $\ConsCoalg \from \Cons \to \Cons$ with
  $\pair{\ConsProj_{1}}{\ConsProj_{2}} \comp \ConsCoalg = (G_{\gamma} \times G_{\delta}) \comp \pair{\ConsProj_{1}}{\ConsProj_{2}}$,
  as required.
  Hence $(\rho, q, \Cons, \ConsProj, \ConsCoalg)$ is a cellular bisimulation.
\end{example}

In fact, the relation of consistent configurations has a special status, akin to the view on bisimilarity as final coalgebras.
\begin{proposition}
  \label{prop:consistent-config-final}
  Given a pre-cellular bisimulation $(\rho, q) \from \proMor{\gamma}{\delta}$, the cellular bisimulation $(\rho, q, \Cons, \ConsProj, \ConsCoalg)$ is final among all extensions of $(\rho, q)$ to a cellular bisimulation:
  If $(\rho, q, R, p, g)$ is another cellular bisimulation, then there is a unique map $h \from R \to \Cons$ with $\ConsProj_{k} \comp h = p_{k}$ and $h \comp g = \ConsCoalg \comp h$.
\end{proposition}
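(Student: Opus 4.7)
The strategy is to apply the universal property of the pullback in (\ref{eq:consistent-config-pullback}) twice: once to construct the mediating map $h$, and once to verify the compatibility with the coalgebra structures $g$ and $\ConsCoalg$.

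For existence and uniqueness of $h$, I would begin by noting that the first square in (\ref{eq:cellular-bisimulation-diagrams}) asserts precisely the commutativity $\dual{q_{1}} \comp p_{1} = \dual{q_{2}} \comp p_{2}$. Therefore the pair $(p_{1}, p_{2})$ forms a cone over the cospan defining $\Cons$, and the universal property of the pullback~(\ref{eq:consistent-config-pullback}) yields a unique map $h \from R \to \Cons$ satisfying $\ConsProj_{1} \comp h = p_{1}$ and $\ConsProj_{2} \comp h = p_{2}$. Uniqueness of $h$ subject to the cone conditions already delivers the first half of the claim.

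For the compatibility condition $h \comp g = \ConsCoalg \comp h$, I plan to invoke the uniqueness part of the same pullback, by checking that both sides equalise the projections $\ConsProj_{k}$ into the same maps $R \to \dual{\U{\gamma}}$ and $R \to \dual{\U{\delta}}$. On the one hand, $\ConsProj_{k} \comp h \comp g = p_{k} \comp g$, which by the second square of (\ref{eq:cellular-bisimulation-diagrams}) equals $G_{\gamma} \comp p_{1}$ when $k=1$ and $G_{\delta} \comp p_{2}$ when $k=2$. On the other hand, by the defining property of $\ConsCoalg$ given in \cref{ex:consistent-configuration-bisimulation}, we have $\ConsProj_{k} \comp \ConsCoalg = G_{\gamma} \comp \ConsProj_{1}$ for $k=1$ and $G_{\delta} \comp \ConsProj_{2}$ for $k=2$, so that $\ConsProj_{k} \comp \ConsCoalg \comp h$ gives the same two composites $G_{\gamma} \comp p_{1}$ and $G_{\delta} \comp p_{2}$ after substituting $\ConsProj_{k} \comp h = p_{k}$. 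Since both $h \comp g$ and $\ConsCoalg \comp h$ are maps $R \to \Cons$ agreeing on the universal cone, the uniqueness clause of the pullback forces them to coincide.

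No step here is an essential obstacle: the whole argument is a double application of the universal property, once we know that both $(R, p)$ and $(\Cons, \ConsProj)$ live over the same cospan via the first diagram of (\ref{eq:cellular-bisimulation-diagrams}). The only subtlety worth making explicit in the write-up is the pairing convention $\pair{\ConsProj_{1}}{\ConsProj_{2}} \comp \ConsCoalg = (G_{\gamma}\times G_{\delta}) \comp \pair{\ConsProj_{1}}{\ConsProj_{2}}$, which I would project componentwise to justify that $\ConsProj_{k} \comp \ConsCoalg = G_{?} \comp \ConsProj_{k}$ for the appropriate index.
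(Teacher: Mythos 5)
Your proposal is correct and follows essentially the same route as the paper: construct $h$ from the universal property of the pullback defining $\Cons$ using the first bisimulation square, then derive $h \comp g = \ConsCoalg \comp h$ by checking both composites agree after postcomposition with $\ConsProj_{1}$ and $\ConsProj_{2}$. Your appeal to the ``uniqueness clause of the pullback'' is exactly the paper's observation that $\pair{\ConsProj_{1}}{\ConsProj_{2}}$ is a monomorphism, so the two arguments coincide.
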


%
%
\section{Modal logic over cellular automata}\label{sec:logic}
We proceed to develop a modal language~$\ML$ for specifying properties in the configuration traces of cellular automata obtained by (iterative) application of the global rule from a chosen initial configuration and base cell.
Thus, our language will effectively be interpreted in based and reachable cellular automata equipped with an initial configuration.
Our language involves a collection of \emph{spatial modalities} indexed by elements of the monoid~$M$ as well as a distinguished \emph{update modality}. 
Intuitively, spatial modalities allow access to the states of cells reachable from the given cells in the current configuration. 
In contrast, the update modality enables access to cell states in the configurations obtained by application of the global rule to the initial configuration. 
The combination of these modalities result in a language for expressing properties of the local and global behaviour of cellular automata.

The syntax of our modal language, which extends the standard syntax of modal logic~\cite{BRV01} by 
infinitary disjunctions and the update modality mentioned above, is defined in the following. 

\begin{definition}
\label{def:grammar}
Let~$\ML$ denote the proper class generated by the following grammar:
\begin{equation*}
  \varphi
  \grammarDef s
  \mid \neg \varphi
  \mid \infdisj_{i \in I} \varphi_i
  \mid \spacedia{m} \varphi
  \mid \timedia \varphi
\end{equation*}
where $s \in S$ is a state,~$m \in\mon$, and $I$ is an index set.
An element~$\varphi\in\ML$ is called a \emph{modal formula}. 
For each regular cardinal~$\kappa$ (i.e.~an infinite cardinal which is not
cofinal with any smaller cardinals), we write~$\ML_{\kappa}$ for the fragment of~$\ML$ obtained by restricting to index sets of cardinality strictly below~$\kappa$. We refer to~$\varphi\in\ML_{\kappa}$ as \emph{$\kappa$-ary modal formulas}. A \emph{modal theory} is a set~$\Phi$ of modal formulas.
\end{definition}


\begin{notation}
We employ the following shorthands in the examples:
\begin{equation*}
\bot:=\bigvee_{i\in\varnothing}\varphi_i,
\qquad\qquad
\bigwedge_{i\in I}\varphi_i := \lnot\infdisj_{i\in I}\lnot\varphi_i,
\qquad
\text{and}
\qquad
(\varphi\to\psi) := \lnot\varphi\lor\psi,
\end{equation*}
where~$\varnothing$ is the empty set. Note that~$\ML$ is non-empty as~$\bot\in\ML$.
\end{notation}

As indicated, modal formulas are meant to be interpreted in based cellular automata together with an initial configuration.
We detail the semantics of modal formulas below.

\begin{definition}
\label{def:sems}
  Let $\gamma \from X \to CX$ be a cellular automaton, $x \in X$ a cell and $c \in \dual{X}$ a configuration.
  We define a \emph{modal forcing relation} $\sems$ by iteration on formulas in $\ML$ as follows.
  \begin{equation*}
    \begin{aligned}
      & (\gamma, x, c)\sems s
      & \text{iff} &&& c(x) = s
      \\  	
      & (\gamma, x, c)\sems \lnot\varphi
      &\text{iff} &&& (\gamma, x, c)\not\sems\varphi
      \\
      & (\gamma, x, c)\sems\infdisj\varphi_i
      &\text{iff} &&& (\gamma, x, c)\sems\varphi_i~\text{for some}~i\in I
      \\
      & (\gamma, x, c)\sems\spacedia{m}\varphi
      &\text{iff} &&& (\gamma, \gamma_1(x)(m), c)\sems\varphi
      \\
      & (\gamma, x, c)\sems\timedia\varphi
      &\text{iff} &&& (\gamma, x, G_{\gamma}(c))\sems\varphi
    \end{aligned}
  \end{equation*}
  We say that a based cellular automaton $(\gamma, x_{0})$ forces $\phi$ for a configuration $c$ if $(\gamma, x_{0}, c) \sems \phi$ holds.
We say that $\varphi$ is \emph{valid} in $(\gamma, x_{0})$ if $\varphi$ is forced by every configuration $c$, in which case we write $(\gamma, x_{0}) \sems \varphi$.
Finally, two formulas are logically equivalent, written $\phi \equiv \psi$, if for all based cellular automata $(\gamma, x_{0})$, we have that $(\gamma, x_{0}) \sems \phi$ if and only if $(\gamma, x_{0}) \sems \psi$.
\end{definition}

Given a cellular automaton~$\gamma\colon X\to GX$ and a cell~$x$, we may view the monoid action underlying~$\gamma$ as a functional accessibility relation where, for each~$m\in M$,~$\gamma_1(x)(m)$ encodes the
unique ``$m$-successor" of cell~$x$. 
From this perspective, spatial modalities are interpreted along the induced relation in a given configuration in the spirit of Kripke-style semantics of basic modal logic.
The key feature of our logic lies in the update modality~$\timedia$: it accesses cell states after \emph{transitioning} into the configuration~$G_{\gamma}(c)$ via the global rule.

\begin{example}
  Recall the cellular automaton $\gamma$ from \cref{example:basic_general_ca} in \cref{example:ca} with cell set $X = \set{w, x, y, z}$.
  Let $\varphi$ be the formula $\timedia\spacedia{\rightelem}\sblack$, which expresses that the right neighbour of a cell holds state $\sblack$ after a one-step update of the global configuration.
  For the configuration $c_{0}$ from \cref{table:intro_example_trace_seq} and cell~$y \in X$, we find that $\phi$ is forced there:
  \begin{equation*}
    (\gamma, y, c_0) \sems \timedia\spacedia{\rightelem}\sblack
  \end{equation*}
This holds because the update of $c_{0}$ via $G_{\gamma}$ yields $c_{1}$ in the second row of~\cref{table:intro_example_trace_seq}, where we see that $z$, the right neighbour of $y$, indeed has state $\sblack$.
\end{example}

Since the semantics of spatial modalities are given in terms of the monoid action that underlies a cellular automaton, it follows that they compose via the monoid operation.

\begin{lemma}
  \label{lemma:space_diamonds_and_monoid_mult}
  For all $m,n\in M$ and all $\varphi\in\ML$, we have $\spacedia{m} \spacedia{n} \varphi\equiv\spacedia{m \monop n} \varphi$.
\end{lemma}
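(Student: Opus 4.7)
The plan is to show that the two formulas are forced in exactly the same triples~$(\gamma, x, c)$ by unfolding the semantics of~$\spacedia{-}$ and invoking the Eilenberg-Moore coalgebra law for the spatial part~$\gamma_1$. No induction on~$\varphi$ is needed, since the outer subformula~$\varphi$ appears unchanged on both sides; the work reduces to a single pointwise identity about the monoid action.

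Concretely, two applications of the clause for~$\spacedia{-}$ from \Cref{def:sems} yield
\begin{equation*}
(\gamma, x, c) \sems \spacedia{m}\spacedia{n}\varphi \iff (\gamma, \gamma_1(\gamma_1(x)(m))(n), c) \sems \varphi,
\end{equation*}
while a single application yields $(\gamma, x, c) \sems \spacedia{m \monop n}\varphi$ iff $(\gamma, \gamma_1(x)(m \monop n), c) \sems \varphi$. Since the underlying coalgebra and configuration are the same in both cases, the equivalence reduces to the pointwise identity $\gamma_1(\gamma_1(x)(m))(n) = \gamma_1(x)(m \monop n)$ for every cell~$x$ and all $m, n \in M$.

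This identity is precisely the coassociativity law from \Cref{dia:comonad_coalg_laws} for the Eilenberg-Moore coalgebra~$\gamma_1$, whose existence is guaranteed by \Cref{def:ca}: evaluating the diagram $\delta \comp \gamma_1 = F\gamma_1 \comp \gamma_1$ at a cell~$x$ and the pair~$(m, n)$ and unfolding the definition of the cowriter comultiplication~$\delta$ produces exactly the required equation. Equivalently, via \Cref{lemma:cowriter_is_left_action}, it is the associativity law~\eqref{eq:left_action} of the left $M$-action underlying~$\gamma$, rewritten in coalgebra notation. The only care needed is the bookkeeping between the order of arguments in the semantics of~$\spacedia{-}$ and the order of composition in~$\delta$; beyond that, no genuine obstacle arises.
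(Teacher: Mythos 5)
Your proof is correct and takes essentially the same route as the paper's: unfold the semantics of the spatial modality twice and reduce the claim to a single identity about the monoid action underlying $\gamma_1$, which is exactly the Eilenberg-Moore coassociativity law (equivalently, the associativity axiom of the induced left $M$-action). The order-of-multiplication bookkeeping you defer is glossed over in the paper's own proof in precisely the same way --- with the stated conventions the coalgebra law literally yields $\gamma_1(\gamma_1(x)(m))(n) = \gamma_1(x)(n \monop m)$ rather than $\gamma_1(x)(m \monop n)$, so the displayed form of the lemma tacitly assumes the opposite composition convention --- hence this is a quirk of the paper, not a gap in your argument.
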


Next, we provide some examples of classes of cellular automata which are axiomatisable in the language $\ML$.
Here, a class $\K$ of cellular automata is \emph{axiomatisable} if there is a modal theory $\Phi$ such that $\gamma$ is in $\K$ if, and only if, $(\gamma, x_0) \sems \varphi$ for all cells~$x$ and all $\varphi\in\Phi$. 
We give a class of cellular automata which is not axiomatisable over~$\ML$ in~\cref{example:oneway}.

\begin{example}[Quiescence]
Let~$q\in S$ be a state, and let $K_{q}\colon N\to S$ denote the constant map defined by $K_{q}(n) = q$, which is orbit-invariant for any action.
The state $q$ is \emph{quiescent} in a cellular automaton $\gamma$ if $\gamma_{2}(x)(K_{q}) = q$ for every cell $x$.
The class of cellular automata for which $q$ is a quiescent state is axiomatisable by the following formula:
\begin{equation*}
\infconj_{n \in N} \spacedia{n} q\to\timedia q.
\end{equation*}
Indeed, the formula is validated by a cellular automaton precisely if the next state of a cell with constant $q$ neighbourhood is again $q$, see~\cref{lem:local-rule-test-formula}.
\end{example}

\begin{example}[Periodicity]
  A cellular automaton $\gamma$ is \emph{periodic}~\cite{KAOL08_periodicity} if there exists $p\in\N$ (the \emph{period}), such that the global rule returns to the same configuration every $p \in \N$ iterations, that is, if $G_{\gamma}^p = \id_{\dual{\U{\gamma}}}$.
  Periodicity of a based cellular automaton can be expressed as:
  \begin{equation*}
    (\gamma, x_{0}) \sems
    \infdisj_{p \in \N}
    \infconj_{m \in \mon}
    \infconj_{t \in \N}
    \infconj_{s \in S}
    \spacedia{m} \timedia^t(s \to \timedia^p s).
  \end{equation*}
\end{example}

\begin{example}[Nilpotency]
A CA $\gamma \from X \to CX$ is \emph{nilpotent} if there exists~$t \in \N$ and a configuration $c_\bot\colon X \to S$ such that $G_{\gamma}(c_{\bot})=c_{\bot}$ (i.e.~$c_{\bot}$ is a fixed point of the global rule) and~$G_{\gamma}^{t+k}(c) = c_\bot$ for all configurations $c$ and $k \in \N$. 
In other words,~$c_{\bot}$ is reached under all initial configurations after at most~$t$ applications of the global rule.
Kari \cite{KA92_nilpotent} showed that it is undecidable whether a CA on~$\Z$ is nilpotent, and that any fixed point of the global rule is a configuration that assigns every cell the same quiescent state $s_\bot \in S$.
For a based and reachable cellular automaton $(\gamma, x_{0})$, nilpotency can be expressed as a validity property:
  \begin{equation*}
    (\gamma, x_{0}) \sems \infdisj_{s_\bot \in S} \infdisj_{t \in \N}
    \timedia^t \left(\infconj_{m \in \mon} \infconj_{k \in \N}
      \spacedia{m}\timedia^k s_\bot \right).
  \end{equation*}
  See Kamilya and Kari~\cite{KK21} for further work on nilpotency.
\end{example}

\begin{example}
In this final example, we show how to recover familiar modalities from temporal logic.
In particular, we obtain ``globally'', ``eventually'' and ``until'' by the following definitions
(see,~e.g.~\cite{Emerson90}).
\begin{equation*}
\mathbf{G} \phi = \infconj_{n \in \N}\timedia^n \varphi, 
\quad
\mathbf{F} \phi = \infdisj_{n \in \N} \timedia^n \varphi, 
\quad\text{and}\quad
\psi \mathrel{\mathbf{U}} \phi
  = \infdisj_{n \in \N} ((\infconj_{0 \leq i \leq n} \timedia^i \psi) \land \timedia^{n+1} \varphi).
\end{equation*}
\end{example}

Our expressiveness theorems in \cref{sec:hennessy_milner} use particular kinds of formulas to probe values of configurations and the behaviour of local rules.
The following lemmas make these formulas and their properties explicit.
\begin{lemma}
  \label{lem:configuration-test-formula}
  Let $\gamma \from X \to C X$ be a cellular automaton, $x \in X$, and $f \in I^{\gamma_{1}(x)}$.
  Then for all $c \in \dual{X}$, we have
  \begin{equation*}
    c \comp \gamma_{1}(x) \comp i = f
    \iff (\gamma, x, c) \sems \infconj_{n \in N} \spacedia{n} f(n)
  \end{equation*}
\end{lemma}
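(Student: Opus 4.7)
The plan is to prove the biconditional by a direct unfolding of the semantic clauses, reading both sides pointwise over the neighbourhood $N$. Nothing fancy (no induction, no construction, no use of the coalgebra laws) should be required, since both sides are essentially restating the same equality of maps $N \to S$ in different language. The hypothesis $f \in I^{\gamma_1(x)}$ is only needed to guarantee that the equation on the left-hand side can hold at all, as I explain at the end.

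The steps I would carry out are as follows. First, unfold the derived shorthand $\bigwedge$: by definition $\bigwedge_{n \in N} \spacedia{n} f(n) = \neg \bigvee_{n \in N} \neg \spacedia{n} f(n)$, so by the clauses for $\neg$ and $\bigvee$ in \cref{def:sems} we have $(\gamma, x, c) \sems \bigwedge_{n \in N} \spacedia{n} f(n)$ iff $(\gamma, x, c) \sems \spacedia{n} f(n)$ for every $n \in N$. Second, apply the clause for the spatial modality: viewing $n \in N$ as the element $i(n) \in M$ via the inclusion $i \from N \hookrightarrow M$, this is equivalent to $(\gamma, \gamma_1(x)(i(n)), c) \sems f(n)$ for every $n \in N$. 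Note here that $f(n) \in S$ is a state and hence a well-formed atomic formula. Third, apply the atomic clause: the above is equivalent to $c(\gamma_1(x)(i(n))) = f(n)$ for every $n \in N$. Fourth, observe that this pointwise condition is by definition the same as the equation $c \comp \gamma_1(x) \comp i = f$ of maps $N \to S$. Chaining these four equivalences yields the desired biconditional.

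There is no genuine obstacle: the whole argument is a chain of unfoldings. The only conceptual subtlety worth flagging is why $f \in I^{\gamma_1(x)}$ appears in the hypothesis. The map $c \comp \gamma_1(x) \comp i$ is automatically orbit-invariant, because $(m, n) \in E^{\gamma_1(x)}$ forces $\gamma_1(x)(i(m)) = \gamma_1(x)(i(n))$ and hence $c(\gamma_1(x)(i(m))) = c(\gamma_1(x)(i(n)))$. So if the equation $c \comp \gamma_1(x) \comp i = f$ is to have any chance of holding, $f$ must itself sit in $I^{\gamma_1(x)}$; the assumption simply records this in the statement. The equivalence of the two conditions, however, holds regardless.
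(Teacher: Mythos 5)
Your proof is correct: the paper omits an explicit proof of this lemma precisely because it is the routine unfolding of the derived conjunction, the spatial-modality clause, and the atomic clause that you carry out, so your argument coincides with the intended one. Your closing remark correctly identifies that the hypothesis $f \in I^{\gamma_{1}(x)}$ is not needed for the equivalence itself but only ensures the left-hand side is satisfiable, since $c \comp \gamma_{1}(x) \comp i$ is automatically orbit-invariant.
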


\begin{lemma}
 \label{lem:local-rule-test-formula}
 Let $\gamma \from X \to C X$ be a cellular automaton, $x \in X$, and $f \in I^{\gamma_{1}(x)}$.
 Then for all $c \in \dual{X}$ with $c \comp \gamma_{1}(x) \comp i = f$ and $s \in S$, we have that
  \begin{equation*}
    \gamma_{2}(x)(f) = s
    \iff (\gamma, x, c) \sems \parens*{\infconj_{n \in N} \spacedia{n} f(n)} \to \timedia s
  \end{equation*}
\end{lemma}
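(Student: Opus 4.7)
The plan is to unfold the semantics of the implication and reduce to the definition of the global rule, using the preceding lemma as a black box.

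First, abbreviate $\phi := \infconj_{n \in N} \spacedia{n} f(n)$ and write $c_{0} := c \comp \gamma_{1}(x) \comp i$, so that by hypothesis $c_{0} = f$. By the previous lemma (\cref{lem:configuration-test-formula}), the equivalence $c_{0} = f \iff (\gamma, x, c) \sems \phi$ gives us that $(\gamma, x, c) \sems \phi$. Since $\phi \to \timedia s$ is defined as $\lnot \phi \lor \timedia s$, the forcing relation unfolds to: $(\gamma, x, c) \sems \phi \to \timedia s$ iff $(\gamma, x, c) \not\sems \phi$ or $(\gamma, x, c) \sems \timedia s$. The first disjunct is false under our hypothesis, so the implication holds at $(\gamma, x, c)$ iff $(\gamma, x, c) \sems \timedia s$.

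Next, I unfold the update modality and the atomic clause in \cref{def:sems}: $(\gamma, x, c) \sems \timedia s$ iff $(\gamma, x, G_{\gamma}(c)) \sems s$ iff $G_{\gamma}(c)(x) = s$. By the concrete formula for the global rule stated just after \cref{def:configuration-global-rule}, we have
\begin{equation*}
G_{\gamma}(c)(x) = \gamma_{2}(x)(c \comp \gamma_{1}(x) \comp i) = \gamma_{2}(x)(c_{0}) = \gamma_{2}(x)(f),
\end{equation*}
where the last equality uses $c_{0} = f$. Chaining these equivalences gives $(\gamma, x, c) \sems \phi \to \timedia s$ iff $\gamma_{2}(x)(f) = s$, which is the claim.

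There is no real obstacle here: once \cref{lem:configuration-test-formula} disposes of the antecedent, the remainder is purely a rewriting via the semantic clauses and the explicit unfolding of $G_{\gamma}$. The only subtle point to remember is that the hypothesis $c \comp \gamma_{1}(x) \comp i = f$ is what ensures both that $\phi$ is forced (so the implication reduces to its consequent) and that $\gamma_{2}(x)$ receives the argument $f$ inside the global rule.
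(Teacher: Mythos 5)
Your proof is correct and is exactly the intended argument: the paper states \cref{lem:local-rule-test-formula} without an explicit proof, treating it as the routine unfolding you carry out, namely discharging the antecedent via \cref{lem:configuration-test-formula}, unfolding the semantic clauses for $\to$, $\timedia$, and atomic states, and invoking the pointwise description $G_{\gamma}(c)(x) = \gamma_{2}(x)(c \comp \gamma_{1}(x) \comp i)$. Your closing remark correctly identifies the one point of substance, that the hypothesis $c \comp \gamma_{1}(x) \comp i = f$ is what makes both directions of the equivalence go through.
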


Intuitively, the formula $\infconj_{n \in N} \spacedia{n} f(n)$ in \cref{lem:configuration-test-formula} compares the state of all the neighbours against the corresponding value of $f$.
In \cref{lem:local-rule-test-formula}, we use the same formula to check that, if the neighbours of a given cell are in a particular state, then the given cell transits to state $s$.

Our next step is to relate the formulas validated by two cellular automata.
If we were to follow the usual approach for, say, labelled transition systems, then we would call cells logically equivalent if they validate the same formulas.
However, just as pre-morphisms could not push the global configurations of a cellular automaton forward to the codomain, logical equivalence will take a seemingly non-symmetric form, where configurations need to be transported in a way that truth of formulas is preserved and reflected.
\begin{definition}
\label{def:logical-transport}
Let $(\gamma, x_{0})$ and $(\delta, y_{0})$ be based cellular automata.
A \emph{(strong) logical transport} $t \from (\gamma, x_{0}) \logTrans (\delta, y_{0})$ is a map $t \from \dual{\U{\gamma}} \to \dual{\U{\delta}}$, such that for all formulas $\phi$ and all configurations $c \in \dual{\U{\gamma}}$ we have
  \begin{equation*}
    (\gamma, x_{0}, c) \sems \phi
    \quad \text{iff} \quad
    (\delta, y_{0}, t(c)) \sems \phi
  \end{equation*}
\end{definition}

\begin{lemma}
\label{lem:logical-transport-category}
The identity map is a logical transport and logical transports are closed under composition.
\end{lemma}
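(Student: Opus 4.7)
The plan is to verify the two claims essentially by unwinding definitions, since both statements are formal consequences of the biconditional built into the definition of a logical transport.

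For the identity claim, I would take an arbitrary based cellular automaton $(\gamma, x_{0})$ and consider the identity map $\id \from \dual{\U{\gamma}} \to \dual{\U{\gamma}}$ as a candidate transport $(\gamma, x_{0}) \logTrans (\gamma, x_{0})$. Given any formula $\varphi \in \ML$ and any configuration $c \in \dual{\U{\gamma}}$, the required biconditional
\begin{equation*}
  (\gamma, x_{0}, c) \sems \varphi \iff (\gamma, x_{0}, \id(c)) \sems \varphi
\end{equation*}
reduces to a tautology once one observes that $\id(c) = c$. No induction on the structure of $\varphi$ is needed because the equivalence is already witnessed at the level of the interpreting configuration.

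For closure under composition, I would fix two logical transports $t_{1} \from (\gamma, x_{0}) \logTrans (\delta, y_{0})$ and $t_{2} \from (\delta, y_{0}) \logTrans (\epsilon, z_{0})$ and propose $t_{2} \comp t_{1} \from \dual{\U{\gamma}} \to \dual{\U{\epsilon}}$ as the composite transport. Given $\varphi \in \ML$ and $c \in \dual{\U{\gamma}}$, I would chain the two defining biconditionals:
\begin{equation*}
  (\gamma, x_{0}, c) \sems \varphi
  \iff (\delta, y_{0}, t_{1}(c)) \sems \varphi
  \iff (\epsilon, z_{0}, t_{2}(t_{1}(c))) \sems \varphi,
\end{equation*}
where the first biconditional uses that $t_{1}$ is a transport and the second uses that $t_{2}$ is one. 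Since $t_{2}(t_{1}(c)) = (t_{2} \comp t_{1})(c)$, this shows that $t_{2} \comp t_{1}$ satisfies the defining property of a logical transport from $(\gamma, x_{0})$ to $(\epsilon, z_{0})$.

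There is no serious obstacle here: the lemma is essentially the observation that the class of maps satisfying a biconditional preservation-and-reflection property is closed under identity and composition. The only thing worth flagging is that, because a logical transport is merely a map on configuration sets (not a map between coalgebras), there is no additional coherence condition to check beyond what the definition asks for; in particular, we do not need to invoke naturality of $G_{-}$ or any property of spatial/update modalities. This lemma thus ensures that based cellular automata and logical transports assemble into a category, which will be used in the sequel.
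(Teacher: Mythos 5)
Your proof is correct and takes essentially the same approach as the paper: the identity case is the tautological biconditional $(\gamma, x_{0}, c) \sems \varphi \iff (\gamma, x_{0}, \id(c)) \sems \varphi$, and composition follows by chaining the two defining biconditionals of the given transports. No further comment is needed.
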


Using \cref{lem:logical-transport-category}, we can define a category $\LogTrans$ with based, reachable cellular automata as objects and logical transports as maps.
Our first important result is that moving along cellular morphism is sound for logical transports.
\begin{theorem}
  \label{thm:cellular-morphism-gives-logical-transport}
  If $(f, s)$ is a based cellular morphism $(\gamma, x_{0}) \to (\delta, y_{0})$, then $s$ is a logical transport $(\gamma, x_{0}) \logTrans (\delta, y_{0})$.
\end{theorem}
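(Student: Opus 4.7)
The plan is to prove the statement by induction on the structure of $\phi \in \ML$, but the naive induction hypothesis --- that $s$ is a logical transport at the base point $x_{0}$ --- is too weak to handle the update modality. The obstruction is that, upon unfolding $\timedia \psi$, one needs $s(G_{\gamma}(c))$ and $G_{\delta}(s(c))$ to force the same formulas at $y_{0}$; yet $s$ is not required to intertwine $G_{\gamma}$ and $G_{\delta}$, and \cref{lem:global-rule-natural-full} provides only the weaker equation $\dual{f} \comp G_{\delta} \comp s = G_{\gamma}$. To sidestep this, I will generalise the hypothesis to range over pairs of configurations that are compatible along $f$.

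Call a pair $(c, d) \in \dual{\U{\gamma}} \times \dual{\U{\delta}}$ \emph{compatible} if $d \comp f = c$, equivalently $\dual{f}(d) = c$. The strengthened claim to establish by induction on $\phi$ is that for every compatible pair $(c, d)$ and every cell $x \in \U{\gamma}$,
\begin{equation*}
  (\gamma, x, c) \sems \phi \quad \text{iff} \quad (\delta, f(x), d) \sems \phi.
\end{equation*}
Once this is proven, the theorem follows by instantiating at $x = x_{0}$ with the pair $(c, s(c))$, which is compatible since $\dual{f} \comp s = \id$, and using $f(x_{0}) = y_{0}$.

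The base case $\phi = s' \in S$ is immediate from compatibility, as $c(x) = d(f(x))$. Negation and infinitary disjunction pass through the induction hypothesis. For the spatial modality $\spacedia{m}\psi$, unpacking the pre-cellular morphism condition $Cf \comp \gamma = \delta \comp f$ under $\tilde{\pi}$ (cf.~\cref{eq:functor}) yields the spatial naturality identity $f(\gamma_{1}(x)(m)) = \delta_{1}(f(x))(m)$, so the induction hypothesis applied to $\psi$ at the still-compatible pair $(c, d)$ and the cell $\gamma_{1}(x)(m)$ delivers the conclusion.

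The crux is the update modality $\phi = \timedia \psi$, where the hypothesis must be applied to $(G_{\gamma}(c), G_{\delta}(d))$. Compatibility of this updated pair is the main technical point, and it follows from \cref{lem:global-rule-natural}: since $f$ is in particular a pre-cellular morphism, $\dual{f} \comp G_{\delta} = G_{\gamma} \comp \dual{f}$, so that $\dual{f}(G_{\delta}(d)) = G_{\gamma}(\dual{f}(d)) = G_{\gamma}(c)$. This is precisely where the strengthened hypothesis pays off: the section property of $s$ is needed only once, to furnish a compatible initial pair, while the invariance of compatibility under the global rules handles all subsequent applications of the update modality.
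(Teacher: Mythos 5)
Your proof is correct, and it is in essence an inlined, self-contained version of the paper's argument. The paper proves this theorem by first packaging $(f,s)$ into a cellular bisimulation on the graph of $f$ (\cref{ex:cellular-bisimulation-from-morphism}) and then invoking the relational result \cref{thm:bisimulation-to-logical-equivalence}, whose proof is an induction on formulas with exactly your case analysis. Your set of compatible pairs $\setDef{(c,d)}{\dual{f}(d)=c}$ is precisely the relation $\mathrm{Gr}_{\dual{f}}$ constructed there; your observation that compatibility is preserved by $(G_{\gamma}, G_{\delta})$ via \cref{lem:global-rule-natural} is precisely the construction of the map $g$ on that relation; and your final instantiation at the pair $(c, s(c))$ corresponds to the section $t_{1}$ of $p_{1}$ obtained from $\dual{f}\comp s = \id$. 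What your route buys is economy: by discharging the spatial modality directly through equivariance of $f$, you never need to equip $\mathrm{Gr}_{f}$ with a coalgebra structure, which the paper obtains via weak pullback preservation (\cref{lem:weak-pb-preservation}). What the paper's route buys is reuse, since \cref{thm:bisimulation-to-logical-equivalence} applies to arbitrary cellular bisimulations rather than only to graphs of morphisms and is needed independently for \cref{thm:relational-hm}. One point worth making explicit in a polished write-up: the induction is on the structure of $\phi$ with the claim universally quantified over all compatible pairs \emph{and} all cells simultaneously, which is what licenses applying the hypothesis both to the new pair $(G_{\gamma}(c), G_{\delta}(d))$ in the update case and to the shifted cell $\gamma_{1}(x)(m)$ in the spatial case.
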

This allows us to define a functor $\prCA \to \LogTrans$ by projecting $(f, s)$ on $s$.

\Cref{thm:cellular-morphism-gives-logical-transport} can be proven by appealing to its relational counterpart in \cref{thm:bisimulation-to-logical-equivalence} and \cref{ex:cellular-bisimulation-from-morphism}.
Indeed, the following definition generalises logical transport from maps to relations.
\begin{definition}
  \label{def:logical-equivalence}
  Let $(\gamma, x_{0})$ and $(\delta, y_{0})$ be based cellular automata.
  A \emph{logical equivalence}, is a span $\dual{\U{\gamma}} \xleftarrow{p_{1}} R \xrightarrow{p_{2}} \dual{\U{\delta}}$, written $(R, p)$, such that the following holds for all $c \in R$ and all formulas $\phi$:
  \begin{equation*}
    (\gamma, x_{0}, p_{1}(c)) \sems \phi
    \iff
    (\delta, y_{0}, p_{2}(c)) \sems \phi.
  \end{equation*}
\end{definition}

The following theorem shows that cellular bisimulations induce logical equivalence.
It is proven by a straightforward induction on formulas.
\begin{theorem}
  \label{thm:bisimulation-to-logical-equivalence}
  If $(\rho, f, R, p, g)$ is a cellular bisimulation $\proMor{\gamma}{\delta}$, then $(R, p)$ is a logical equivalence between $(\gamma, f_{1}(u))$ and $(\delta, f_{2}(u))$ for all $u \in \U{\rho}$.
\end{theorem}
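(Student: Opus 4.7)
\medskip\noindent\textbf{Proof plan.} The plan is to proceed by straightforward structural induction on the modal formula $\phi$, with the induction hypothesis universally quantified over all $u \in \U{\rho}$ and all $c \in R$ simultaneously. The three non-trivial cases (atomic states, spatial modality, update modality) each correspond to exactly one piece of the data of a cellular bisimulation, and the boolean connectives are formal.

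For the base case $\phi = s \in S$, the condition $(\gamma, f_1(u), p_1(c)) \sems s$ unfolds to $p_1(c)(f_1(u)) = s$, i.e., $\dual{f_1}(p_1(c))(u) = s$. By the first diagram in~(\ref{eq:cellular-bisimulation-diagrams}), namely $\dual{f_1} \comp p_1 = \dual{f_2} \comp p_2$, this is equivalent to $\dual{f_2}(p_2(c))(u) = s$, which is $(\delta, f_2(u), p_2(c)) \sems s$. The negation and infinitary disjunction cases follow directly from the inductive hypothesis applied at the same pair $(u, c)$.

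The spatial modality case will hinge on recognizing that the pullback definition of $\CApre$ (\cref{def:ca}) forces $f_1$ to be a morphism of $\cowriterfunc$-coalgebras, so $\cowriterfunc f_1 \comp \rho_1 = \gamma_1 \comp f_1$; evaluating at $u$ and $m$ yields $\gamma_1(f_1(u))(m) = f_1(\rho_1(u)(m))$, and symmetrically for $f_2$. Setting $u' = \rho_1(u)(m) \in \U{\rho}$, the condition $(\gamma, f_1(u), p_1(c)) \sems \spacedia{m}\phi$ unfolds to $(\gamma, f_1(u'), p_1(c)) \sems \phi$, which by IH at $(u', c)$ is equivalent to $(\delta, f_2(u'), p_2(c)) \sems \phi$, and re-folding gives $(\delta, f_2(u), p_2(c)) \sems \spacedia{m}\phi$ as desired. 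For the update modality $\timedia\phi$, the second diagram in~(\ref{eq:cellular-bisimulation-diagrams}) says precisely that $p_1 \comp g = G_\gamma \comp p_1$ and $p_2 \comp g = G_\delta \comp p_2$. Thus $(\gamma, f_1(u), p_1(c)) \sems \timedia\phi$ unfolds to $(\gamma, f_1(u), G_\gamma(p_1(c))) \sems \phi = (\gamma, f_1(u), p_1(g(c))) \sems \phi$, and applying IH at $(u, g(c)) \in \U{\rho} \times R$ transports this to $(\delta, f_2(u), p_2(g(c))) \sems \phi$, which re-folds to $(\delta, f_2(u), p_2(c)) \sems \timedia\phi$.

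The main obstacle, if any, is purely bookkeeping: one must be careful that the inductive hypothesis ranges over \emph{all} $u \in \U{\rho}$ and \emph{all} $c \in R$, since the spatial modality case changes the cell (via $u \mapsto u'$) while the update modality case changes the element of $R$ (via $c \mapsto g(c)$). No new ingredients beyond the two defining diagrams of a cellular bisimulation and the morphism-of-$\cowriterfunc$-coalgebras property of $f_1, f_2$ are needed; in particular, the statement does not rely on \cref{lem:weak-pb-preservation} or on the finality established in \cref{prop:consistent-config-final}. Once this induction is carried out, \cref{thm:cellular-morphism-gives-logical-transport} follows by instantiating the theorem at the cellular bisimulation associated to a cellular morphism as constructed in \cref{ex:cellular-bisimulation-from-morphism}.
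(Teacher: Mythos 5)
Your proposal is correct and follows essentially the same route as the paper's proof: an induction on formulas with the hypothesis quantified over all $u \in \U{\rho}$ and $c \in R$, using the first diagram of~(\ref{eq:cellular-bisimulation-diagrams}) for the atomic case, equivariance of $f_{1}, f_{2}$ for the spatial modality, and the second diagram (via $c \mapsto g(c)$) for the update modality. The bookkeeping point you flag about the simultaneous quantification is exactly the subtlety the paper's proof also relies on.
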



%
\section{A Hennessy-Milner style expressiveness theorem}\label{sec:hennessy_milner}

In this section, we establish the full correspondence between cellular morphisms and cellular bisimulations on the one hand, and logical transport and logical equivalence on the other hand.
To be precise, we prove the reverse directions of \cref{thm:cellular-morphism-gives-logical-transport,thm:bisimulation-to-logical-equivalence} in \cref{theo:hm_theorem,thm:relational-hm} below.
This should be compared with the seminal result of Hennessy and Milner~\cite{HM85}: given a pair of states in finitely-branching labelled transition systems (of the same similarity type) which are not bisimilar, there is a modal formula which distinguishes them.


The proof of our theorems will make essential use of \cref{lem:local-rule-test-formula} to probe local rules.
This, however, will require that we have sufficient supply of configurations that are compatible with a given neighbourhood configuration.
\begin{lemma}
  \label{lem:lifting-local-configurations}
  Suppose that $S$ is inhabited, i.e., it has an element, and that $f \from N \to S$ and $g \from N \to X$ are maps.
  If $g(n) = g(n')$ implies $f(n) = f(n')$ for all $n \in N$, then there is a lifting $h$ that renders the following diagram commutative.
  \begin{equation*}
    \begin{tikzcd}
      N \rar{f} \dar[swap]{g} & S \\
      X \ar[ur, dashed, "h"{swap}]
    \end{tikzcd}
  \end{equation*}
\end{lemma}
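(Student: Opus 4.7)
The plan is to construct $h\colon X\to S$ by first defining it on the image $\im g \subseteq X$ in the only possible way, and then extending arbitrarily to $X\setminus\im g$ using the inhabitation assumption on $S$.

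More concretely, I would factor $g$ through its image as $g = \iota\comp e$, where $e\colon N\twoheadrightarrow \im g$ is the surjective corestriction and $\iota\colon \im g \hookrightarrow X$ is the inclusion. The hypothesis that $g(n) = g(n')$ implies $f(n)=f(n')$ states precisely that the kernel pair of $e$ is contained in the kernel pair of $f$. Since $e$ is a surjection (hence a regular epimorphism in $\Set$, the coequaliser of its kernel pair), there exists a unique map $\bar f\colon \im g \to S$ with $\bar f\comp e = f$. Equivalently, pointwise: for $x\in \im g$, choose any $n\in N$ with $g(n)=x$ and set $\bar f(x) = f(n)$; the value is independent of the choice of preimage by the hypothesis.

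Next, I would invoke the inhabitation of $S$ to pick some $s_0\in S$ and define
\begin{equation*}
  h(x) =
  \begin{cases}
    \bar f(x), & x \in \im g, \\
    s_0, & x \notin \im g.
  \end{cases}
\end{equation*}
Then $h\comp\iota = \bar f$ on $\im g$, so $h\comp g = h\comp\iota\comp e = \bar f\comp e = f$, as required.

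The only subtle point, and the reason the lemma is stated with the hypothesis that $S$ is inhabited, is the case where $g$ is not surjective; the assignment of values of $h$ outside $\im g$ is unconstrained but requires at least one element of $S$ to choose. If $g$ is surjective, the hypothesis alone suffices and $h$ is in fact unique; in general we only get existence. No step involves anything beyond elementary set theory, so I do not expect an obstacle — the content of the lemma is really just the universal property of the image factorisation of $g$ in $\Set$.
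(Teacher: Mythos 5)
Your proposal is correct and is essentially the paper's own proof: define $h$ on the image of $g$ by $h(g(n)) = f(n)$ (well-defined by the hypothesis) and send everything outside the image to a chosen element of $S$. The image-factorisation phrasing is just a more categorical dressing of the same two-case construction.
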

In light of \cref{lem:lifting-local-configurations}, we will continue with the assumption that $S$ is inhabited.

On the functional side, that is when going from logical transports to cellular morphisms, we have to restrict to reachable cellular automata with a free action.
These conditions allow us to construct the pre-morphism between the given cellular automata only from equi-satisfiability of formulas.
\begin{theorem}\label{theo:hm_theorem}
Suppose that~$t\colon(\gamma, x_0)\logTrans(\delta, y_0)$ is a logical transport between based, reachable cellular automata with a free action $\gamma_1$.
Then there exists a unique based cellular morphism~$\bar{t}\colon(\gamma, x_0)\to(\delta, y_0)$ 
with~$\pi_2(\bar{t}) =t$.
\end{theorem}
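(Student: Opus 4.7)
The plan is to construct the pre-cellular morphism~$f\colon \U{\gamma}\to \U{\delta}$ underlying~$\bar{t} = (f,t)$ directly from the basepoints and monoid actions, and then to verify separately that~$t$ is a section of~$\dual{f}$, that the local-rule component of~$\gamma$ is preserved by~$f$, and that the base-point condition and uniqueness hold.

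I would begin by exploiting reachability and freeness: since~$\gamma_1(x_0)\colon M\to \U{\gamma}$ is both injective (by freeness) and surjective (by reachability), it is a bijection. This forces the definition~$f(\gamma_1(x_0)(m)) = \delta_1(y_0)(m)$. The Eilenberg-Moore coalgebra laws then yield~$f(\gamma_1(x)(m)) = \delta_1(f(x))(m)$, so~$f$ is equivariant, while the counit law gives~$f(x_0) = y_0$. Uniqueness follows immediately, since any based pre-cellular morphism is forced by equivariance and the basepoint condition to agree with~$f$ on every cell.

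Next I would show~$\dual{f}\comp t = \id$ using the spatial modalities. Fix~$c\in \dual{\U{\gamma}}$, a cell~$x\in \U{\gamma}$ with~$\gamma_1(x_0)(m) = x$, and~$s\in S$. Then~$(\gamma, x_0, c)\sems \spacedia{m} s$ iff~$c(x) = s$, while~$(\delta, y_0, t(c))\sems \spacedia{m} s$ iff~$t(c)(f(x)) = s$. Logical transport therefore forces~$t(c)(f(x)) = c(x)$ for every~$x$, i.e.,~$\dual{f}(t(c)) = c$.

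The main step is to show preservation of the local rule, i.e., that~$\gamma_2(x)(h) = \delta_2(f(x))(h)$ for every~$x\in\U{\gamma}$ and every~$h\in I^{f\comp \gamma_1(x)}$ (viewed as an element of~$I^{\gamma_1(x)} \cong \intHom{N}{S}$ via freeness and~\cref{lem:orb-inv}). Given such~$h$, I would build a configuration~$c\in \dual{\U{\gamma}}$ with~$c\comp \gamma_1(x)\comp i = h$: freeness makes~$\gamma_1(x)|_N$ injective, so the assignment is well-defined on~$\gamma_1(x)(N)$, and~\cref{lem:lifting-local-configurations} extends it to all of~$\U{\gamma}$ (using that~$S$ is inhabited). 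Then~$G_\gamma(c)(x) = \gamma_2(x)(h)$, and using~$f\comp \gamma_1(x) = \delta_1(f(x))$ together with the section property, $G_\delta(t(c))(f(x)) = \delta_2(f(x))(h)$. Applying logical transport to~$\timedia\spacedia{m} s$ for each~$s\in S$ forces~$G_\gamma(c)(x) = G_\delta(t(c))(f(x))$, giving the required equation. The main obstacle is precisely this lifting step: freeness keeps the neighbourhood assignment consistent, while~\cref{lem:lifting-local-configurations} supplies the off-orbit extension. Once the lifting is available, the combined use of~$\spacedia{m}$ and~$\timedia$ in the logical transport is strong enough to pin down local rules pointwise.
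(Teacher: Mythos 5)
Your proposal is correct, and it takes a genuinely more direct route than the paper. You and the paper agree on the first half: reachability plus freeness make $\gamma_1(x_0)$ a bijection, forcing $f(\gamma_1(x_0)(m)) = \delta_1(y_0)(m)$, whence equivariance, the basepoint condition, uniqueness, and (via probing with $\spacedia{m}s$) the section identity $\dual{f}\comp t = \id$. Where you diverge is in establishing that $f$ commutes with the local rules. The paper does not verify this directly: it first upgrades $t$ to a logical \emph{equivalence} (showing $\dual{f_t}$ and $t$ are mutually inverse, using reachability of $\delta$ to get surjectivity of $f_t$, so that the graph of $t$ has a section of its second projection), then invokes \cref{thm:relational-hm} to obtain a cellular bisimulation $\rho$ on a relation containing $(x_0,y_0)$, and finally factors $f_t = q_2\comp r$ through $\U{\rho}$ to conclude $f_t$ is pre-cellular. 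You instead verify $\gamma_2(x)(h) = \delta_2(f(x))(h)$ pointwise: lift $h\in I^{\delta_1(f(x))}$ to a configuration $c$ via \cref{lem:lifting-local-configurations} (the hypothesis of that lemma is trivial here since freeness makes $\gamma_1(x)\comp i$ injective), note $G_\gamma(c)(x) = \gamma_2(x)(h)$ and, via the section identity, $t(c)\comp\delta_1(f(x))\comp i = t(c)\comp f\comp\gamma_1(x)\comp i = c\comp\gamma_1(x)\comp i = h$, so $G_\delta(t(c))(f(x)) = \delta_2(f(x))(h)$; then $\timedia\spacedia{m}s$ pins these two values to be equal. This is essentially the content of \cref{lem:configuration-test-formula,lem:local-rule-test-formula} specialised to the functional, free setting, and it buys you a shorter, self-contained argument that does not depend on \cref{thm:relational-hm} or on the inverse property of $\dual{f_t}$ (hence makes lighter use of reachability of $\delta$). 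What the paper's detour buys in exchange is the explicit cellular bisimulation witnessing the equivalence, which it reuses elsewhere; your proof produces only the cellular morphism the theorem actually asks for.
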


Before discussing the proof of \cref{theo:hm_theorem}, we first state its relational counterpart, as we can use the latter in the proof.
This theorem is the direct analogue of the classical Hennessy-Milner theorem, as it establishes that logical equivalence for cellular automata is a cellular bisimulation relation.
\begin{theorem}
  \label{thm:relational-hm}
Let $(\gamma, x_{0})$ and $(\delta, y_{0})$ be based cellular automata and $(R, p)$ a logical equivalence between them with a section $s_{2}$ of $p_{2}$.
  Then there is a cellular bisimulation $(\rho, q, \Cons, p', g)$ with $(x_{0}, y_{0}) \in \U{\rho}$ and a map $h \from R \to \Cons$ with $p_{k}' \comp h = p_{k}$.
\end{theorem}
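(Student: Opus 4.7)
My plan is to construct the promised cellular bisimulation by taking $\rho$ to be the orbit of $(x_{0}, y_{0})$ under the diagonal $M$-action on $X \times Y$, transferring the local rules of $\gamma$ and $\delta$ pointwise to $\rho$, and then applying the consistent-configuration construction of \cref{ex:consistent-configuration-bisimulation} to obtain the relation $\Cons$ on configurations. The required map $h \from R \to \Cons$ will then be obtained by pairing the projections of $R$.

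Concretely, I set $\U{\rho} := \setDef{(\gamma_{1}(x_{0})(m), \delta_{1}(y_{0})(m))}{m \in M} \subseteq X \times Y$ with projections $q_{k}$. A short calculation with the Eilenberg--Moore laws shows that the diagonal action $\rho_{1}(u, v)(n) := (\gamma_{1}(u)(n), \delta_{1}(v)(n))$ restricts to $\U{\rho}$ and makes it an $M$-set containing $(x_{0}, y_{0})$, with $q_{1}, q_{2}$ equivariant; in particular $I^{\gamma_{1}(u)}$ and $I^{\delta_{1}(v)}$ both embed into $I^{\rho_{1}(u, v)}$. To turn this into a pre-cellular bisimulation I must equip each $(u, v) \in \U{\rho}$ with a local rule $\rho_{2}(u, v) \from I^{\rho_{1}(u, v)} \to S$ that restricts to $\gamma_{2}(u)$ on $I^{\gamma_{1}(u)}$ and to $\delta_{2}(v)$ on $I^{\delta_{1}(v)}$. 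This reduces to a single compatibility condition: for every $f \in I^{\gamma_{1}(u)} \cap I^{\delta_{1}(v)}$, $\gamma_{2}(u)(f) = \delta_{2}(v)(f)$; outside $I^{\gamma_{1}(u)} \cup I^{\delta_{1}(v)}$ I will simply pick an arbitrary value, using that $S$ is inhabited.

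The main obstacle is precisely this local-rule consistency, and it is where the section $s_{2}$ is essential. Given $f$ in the intersection, I would use \cref{lem:lifting-local-configurations} to produce $c_{\delta} \in \dual{\U{\delta}}$ with $c_{\delta} \comp \delta_{1}(v) \comp i = f$ and set $c := s_{2}(c_{\delta}) \in R$, so that $p_{2}(c) = c_{\delta}$. Writing $u = \gamma_{1}(x_{0})(m)$ and $v = \delta_{1}(y_{0})(m)$, I would test each value $f(n)$ against the atomic formula $\spacedia{n \monop m} s$ at the base points $(x_{0}, y_{0})$ to obtain, via logical equivalence of $c$, that $p_{1}(c) \comp \gamma_{1}(u) \comp i = f$ as well. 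With this common lift in place, \cref{lem:local-rule-test-formula} together with the semantics of $\spacedia{m}$ converts both equalities $\gamma_{2}(u)(f) = s$ and $\delta_{2}(v)(f) = s$ into satisfaction of the single formula $\spacedia{m}\bigl(\infconj_{n \in N} \spacedia{n} f(n) \to \timedia s\bigr)$ at $(x_{0}, p_{1}(c))$ and $(y_{0}, p_{2}(c))$, respectively; logical equivalence of $c$ at the base points then forces them to coincide.

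Once $(\rho, q)$ is in hand, \cref{ex:consistent-configuration-bisimulation} upgrades it directly to a cellular bisimulation $(\rho, q, \Cons, \ConsProj, \ConsCoalg)$. I then define $h \from R \to \Cons$ by $h(c) := (p_{1}(c), p_{2}(c))$. To check that this pair sits in the pullback $\Cons$, I must verify $p_{1}(c)(u) = p_{2}(c)(v)$ for each $(u, v) = (\gamma_{1}(x_{0})(m), \delta_{1}(y_{0})(m)) \in \U{\rho}$; this follows by testing $\spacedia{m} s$ at $(x_{0}, y_{0})$ for every $s \in S$ and invoking logical equivalence. The identities $\ConsProj_{k} \comp h = p_{k}$ then hold by construction, completing the argument.
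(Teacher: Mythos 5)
Your construction is correct and is essentially the proof in the paper: you take the orbit of $(x_{0},y_{0})$ under the diagonal action as the carrier of $\rho$, transfer the local rules of $\gamma$ and $\delta$, establish their agreement by lifting a local configuration to a global one (\cref{lem:lifting-local-configurations}), pulling it into $R$ via the section $s_{2}$, and transporting the test formulas of \cref{lem:configuration-test-formula,lem:local-rule-test-formula} across the logical equivalence; you then invoke \cref{ex:consistent-configuration-bisimulation} and obtain $h$ from the pullback property of $\Cons$, exactly as in Steps 1--3 of the paper's argument.

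The one place where you genuinely diverge is the definition of $\rho_{2}(u,v)$ on all of $I^{\rho_{1}(u,v)}$. The paper first proves the stronger structural fact $I^{\rho_{1}(u,v)} \subseteq I^{\gamma_{1}(u)}$ (and symmetrically for $\delta$), i.e.\ that the two actions make the \emph{same} identifications on $N$ at related cells; this is its most delicate sub-argument, requiring two distinct states and a specially crafted configuration $c_{2}$ pulled back through $s_{2}$. You instead define $\rho_{2}(u,v)$ piecewise --- $\gamma_{2}(u)$ on $I^{\gamma_{1}(u)}$, $\delta_{2}(v)$ on $I^{\delta_{1}(v)}$, arbitrary elsewhere --- needing only compatibility on the intersection. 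This shortcut is legitimate: the conditions $Cq_{k}\comp\rho = \gamma\comp q_{1}$ (resp.\ $\delta\comp q_{2}$) constrain $\rho_{2}(u,v)$ only on the images of the canonical inclusions $t^{\rho_{1}(u,v)}_{q_{k}}$, so no claim about the rest of $I^{\rho_{1}(u,v)}$ is required, and inhabitedness of $S$ covers the remaining values. What you lose is the (independently interesting) fact that logically equivalent cells have identical neighbourhood collapsing; what you gain is that the hardest computation in the paper's proof becomes unnecessary for the theorem as stated.
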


The proof of \cref{thm:relational-hm} relies only on the formulas in \cref{lem:configuration-test-formula,lem:local-rule-test-formula},  which in turn only employ conjunction over the neighbourhood.
Thus, the general infinitary disjunction of our logic is unnecessary to establish the correspondence between logical and behavioural equivalence, and so we may restrict it to only range over the neighbourhood.
We discuss this further in \cref{sec:conclusion}.

The existence of a section for $p_{2}$ in \cref{thm:relational-hm} is seemingly strong, but it is fulfilled in important case of \cref{theo:hm_theorem}, as we will see next.
We leave it for future work to identify potential improvements of this result.
\begin{proof}[\Cref{theo:hm_theorem}]
  We provide only an outline of the proof.
  Given a logical transport $t \from (\gamma, x_0) \logTrans (\delta, y_0)$, we will construct a based cellular morphism $\bar{t} \from (\gamma, x_0) \to (\delta, y_0)$ with $\pi_{2}(\bar{t}) = t$.
  This means that we have to show that there exists a (necessarily unique) based pre-cellular morphism $f_t \from (\gamma, x_0) \to (\delta, y_0)$, such that $t$ is a section of $\dual{f_t}$.
  For brevity, let us write $X = \U{\gamma}$ and $Y = \U{\delta}$.

  First, note that since $(\gamma, x_0)$ is based and reachable, every cell $x \in X$ has the form $\gamma_1(x_0)(m)$ for some $m\in M$.
  Since $\gamma_{1}$ is free, this $m$ is unique and we can define $f_{t}$ as the unique map with the following graph.
  \begin{equation*}
    \setDef{(x, \delta_{1}(y_{0})(m)) \in X \times Y}{m \in M, x = \gamma_{1}(x_{0})(m)}
  \end{equation*}
  Moreover, $f_t$ is equivariant by definition.
  One then shows that $t$ and $\dual{f_{t}}$ are inverses of each other, which follow from the definition of logical transport, reachability, and freeness of $\gamma_{1}$.
  This implies that the graph of $t$ has a section for its second projection.
  Moreover, it is a logical equivalence and \cref{thm:relational-hm} yields a cellular bisimulation $(\rho, q, \Cons, p', g)$ with $(x_{0}, y_{0}) \in \U{\rho}$.
  Finally, we define $r \from X \to \U{\rho}$ as follows, again appealing to $\gamma_{1}$ being free and to reachability.
  \begin{equation*}
    r(x) = \rho_{1}(x_{0}, y_{0})(m), \quad \text{for } x = \gamma_{1}(x)(m)
  \end{equation*}
  This map is a pre-cellular morphism and we have $q_{2} \comp r = f_{t}$, which means that $f_{t}$ is a pre-cellular morphism and the pair $(f_{t}, t)$ is a cellular morphism.
  \qed
\end{proof}

We mention two use cases of~\Cref{theo:hm_theorem}. First, we reemphasize that in order to prove that two cells are inequivalent under cellular bisimilarity, it suffices to exhibit a modal formula which distinguishes them, that is, a formula which is satisfied in one structure and falsified in the other.
In another direction, our result enables one to show that particular classes of cellular automata are \emph{not} axiomatisable by exploiting using standard techniques from model theory.
We illustrate the non-axiomatisability of one-way cellular automata in the following example, cf.~\cite{Roka94}.

\begin{example}[Inexpressability of one-way cellular automata]
\label{example:oneway}
A based and reachable cellular automaton~$(\gamma, x_0)$ is \emph{one-way} if, for each cell~$x$, 
$\gamma_1(x)(m) = x$ implies~$m =\monid$. 
For instance, the cellular automata~$(\gamma, w)$ of~\Cref{fig:intro_example_general_ca} is not one-way because
$\gamma_1(x)(r\monop\ell) = x$ and $r\monop\ell\neq\monid$.

For the sake of contradiction, suppose that~$\Phi$ axiomatises one-way cellular automata. 
Write~$\varphi$ for the conjunction over~$\Phi$.
It is enough to exhibit a one-way cellular automaton~$\gamma$ along with a cellular morphism~$\gamma\to\delta$ with~$\delta$ failing to be one-way since, by~\Cref{theo:hm_theorem} we will then have that~$(\gamma, x_0)\sems\varphi$ if and only if~$(\delta, y_0)\sems\varphi$.
This leads to a contradiction with the assumption that~$\varphi$ axiomatises one-way cellular automata.

To this end, let~$(\gamma, x_0)$ and~$(\delta, x_0)$ be the cellular automata on the additive monoid~$\Z_2$ and state set~$\{*\}$ with the monoid actions depicted below.
\begin{equation*}
\centering
\begin{tikzcd}[arrow style=tikz]
x_0
            \ar[r, bend right, "1" '] \ar[loop above, "0"]
&x_1
            \ar[l, bend right, "1" '] \ar[loop above, "0"]
\end{tikzcd}
\qquad\qquad
\begin{tikzcd}[arrow style=tikz]
x_0
            \ar[r, "1"] \ar[loop above, "0"]
&x_1 \ar[loop above, "{0,1}"]
\end{tikzcd}
\end{equation*}
The identity assignment is clearly a based cellular morphism~$(\gamma, x_0)\to(\delta, x_0)$.
Note that~$(\gamma, x_0)$ is one-way, but~$(\delta, x_0)$ is
not one-way because~$\delta_1(x_1)(1) = x_1$.
\end{example}

\subsection*{Scope of \cref{thm:relational-hm}}

In \cref{thm:relational-hm}, we additionally assumed that the projection $p_{2}$ of a logical equivalence has a section.
This is a non-trivial property.
It generally holds in the setting of \cref{theo:hm_theorem}, but one may wonder whether this conditions holds also for non-free systems.
The following example constructs a cellular bisimulation for non-free CA, such that \cref{thm:bisimulation-to-logical-equivalence} yields a logical equivalence for which the section condition is fulfilled.
Simultaneously, the example shows that \cref{thm:relational-hm} applies to CA that are neither free, reachable nor uniform.

We work with the monoid $(\N, +, 0)$ of natural numbers and states $S = \N$.
First, we let $X = \Z_{2} \times \Z_{6}$ and define the action $\gamma_{1}(n, m)(k) = (n, m + k \bmod 6)$, which looks like two copies of circles, see \cref{fig:bisimilar-sub-ca} on the left.
\usetikzlibrary {perspective}
\begin{figure}[ht]
  \centering
  \begin{tikzpicture}[x=0.8cm, y=0.8cm, 3d view]
    \draw[color=green!70!black] (0,0,0) circle (1);
    \draw(0,0,1) circle (1);
    \pgfmathsetmacro\n{6}
    \foreach \i in {0,1,...,5} {
      \pgfmathsetmacro\r{\i*(360/\n)};
      \fill ($(\r:1) + (0,0,0)$) circle (1pt) coordinate (c0-\i);
      \fill ($(\r:1) + (0,0,1)$) circle (1pt) coordinate (c1-\i);
      \pgfmathsetmacro\p{\r - 10};
      \node (n1-\i) at ($(\p:1.5) + (0,0,0)$) {\i};
    };
    \foreach \i in {0,1} {
      \draw[thick,red,->] ([shift=(10:1.5)]0,0,\i) arc (3:35:1.5) node[pos=0.7, above right=-0.1]{$+1$};
    };
  \end{tikzpicture}
  \qquad
  \begin{tikzpicture}[x=0.8cm, y=0.8cm]
    \draw[color=green!70!black] (0,0) circle (1);
    \pgfmathsetmacro\n{6}
    \foreach \i in {0,1,...,5} {
      \pgfmathsetmacro\r{\i*(360/\n)}
      \fill (\r:1) circle (1pt) coordinate (c-\i);
      \pgfmathsetmacro\p{\r + 10}
      \node (n-\i) at (\p:1.2) {\i};
    };
    \draw[thick,red,->] ([shift=(5:1.4)]0,0) arc (3:35:1.4) node[midway, right]{$+1$};
    \fill ($(c-0) + (0.5, -0.6)$) circle (1pt) coordinate (a);
    \node at ($(a) + (0.2, -0.2)$) {a};
    \draw[->, shorten >= 2pt] (a) to (c-0);
  \end{tikzpicture}

  \caption{Two CA with cellularly bisimilar subsystems coloured in green}
  \label{fig:bisimilar-sub-ca}
\end{figure}
Next, we set $Y = \Z_{6} + \set{a}$ and define an action by
\begin{align*}
  & \delta_{1}(n)(k) = n + k \bmod 6, \text{ if } n \in \Z_{6} \\
  & \delta_{1}(a)(k) =
    \begin{cases}
      a, & k = 0 \\
      k' \bmod 6, & k = k' + 1
    \end{cases}
\end{align*}
This can be pictured like a discrete circle with a lasso incoming at $0$ from $a$, see \cref{fig:bisimilar-sub-ca} on the right.
The local rules do not matter for the time being and can be chosen to be constant everywhere.

We define a relation $Q \subseteq X \times Y$ by
\begin{equation*}
  Q = \setDef{((0, n), n)}{n \in \Z_{6}}
\end{equation*}
with action $\rho_{1}((0, n), n)(k) = ((0, n + k), n + k)$ and constant local rules.
This makes the projections $q_{1} \from Q \to X$ and $q_{2} \from Q \to Y$ evidently pre-cellular morphisms and we thus obtain a pre-cellular bisimulation.

\Cref{ex:consistent-configuration-bisimulation} yields the canonical cellular bisimulation $(\rho, q, \Cons, \ConsProj, \ConsCoalg)$ of consistent configurations.
We use its definition as a pullback in order to construct a section for $\ConsProj_{2}$, by producing a map $t$ that renders the following diagram commutative.
\begin{equation*}
  \begin{tikzcd}
    \dual{X} \rar{t} \dar[swap]{\id}
    \ar[dr, phantom, "\pullback", very near start]
    & \dual{Y} \dar{\dual{q_{2}}} \\
    \dual{X} \rar{\dual{q_{1}}}
    & \dual{Q}
  \end{tikzcd}
\end{equation*}
This gives then, by the mapping property of pullbacks, a map $s_{2} \from \dual{X} \to \Cons$.
We define $t$ on $c \in \dual{X}$ and $u \in Y$ as follows.
\begin{equation*}
  t(c)(u) =
  \begin{cases}
    c(0, u), & u \in \Z_{6} \\
    0, & u = a
  \end{cases}
\end{equation*}
We then have for all $c \in \dual{X}$ and $((0, n), n) \in Q$ that
\begin{equation*}
  (\dual{q_{2}} \comp t)(c)((0, n), n)
  = t(c)(n)
  = c(0, n)
  = (\dual{q_{1}} \comp \id)(c)((0, n), n) \, .
\end{equation*}
This gives us $s_{2}$ with $\ConsProj_{2} \comp s_{2} = \id$ by the pullback property and, combined with \cref{thm:bisimulation-to-logical-equivalence}, $(\Cons, \ConsProj)$ is a logical relation that fulfils the conditions of \cref{thm:relational-hm}.
In other words, the conditions of \cref{thm:relational-hm} are fulfilled also in non-free and non-reachable cases, but it seems to be non-trivial.

Finally, we can equip the cellular automata also with non-uniform rules.
For instance, we can define
\begin{equation*}
  \gamma_{2}(n, m)(f) = (f(0) + f(1)) \bmod (m + 1)
\end{equation*}
and
\begin{align*}
  & \delta_{2}(a)(f) = f(0) + 1 \\
  & \delta_{2}(m)(f) = (f(0) + f(1)) \bmod (m + 1), \quad (k\in\Z_{6}) \, .
\end{align*}
It is clear that, since $Q$ relates only two circle-shaped subsystems of the two CA with precisely the same local rules, $Q$ remains a cellular bisimulation in this case.
This shows that \cref{thm:relational-hm} applies to CA that are neither free, reachable nor uniform.
Similarly, one can easily find examples that demonstrate that also \cref{theo:hm_theorem} applies in all these cases, except that the relation between two CA has to be given by a map instead of a relation.

%
%
\section{Conclusion and Future Work}
\label{sec:conclusion}

We have described a novel coalgebraic model of cellular automata that, unlike the traditional notion of cellular automata, permits non-uniformity in the local rules and neighbourhood structures of individual cells.
One of the primary benefits of our coalgebraic model is that it supports a robust concept of cellular morphism between cellular automata, which, to the best of our knowledge, has been absent from the literature until now.
We have introduced a modal logic for reasoning about the global behaviour of cellular automata.
Our language involves two types of unary modalities: spatial modalities indexed by monoid elements provide access to reachable cell states within a given configuration, while our update modality allows access to cell states after application of the global rule.
Thus, our language is meant to express dynamic properties of cellular automata such as periodicity or nilpotency.
We provide a Hennessy-Milner style theorems (\cref{theo:hm_theorem,thm:relational-hm}) which relate logical and behavioural equivalence.
In particular, we show that cells are identified under cellular morphisms precisely if they satisfy the same modal formulas.

We mention several directions for future work.
We first note that careful inspection of the construction of our coalgebraic type functor suggests that our model of cellular automata might be adapted to base categories with finite limits and Cartesian closed structure, or perhaps even symmetric monoidal closed structure.
This more general landscape includes, for example, the symmetric monoidal closed category of $1$-bounded metric spaces and non-expansive maps.
Extending our framework there might afford a quantitative comparison of cellular automata in the form of a behavioural pseudometric~\cite{BBK+18:CoalgebraicBehavioralMetrics}.
We would then also be able to capture linear or continuous cellular automata~\cite{Bhattacharjee20} by instantiation.

Our coalgebraic perspective on cellular automata also opens up the possibility of various extensions.
In particular, it should be possible to integrate computational effects by adding a monad component~\cite{Jacobs12:TraceSemanticsDeterminization,SBB+13:GeneralizingDeterminizationAutomata}.
This would give access to, for example, probabilistic behaviour that can be used to solve the leader election problem without identifiers~\cite{BFP+08:LeaderElectionAnonymous}.

An interesting aspect of the global rule is its definition in terms of an evaluation map. 
This map can be seen as a distributive law of the constant functor over $C$ and then potentially related to bialgebras~\cite{Klin09:BialgebraicMethodsModal}.
However, evaluation functions also appear in coalgebraic approaches to behavioural metrics~\cite{BBK+18:CoalgebraicBehavioralMetrics} and it would be beneficial to understand if metrics and global behaviour are formally related.

Notions of behavioural equivalence have traditionally been characterised not only in dedicated modal logics, as we do in the current paper, but also in terms of behavioural equivalence games~\cite{Stirling99}.
We hope to employ recent work on coalgebraic equivalence games~\cite{FordEA22,KonigEA20} in the development of a game-theoretic perspective on cellular behavioural equivalence.

Finally, our logic features infinitary operations, which is not strictly necessary for completeness of logical equivalence as we discussed after \cref{thm:relational-hm}.
However, the examples in \cref{sec:logic} show that most interesting properties of cellular automata do require infinitary disjunction and conjunction if those properties are to be expressed by one formula.
With that being said, these properties could also be expressed by integrating recursion, in the style of the modal $\mu$-calculus or branching time logic, into our logic.
We would expect that the resulting logic would be sufficiently expressive for reasoning about long-term behaviour, just as modal $\mu$-calculi~\cite{BS07:ModalMuCalculi} are.
Thus, an enticing direction for future work is to devise fixed point logics for cellular automata in terms of our coalgebraic framework, which might require slight modifications to take the global behaviour into account.
This can give access to powerful model checking techniques~\cite{HHP+24:GenericModelChecking,KCM+24:CoalgebraicCTLFixpoint}.


\bibliography{references.bib}
\clearpage
\appendix
%
\section{Details for~\cref{sec:prelim}}
\takeout{
Fix a monoid~$(M, \monop, \monid)$. We briefly sketch the
details that the functor~$[M, -]\colon\Set\to\Set$ carries the
structure of a comonad with counit
\todo{Maybe we omit this since it is well known?}
\begin{equation}\label{eq:eps}
\eps\colon[M, X]\to X,\qquad (f\colon M\to X)\mapsto f(\monid)
\end{equation}
and comultiplication~$\delta\colon[M, [M, X]]\to [M, X]$ defined
for all~$f\colon M\to [M, X]$ by
\begin{equation}\label{eq:delta}
\delta(f)(m)(n) = f(n\monop m).
\end{equation}
This is just a recapitulation of the material from~\autoref{sec:prelim}.
For convenience, we also recall from~\cref{expl:exponent} that the
action of~$[M, -]$ on a map~$f\colon X\to Y$ is given by pre-composition:
\begin{equation}\label{eq:exponent-action}
[M, f](M\xra{m} X) = f\cdot m.
\end{equation}

\begin{lemma}
The mappings~$\eps$ and~$\delta$ are the components of
a natural transformation.
\end{lemma}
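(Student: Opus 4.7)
The plan is to verify the two naturality squares separately by direct elementwise computation, exploiting that the internal hom $[M,-]$ acts on a map $f\colon X\to Y$ by post-composition, i.e., $[M,f](g) = f\comp g$ for $g\colon M\to X$.

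For the counit $\eps$, I would fix $f\colon X\to Y$ and an arbitrary element $g\colon M\to X$ of $[M,X]$, then evaluate the two paths of the naturality square at $g$: the top-then-right composite yields $\eps_Y([M,f](g)) = (f\comp g)(\monid) = f(g(\monid))$, while the left-then-bottom composite yields $f(\eps_X(g)) = f(g(\monid))$. These agree by definition of function composition, so $\eps_Y\comp [M,f] = f\comp \eps_X$.

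For the comultiplication $\delta$, I would fix $f\colon X\to Y$, an arbitrary $g\colon M\to X$, and arbitrary $m,n\in M$, and then compute both composites applied to $g$ and evaluated at $m$ and $n$. The path through $[M,f]$ followed by $\delta_Y$ yields
\begin{equation*}
\delta_Y([M,f](g))(m)(n) = \delta_Y(f\comp g)(m)(n) = (f\comp g)(n\monop m) = f(g(n\monop m)),
\end{equation*}
and the path through $\delta_X$ followed by $[M,[M,f]]$ yields
\begin{equation*}
[M,[M,f]](\delta_X(g))(m)(n) = \bigl([M,f]\comp \delta_X(g)\bigr)(m)(n) = f\bigl(\delta_X(g)(m)(n)\bigr) = f(g(n\monop m)),
\end{equation*}
matching the first expression. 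Thus $\delta_Y\comp [M,f] = [M,[M,f]]\comp \delta_X$.

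There is no substantive obstacle here; the result reduces to unfolding the definitions of $\eps$, $\delta$, and the post-composition action of $[M,-]$, together with associativity of function composition. The only point requiring a little care is keeping track of the doubly-iterated internal hom $[M,[M,-]]$ when computing the action of $[M,[M,f]]$ on an element of $[M,[M,X]]$, which is why I would introduce the double argument $(m,n)$ explicitly in the verification of naturality for $\delta$.
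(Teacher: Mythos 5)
Your proof is correct and follows essentially the same route as the paper's: both verify the two naturality squares by direct elementwise computation, unfolding the definitions of $\eps$, $\delta$, and the post-composition action of $[M,-]$, with the only non-trivial bookkeeping being the double evaluation at $(m,n)$ for the $\delta$ square. Nothing is missing.
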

\begin{proof}
\emph{Naturality of~$\eps$}:
Our goal is to show that the following square commutes for
a map~$f\colon X\to Y$:
\begin{equation*}
\begin{tikzcd}[column sep = 35]
{[M, X]}\ar[d, swap, "\eps"]\ar[r, "{[M, f]}"]	&{[M, Y]}\ar[d, "\eps"] \\
      X \ar[r,  "f"]						& Y
\end{tikzcd}
\end{equation*}
This follows readily from associativity of composition and from
the definitions from~(\ref{eq:exponent-action}) and~(\ref{eq:eps}).
In detail, given a map~$k\colon M\to X$, we compute as follows:
\begin{align}
\eps_Y\cdot[M, f](k) &= \eps_Y\cdot(f\cdot k)	\tag{by~(\ref{eq:exponent-action})} \\
			       &= (f\cdot k)(\monid)		\tag{by~(\ref{eq:eps})} 		      \\
			       &= f(k(\monid))			\tag{by associativity}			       \\
			       &= f(\eps_X(k))			\tag{by~(\ref{eq:eps})}		      \\
			       &= f\cdot\eps_X(k)		\tag{by associativity}
\end{align}
We conclude that~$\eps$ is a natural transformation, as required.

~\newline
\noindent
\emph{Naturality of~$\delta$}: We will show that the following
square commutes for a map~$f\colon X\to Y$:
\begin{equation*}
\begin{tikzcd}[column sep = 45]
{[M, X]}\ar[d, swap, "\delta_X"]\ar[r, "{[M, f]}"]		&{[M, Y]}\ar[d, "\delta_Y"] 	\\
{[M, [M, X]]} \ar[r, "{[M, [M, f]]}"]			&{[M, [M, Y]]}
\end{tikzcd}
\end{equation*}
Indeed, given a map~$k \colon M \to X$ we compute as follows:
\begin{align*}
 (f^\mon)^\mon \circ \delta_X (k)  & = (f^\mon)^\mon (\lam{m}{\lam{n}{k(n\monop m)}}) 	\tag{by~(\ref{eq:delta})}	\\
           					  & = \lam{m}{f^\mon \circ \lam{n}{k(n \monop m)}}		\tag{} \\
  						  & = \lam{m}{\lam{n}{(f \circ k)(n \monop m)}} 			\tag{} \\
            					  & = \delta_Y(f \circ k)							\tag{} \\
            					  & = (\delta_Y \circ f^\mon)(k)						\tag{}
\end{align*}
We conclude that~$\delta$ is a natural transformation, as desired.
\end{proof}

\begin{lemma}
$\cowriterfunc\colon\Set\to\Set$ is a comonad with counit~$\eps$ and
comultiplication~$\delta$.
\end{lemma}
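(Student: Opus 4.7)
The plan is a direct verification of the three diagrams in~\eqref{dia:comonad_laws}. Since the naturality of~$\eps$ and~$\delta$ has already been established above, what remains is to check pointwise that the counit and coassociativity laws hold, and these will reduce to the monoid unit and associativity laws respectively.

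First I would check the two counit diagrams. Fix a set~$X$ and a map~$f\colon M\to X$. For the left counit square, I compute that for each $m\in M$,
\begin{equation*}
\bigl([M,\eps_X]\comp\delta_X\bigr)(f)(m) = \eps_X\bigl(\delta_X(f)(m)\bigr) = \delta_X(f)(m)(\monid) = f(\monid\monop m) = f(m),
\end{equation*}
where the last equality uses the left unit law $\monid\monop m = m$ of the monoid. For the right counit square, I compute for each $n\in M$,
\begin{equation*}
\bigl(\eps_{[M,X]}\comp\delta_X\bigr)(f)(n) = \delta_X(f)(\monid)(n) = f(n\monop\monid) = f(n),
\end{equation*}
using the right unit law. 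Both sides therefore agree with~$f$, as required.

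Next, I would verify the coassociativity diagram. Given $f\colon M\to X$ and $m,n,p\in M$, I compute both composites evaluated at~$m$, then~$n$, then~$p$:
\begin{equation*}
\bigl([M,\delta_X]\comp\delta_X\bigr)(f)(m)(n)(p)
 = \delta_X\!\bigl(\delta_X(f)(m)\bigr)(n)(p)
 = \delta_X(f)(m)(p\monop n)
 = f\bigl((p\monop n)\monop m\bigr)
\end{equation*}
and
\begin{equation*}
\bigl(\delta_{[M,X]}\comp\delta_X\bigr)(f)(m)(n)(p)
 = \delta_X(f)(n\monop m)(p)
 = f\bigl(p\monop(n\monop m)\bigr).
\end{equation*}
The two are equal by the associativity law of the monoid, so the right-hand diagram of~\eqref{dia:comonad_laws} commutes.

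I do not anticipate any real obstacle here; the argument is a short bookkeeping exercise in which each of the three comonad axioms is witnessed by exactly one of the three monoid axioms (left unit, right unit, and associativity). The only care required is to keep straight the two different natural transformations $F\eps = [M,\eps_X]$ and $\eps F = \eps_{[M,X]}$, and similarly $F\delta$ versus $\delta F$, since they contribute the ``inner'' and ``outer'' occurrences of~$\monop$ in the calculations above.
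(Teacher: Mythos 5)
Your proof is correct and follows essentially the same route as the paper's: a direct pointwise verification of the counit and coassociativity diagrams, reducing each to the corresponding monoid axiom (with naturality of $\eps$ and $\delta$ handled separately beforehand, as the paper also does). The computations match the paper's up to notation, so there is nothing to add.
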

\begin{proof}
The comonad coassociativity law (right-hand diagram in~\eqref{dia:comonad_laws})
requires that the following diagram commutes:
    \begin{equation}
        \begin{tikzcd}
            X^\mon
                \ar[r, "\delta_X"]
                \ar[d, swap, "\delta_X"]
            &
            (X^\mon)^\mon
                \ar[d, "\delta_{X}^\mon"]
            \\
            (X^\mon)^\mon
                \ar[r, swap, "\delta_{X^\mon}"]
            &
            ((X^\mon)^\mon)^\mon
        \end{tikzcd}
    \end{equation}
    For an input $k \colon \mon \to X$, the upper path evaluates to:
    \begin{align*}
        (\delta_{X^\mon} \circ \delta_X)(k)
            & = \lam{u}{\lam{v}{\delta_X(k)(v \monop u)}}
            \aligncomment{Definition $\delta$
                     (see \eqref{eq:cowriter_comult}).}
            \\
            & = \lam{u}{\lam{v}{(\lam{m}{\lam{n}{k(n \monop m)}})(v \monop
            u)}}
            \aligncomment{Definition $\delta$
                     (see \eqref{eq:cowriter_comult}).}
            \\
            & = \lam{u}{\lam{v}{\lam{n}{k(n\monop v \monop u)}}}.
    \end{align*}
    This indeed agrees with the lower path:
    \begin{align*}
        (\delta_X^\mon \circ \delta_X)(k)
            & = \delta_X^\mon\left[
                \lam{u}{\lam{w}{k(w \monop u)}}\right]
            \aligncomment{Definition $\delta$
                     (see \eqref{eq:cowriter_comult}).}
            \\
            & = \lam{u}{\delta_X (\lam{w}{k(w \monop u)}})
            \aligncomment{Definition $\hole^\mon$ %
                     (see \eqref{eq:cowriter_on_morph}).}
            \\
            & = \lam{u}({\lam{m}{
                \lam{n}{(\lam{w}{k(w \monop u)})(n \monop m))} }}
            \aligncomment{Definition $\delta$
                     (see \eqref{eq:cowriter_comult}).}
            \\
            & = \lam{u}{\lam{m}{\lam{n}{k(n \monop m \monop u)}}},
            \\
            & = \lam{u}{\lam{v}{\lam{n}{k(n \monop v \monop u)}}},
            \aligncomment{Rename bound variables.}
    \end{align*}
    which shows that both paths in the diagram indeed agree.

    Finally, we need to show that the comonad counit law
    (left diagram in~\eqref{dia:comonad_laws}) holds:
    \begin{equation}
        \begin{tikzcd}[row sep = large]
            &
            X^\mon
                \ar[dl, swap, "\id"]
                \ar[d, "\delta_X"]
                \ar[dr, "\id"]
            \\
            X^\mon
            & (X^\mon)^\mon
                \ar[l, "\eps_{(X^\mon)}"]
                \ar[r,swap, "\eps_X^\mon"]
            & X^\mon
        \end{tikzcd}
    \end{equation}
    To this end, we show for both inner triangles separately that they commute
    on any input $k \colon \mon \to X$.
    The following computation shows that the right triangle commutes:
    \begin{align*}
        \eps_X^\mon(\delta_X(k))
            & = \eps_X^\mon (\lam{m}{\lam{n}{k(n \monop m)}})
            \aligncomment{Definition $\delta$ (see \eqref{eq:cowriter_comult}).}
            \\
            & = \lam{m}{\eps_X(\lam{n}{k(n \monop m)})}
            \aligncomment{Definition $\hole^\mon$
                     (see \eqref{eq:cowriter_on_morph}).}
            \\
            & = \lam{m}{k(\monid \monop m)}
            \aligncomment{Definition $\eps$ (see \eqref{eq:cowriter_counit}).}
            \\
            & = \lam{m}{k(m)}
            \\
            & = k.
    \end{align*}
    Commutativity of the left triangle follows by a similar
    computation,
    using the definitions of $\eps$ and $\delta$
    (\eqref{eq:cowriter_counit} and \eqref{eq:cowriter_comult}):
    \begin{align*}
        \eps_{(X^\mon)}(\delta_X(k))
        & = \eps_{(X^\mon)} (\lam{m}{\lam{n}{k(n \monop m)}})
        \\
        & = \lam{n}{k(n \monop \monid)}
        \\
        & = \lam{n}{k(n)}
        \\
        & = k.
    \end{align*}
  \end{proof}
}

\begin{proof}[\Cref{lemma:cowriter_is_left_action} on \cpageref{lemma:cowriter_is_left_action}]
An Eilenberg-Moore coalgebra~$\gamma\colon X\to\cowriterfunc X$ may be identified with its
transpose~$\lTransp{\gamma}\colon X\times M\to X$ defined by $\lTransp{\gamma}(x, m) = \gamma(x)(m)$.
Let
\begin{equation*}
\sigma_{A,B} \from A \times B \to B \times A
\end{equation*}
denote the symmetry isomorphism of the product.
We then define $\overline{\gamma} = \lTransp{\gamma} \comp \sigma_{M,X}$ and show that this is 
a left action of $M$ on $X$.
To this end, first note that
\begin{equation*}
  \overline{\gamma}(\monid, x) = \lTransp{\gamma}(x, \monid) = \gamma(x)(\monid) = \eps(\gamma(x))	= x,
\end{equation*}
using only the definition of~$\lTransp{\gamma}$,~$\eps$, and the left-hand diagram in~(\ref{dia:comonad_coalg_laws}).
It remains to be shown that for all~$n,m\in M$ and~$x\in X$ we have
\begin{equation}
\overline{\gamma}(m, \overline{\gamma}(n, x)) = \overline{\gamma}(m \monop n, x)
\tag{$*$}
\end{equation}
To this end, we compute as follows (eliding the indices on the symmetry isomorphism):
\begin{align*}
  \overline{\gamma}(m, \overline{\gamma}(n, x))
  & =\lTransp{\gamma}(\lTransp{\gamma}(x, n), m) \\
  &= \gamma(\gamma(x)(n))(m)
    \tag*{def. of~$\lTransp{\gamma}$ and $\sigma$}\\
  &= (\cowriterfunc\gamma\comp \gamma)(x)(n)(m)
    \tag*{def.~of~$\cowriterfunc\gamma$}\\
  &= (\delta \comp \gamma)(x)(n)(m)
    \tag*{right diagram in~(\ref{dia:comonad_coalg_laws})}\\
  &= \gamma(x)(m \monop n)
    \tag*{def.~of~$\delta$} \\
  &= \overline{\gamma}(m \monop n, x)
    \tag*{def. of~$\overline{\gamma}$}
\end{align*}
Thus, each Eilenberg-Moore coalgebra of the cowriter comonad induces a left action of~$M$ on its carrier set.

We next show that every homomorphism~$h\colon(X, \gamma)\to(Y, \xi)$ in~$\EM(\cowriterfunc)$ is an
equivariant map~$(X, \overline{\gamma})\to(Y, \overline{\xi})$. Indeed, we have
\begin{align*}
  h(\overline{\gamma}(m, x))
  & = h(\lTransp{\gamma}(x,m))
  = h(\gamma(x)(m))
  = (h \comp \gamma(x))(m)
  = (\cowriterfunc h \comp \gamma)(x)(m) \\
  & = (\xi \comp h)(x)(m)
  = \xi(h(x))(m)
  = \lTransp{\xi}(\sigma(m, h(x)))
  = \overline{\xi}(m, h(x))
\end{align*}
and thus $h$ is equivariant.
Thus, we can map $(X, \chi)$ to $(X, \overline{\chi})$ and $h$ to $h$, which is immediately functorial.
Finally, this mapping is also an isomorphism because taking the transpose and composing with the natural isomorphism $\sigma$ yields an isomorphism of categories.
\end{proof}

%
\section{Details for~\cref{sec:ca}}

\begin{proof}[\Cref{lem:orb-inv} on \cpageref{lem:orb-inv}]
We first show that if $a\colon M\to X$ is injective, then~$E^{a}\cong N$.
In particular, we will show that~$E^{a}$ is the diagonal~$\Delta(N) =\setDef{(n,m)\in N\times N}{n=m}$. 
It is clear that~$\Delta(N)\subseteq E^{a}$.
Moreover,~if~$(n,m)\in E^{a},$ then~$a\comp i(n) = a\comp i(m)$ hence also~$n = m$ since~$a\comp i\colon N\to X$ is injective being the composite of injective maps.  
We conclude that~$E^{a} = \Delta(N)\cong N$ if~$a$ is injective.

We will now show that~$I^a\cong[N,S]$.
To this end, it is enough to show that the diagram
\begin{equation}
\label{eq:lem:orb-inv}
  \begin{tikzcd}
    \intHom{N}{S} \rar{\id}
    & \intHom{N}{S}
    \ar[r, shift left, "c^{a}_{1}"{above}]
    \ar[r, shift right, "c^{a}_{2}"{below}]
    & \intHom{N\times N}{S}
  \end{tikzcd}
\end{equation}
is a equaliser diagram. 
To see that~(\ref{eq:lem:orb-inv}) commutes, we compute as follows for a map~$f\colon N\to S$ and an orbit~$(n,m)\in E^{a} = \Delta(N)$:
\begin{align}
c_1^a(f)(n,m) &= f(p_1^a(n, m))				\tag{by definition of~$c_1^a$}  \\
	              &= f(n)						\tag{by definition of~$p_1^a$} \\
	              &= f(m)					\tag{since~$n=m$} \\
	              &= f(p_2^a(n,m))				\tag{by definition of~$p_2^a$} \\
	              &= c_2^a(f)(n,m)				\tag{by definition of~$c_2^a$}
\end{align}
Finally,~$(\intHom{N}{S}, \id)$ has the universal property of an equaliser.
Indeed, given a set~$Z$ and a map~$z\colon Z\to [N, S]$ with~$c_1^a\comp z = c_2^a\comp z$, then~$z$ itself is the unique map satisfying the identity~$\id\comp z = z$.
We conclude that~(\ref{eq:lem:orb-inv}) is an equaliser diagram.
It follows that~$I^a\cong\intHom{N}{S}$, as desired.
\qed
\end{proof}

\begin{proof}[\Cref{lem:global-rule-natural} on \cpageref{lem:global-rule-natural}]
  Let $h \from \gamma \to \delta$ be a homomorphism and let us just write $h$ instead of $\U{h}$.
  Then we have for all $c \in \dual{\U{\delta}}$ that
  \begin{align*}
    \dual{h} (G_{\delta}(c))
    & = G_{\delta}(c) \comp h
      \tag*{by def. $\dual{h}$} \\
    & = e \comp C c \comp \delta \comp h
      \tag*{by def. $G_{\delta}$} \\
    & = e \comp C c \comp C h \comp \gamma
      \tag*{$h$ homomorphism} \\
    & = e \comp C (c \comp  h) \comp \gamma
      \tag*{$C$ functor} \\
    & = G_{\gamma}(c \comp h)
      \tag*{by def. $G_{\gamma}$} \\
    & = (G_{\gamma} \comp \dual{h})(c)
      \tag*{by def. $\dual{h}$}
  \end{align*}
  and thus $\dual{h} \comp G_{\delta} = G_{\gamma} \comp \dual{h}$.
  Hence, $G_{-}$ is natural.
\qed
\end{proof}
%
\section{Details for~\cref{sec:sim}}

\begin{proof}[\Cref{lem:ca-category} on \cpageref{lem:ca-category}]
  The composition in $\CA$ is defined by
  \begin{equation*}
    (g, r) \comp (f, s) = (g \comp f, r \comp s) \, .
  \end{equation*}
  This is well-defined because $\CApre$ is a category and by functoriality of $\dual{(-)}$:
  \begin{equation*}
    \dual{(g \comp f)} \comp (r \comp s)
    = \dual{f} \comp \dual{g} \comp r \comp s
    = \dual{f} \comp s
    = \id
  \end{equation*}
  The identity morphism is given by $(\id_{\gamma}, \id_{\dual{\U{\gamma}}})$.

  We define $P$ as identity on objects and by $P(f, s) = f$ on morphisms.
  This is clearly a functor by definition.
  Given an isomorphism $f \from \gamma \to \delta$ with inverse $g$ in $\CApre$, we can obtain cellular morphisms $(f, \dual{g})$ and $(g, \dual{f})$ that are inverses of each other by functoriality of $\dual{(-)}$.
\end{proof}

\begin{proof}[\Cref{lem:global-rule-natural-full} on \cpageref{lem:global-rule-natural-full}]
  This follows immediately from \cref{lem:global-rule-natural} and $s$ being a section of $\dual{f}$:
  \begin{equation*}
    \dual{f} \comp G_{\delta} \comp s
    = G_{\gamma} \comp \dual{f} \comp s
    = G_{\gamma}
    \tag*{\qed}
  \end{equation*}
\end{proof}

\begin{proof}[\Cref{lem:behaviour-reflection} on \cpageref{lem:behaviour-reflection}]
  Let $c \in \dual{(\im f)}$.
  Then we have
  \begin{align*}
    (\dual{g} \comp G_{\gamma} \comp \dual{e_{f}})(c)
    & = \dual{g}(G_{\gamma}(c \comp e_{f}))
      \tag*{def. composition} \\
    & = \dual{g}(e \comp C(c \comp e_{f}) \comp \gamma)
      \tag*{def. $G_{\gamma}$} \\
    & = \dual{g}(e \comp Cc \comp \delta' \comp e_{f})
      \tag*{$e_{f}$ pre-morphism} \\
    & = \dual{g} (\dual{e_{f}} (e \comp Cc \comp \delta'))
      \tag*{def. $\dual{e_f}$} \\
    & = (\dual{g} \comp \dual{e_{f}}) (e \comp Cc \comp \delta')
      \tag*{def. composition} \\
    & = \dual{(e_{f} \comp g)} (e \comp Cc \comp \delta')
      \tag*{functoriality of $\dual{(-)}$} \\
    & = e \comp Cc \comp \delta'
      \tag*{$g$ section of $e_{f}$} \\
    & = G_{\delta'}(c)
  \end{align*}
  This shows that $\dual{g} \comp G_{\gamma} \comp \dual{e_{f}} = G_{\delta'}$, as desired.
  \qed
\end{proof}

\begin{proof}[\Cref{prop:pair-based-pre-morphisms-gives-iso} on \cpageref{prop:pair-based-pre-morphisms-gives-iso}]
  Suppose that $f \from (\gamma, x_{0}) \to (\delta, y_{0})$ and $g \from (\delta, y_{0}) \to (\gamma, x_{0})$ are based pre-cellular morphisms.
  That these are inverses only requires reasoning about the monoid action:
  Let $y \in \U{\delta}$.
  Because $\delta$ is reachable, there is $m \in M$ with $\delta_{1}(y_{0})(m) = y$.
  We then have
  \begin{align*}
    f(g(y))
    & = f(g(\delta_{1}(y_{0})(m))) \\
    & = f(\gamma_{1}(g(y_{0}))(m)) \\
    & = f(\gamma_{1}(x_{0})(m)) \\
    & = \delta_{1}(f(x_{0}))(m) \\
    & = \delta_{1}(y_{0})(m) \\
    & = y
  \end{align*}
  By a similar argument, we also have $g \comp f = \id$.
  Hence, $f$ and $g$ are inverses of each other.
  By \cref{lem:ca-category}, these lift to cellular isomorphisms.
  \qed
\end{proof}

\section{Details for~\cref{sec:relation-equivalence}}


\begin{proof}[\Cref{lem:weak-pb-preservation} on \cpageref{lem:weak-pb-preservation}]
  Suppose that we are given the following pullback in $\Set$.
  \begin{equation*}
    \begin{tikzcd}
      X \rar{p^{2}} \dar[swap]{p^{1}} & B \dar{g} \\
      A \rar{f} & D
    \end{tikzcd}
  \end{equation*}
  We will show that the following square is a weak pullback.
  \begin{equation*}
    \begin{tikzcd}
      CX \rar{Cp^{2}} \dar[swap]{Cp^{1}} & B \dar{Cg} \\
      CA \rar{Cf} & CD
    \end{tikzcd}
  \end{equation*}
  To this end, assume that we have a commuting square
  \begin{equation}
    \label{eq:weak-pb-assumption}
    \begin{tikzcd}
      Y \rar{q^{2}} \dar[swap]{q^{1}} & B \dar{Cg} \\
      CA \rar{Cf} & CD
    \end{tikzcd}
  \end{equation}
  and then we wish to show that there is a map $h \from Y \to CX$ with $Cp^{k} \comp h = q^{k}$.

  We proceed in two steps.
  First, we note that $\cowriterfunc$ is a right-adjoint functor and thus preserves all limits.
  This means that we obtain a unique $h_{1} \from Y \to \cowriterfunc X$ with $\cowriterfunc p^{k} \comp h_{1} = q^{k}_{1}$.

  Second, let $y \in Y$ and $f \in I^{h_{1}(y)}$.
  The key is to note that
  \begin{equation*}
    I^{h_{1}(y)} = I^{q^{1}_{1}(y)} \cap I^{q^{2}_{1}(y)} \, ,
  \end{equation*}
  which allows us to apply $q^{k}_{2}(y)$ to $f$.
  By commutativity of \cref{eq:weak-pb-assumption}, we have $q^{1}_{2}(y)(f) = q^{2}_{2}(y)(f)$ and we can thus define
  \begin{equation*}
    h_{2}(y)(f) = q^{1}_{2}(y)(f) \, .
  \end{equation*}
  We can then put $h(y) = (h_{1}(y), h_{2}(y))$ and get $Cp^{k} \comp h = q^{k}$ by definition.
  \qed
\end{proof}

\begin{proof}[\Cref{prop:consistent-config-final} on \cpageref{prop:consistent-config-final}]
  Suppose that $(\rho, q, R, p, g)$ is a cellular bisimulation, which means in particular that $\dual{q_{1}} \comp p_{1} = \dual{q_{2}} \comp p_{2}$.
  By the construction of $\Cons$ as the pullback in \cref{eq:consistent-config-pullback}, we thus get a unique map $h \from R \to \Cons$ with $\ConsProj_{k} \comp h = p_{k}$.
  In order to show $h \comp g = \ConsCoalg \comp h$, we first note that
  \begin{align*}
    \pair{\ConsProj_{1}}{\ConsProj_{2}} \comp h \comp g
    & = \pair{\ConsProj_{1} \comp h}{\ConsProj_{2} \comp h} \comp g
      \tag*{by property pairing} \\
    & = \pair{p_{1}}{p_{2}} \comp g
      \tag*{by def. $h$} \\
    & = (G_{\gamma} \times G_{\delta}) \comp \pair{p_{1}}{p_{2}}
      \tag*{\cref{eq:cellular-bisimulation-diagrams} for $R$ and $g$} \\
    & = (G_{\gamma} \times G_{\delta}) \comp \pair{\ConsProj_{1}}{\ConsProj_{2}} \comp h
      \tag*{by def. $h$} \\
    & = \pair{\ConsProj_{1}}{\ConsProj_{2}} \comp \ConsCoalg \comp h
      \tag*{\cref{eq:cellular-bisimulation-diagrams} for $\mathrm{Cons}_{\gamma,q}$ and $\ConsCoalg$}
  \end{align*}
  Since the pairing $\pair{\ConsProj_{1}}{\ConsProj_{2}} \from \Cons \to \dual{\U{\delta}} \times \dual{\U{\gamma}}$ is a monomorphism by the pullback mapping property, we thus obtain $h \comp g = \ConsCoalg \comp h$.
  \qed
\end{proof}

%
\section{Details for~\cref{sec:logic}}

\begin{proof}[\Cref{lemma:space_diamonds_and_monoid_mult} on \cpageref{lemma:space_diamonds_and_monoid_mult}]
  Given a based CA $(\gamma, x)$ and a configuration $c$, using the induced monoid action $\overline{\gamma}$ with $\overline{\gamma}(m, x) = \gamma_{1}(x)(m)$ from \cref{lemma:cowriter_is_left_action}, we have the following.
  \begin{align*}
    (\gamma, x), c \sems\spacedia{m}\spacedia{n}\varphi
    \quad&\text{iff}\quad (\gamma, \overline{\gamma}(m, x), c) \sems\spacedia{n}\varphi \\
    \quad&\text{iff}\quad (\gamma, \overline{\gamma}(n, \overline{\gamma}(m, x)), c) \sems\varphi \\
    \quad&\text{iff}\quad (\gamma, \overline{\gamma}(m \monop n, x), c) \sems\varphi \\
    \quad&\text{iff}\quad (\gamma, x, c) \sems\spacedia{m\monop n}\varphi
  \end{align*}
  Since this holds for arbitrary cellular automata and configurations, we conclude that $\spacedia{m} \spacedia{n} \varphi\equiv\spacedia{m \monop n} \varphi$, as desired.
  \qed
\end{proof}

\begin{proof}[\Cref{lem:logical-transport-category} on \cpageref{lem:logical-transport-category}]
Obviously, we have that $(\gamma, x_{0}, c) \sems \phi$ holds if and only if $(\gamma, x_{0}, \id(c)) \sems \phi$ holds.
  Given logical transports $t_{1}\from (\gamma, x_{0}) \logTrans (\delta, y_{0})$ and  $t_{2}\from (\delta, y_{0}) \logTrans (\rho, z_{0})$,
  we obtain
    \begin{equation*}
    (\gamma, x_{0}, c) \sems \phi
    \quad \text{iff} \quad
    (\delta, y_{0}, t_{1}(c)) \sems \phi
    \quad \text{iff} \quad
    (\rho, z_{0}, t_{2}(t_{1}(c))) \sems \phi
  \end{equation*}
  and thus the composition $t_{2} \comp t_{1} \from \dual{\U{\gamma}} \to \dual{\U{\rho}}$ is a logical transport.
  \qed
\end{proof}

\begin{proof}[\Cref{thm:cellular-morphism-gives-logical-transport} on \cpageref{thm:cellular-morphism-gives-logical-transport}]
  Let $(f,s) \from (\gamma, x_{0}) \to (\delta, y_{0})$ be a based cellular morphism.
  \Cref{ex:cellular-bisimulation-from-morphism} yields a cellular bisimulation $(\rho_{f}, q, \mathrm{Gr}_{\dual{f}}, p, g) \from \proMor{\gamma}{\delta}$, and from \cref{thm:bisimulation-to-logical-equivalence} we get
  \begin{equation}
    \label{eq:graph-logical-equivalence}
    (\gamma, q_{1}(u), p_{1}(c)) \sems \phi
    \iff
    (\delta, q_{2}(u), p_{2}(c)) \sems \phi
  \end{equation}
  for all $u \in \mathrm{Gr}_{f}$ and $c \in R$.
  Since $\mathrm{Gr}_{f}$ is the graph of $f$, for every $x \in X$ there is a unique $u \in \mathrm{Gr}_{f}$ with $q_{1}(u) = x$ and $q_{2}(u) = f(x)$.
  Finally, we constructed in \cref{ex:cellular-bisimulation-from-morphism} a map $t_{1} \from \dual{\U{\gamma}} \to \mathrm{Gr}_{\dual{f}}$ with $p_{1} \comp t_{1} = \id$ and $p_{2} \comp t_{1} = s$.
  Therefore, we obtain from \cref{eq:graph-logical-equivalence} for all $c \in \dual{\U{\gamma}}$ that
  \begin{equation*}
    (\gamma, x_{0}, c) \sems \phi
    \iff
    (\delta, f(x_{0}), s(c)) \sems \phi
  \end{equation*}
  and thus $s$ is a logical transport $(\gamma, x_{0}) \logTrans (\delta, y_{0})$ as required.
  \qed
\end{proof}

\begin{proof}[\cref{thm:bisimulation-to-logical-equivalence} on \cpageref{thm:bisimulation-to-logical-equivalence}]
  Let $(\rho, f, R, p, g)$ be a cellular bisimulation $\proMor{\gamma}{\delta}$.
  We proceed by induction on formulas that for all $u \in \U{\rho}$, $c \in R$ and formulas $\phi$, we have
  \begin{equation*}
    (\gamma, f_{1}(u), p_{1}(c)) \sems \phi
    \iff
    (\delta, f_{2}(u), p_{2}(c)) \sems \phi
  \end{equation*}
  This gives then immediately that $(R, p)$ is a logical equivalence between $(\gamma, f_{1}(u))$ and $(\delta, f_{2}(u))$ for all $u \in \U{\rho}$.
  \begin{itemize}
  \item Base case:
    First, we note that by being a cellular bisimulation, we have that
    \begin{equation*}
      p_{1}(c) \comp f_{1}
      = (\dual{f_{1}} \comp p_{1})(c)
      = (\dual{f_{2}} \comp p_{2})(c)
      = p_{2}(c) \comp f_{2} \, .
    \end{equation*}
    Thus, for all $s \in S$, by definition of the semantics,
    \begin{align*}
      (\gamma, f_{1}(u), p_{1}(c)) \sems s
      & \iff p_{1}(c)(f_{1}(u)) = s \\
      & \iff p_{2}(c)(f_{2}(u)) = s \\
      & \iff (\gamma, f_{2}(u), p_{2}(c)) \sems s
    \end{align*}
  \item The propositional connectives are immediate by definition and using the induction hypothesis.
  \item Spatial modality: For $m \in M$, we have
    \begin{align*}
      (\gamma, f_{1}(u), p_{1}(c)) \sems \spacedia{m} \varphi
      & \iff (\gamma, \gamma_{1}(f_{1}(u))(m), p_{1}(c)) \sems \varphi
        \tag*{by definition} \\
      & \iff (\gamma, f_{1}(\rho_{1}(u)(m)), p_{1}(c)) \sems \varphi
        \tag*{by equivariance} \\
      & \iff (\delta, f_{2}(\rho_{1}(u)(m)), p_{2}(c)) \sems \varphi
        \tag*{by IH} \\
      & \iff (\delta, \delta_{1}(f_{2}(u))(m), p_{2}(c)) \sems \varphi
        \tag*{by equivariance} \\
      & \iff (\gamma, f_{2}(u), p_{2}(c)) \sems \spacedia{m} \varphi
        \tag*{by definition}
    \end{align*}
  \item Update modality:
    By definition of cellular bisimulations, we get $g(c) \in R$ with
    \begin{equation}
      \label{eq:configuration-step}
      p(g(c)) = (G_{\gamma}(p_{1}(c)), G_{\delta}(p_{2}(c))) \, ,
    \end{equation}
    and thus
    \begin{align*}
      (\gamma, f_{1}(u), p_{1}(c)) \sems \timedia \varphi
      & \iff (\gamma, f_{1}(u), G_{\gamma}(p_{1}(c))) \sems \varphi
        \tag*{by definition} \\
      & \iff (\gamma, f_{1}(u), p_{1}(g(c))) \sems \varphi
        \tag*{\cref{eq:configuration-step}} \\
      & \iff (\delta, f_{2}(u), p_{2}(g(c))) \sems \varphi
        \tag*{by IH} \\
      & \iff (\delta, f_{2}(u), G_{\delta}(p_{2}(c))) \sems \varphi
        \tag*{\cref{eq:configuration-step}} \\
      & \iff (\delta, f_{2}(u), p_{2}(c)) \sems \timedia \varphi
        \tag*{by definition}
    \end{align*}
  \end{itemize}
  This concludes the induction and thus the proof.
  \qed
\end{proof}

\section{Details for~\cref{sec:hennessy_milner}}

\begin{proof}[\Cref{lem:lifting-local-configurations} on \cpageref{lem:lifting-local-configurations}]
  Let $a \in S$ be the inhabitant and set
  \begin{equation*}
    h(x) =
    \begin{cases}
      f(n), & x = g(n) \\
      a, \text{otherwise}
    \end{cases} \, .
  \end{equation*}
  This is well-defined by the assumption that on $f$ and clearly $h(g(n)) = f(n)$. \qed
\end{proof}

\begin{proof}[\Cref{theo:hm_theorem} on \cpageref{theo:hm_theorem}]
  Given a logical transport $t \from (\gamma, x_0) \logTrans (\delta, y_0)$, we will construct a based cellular morphism $\bar{t} \from (\gamma, x_0) \to (\delta, y_0)$ with $\pi_{2}(\bar{t}) = t$.
  This means that we have to show that there exists a (necessarily unique) based pre-cellular morphism $f_t \from (\gamma, x_0) \to (\delta, y_0)$, such that $t$ is a section of $\dual{f_t}$.
  For brevity, let us write $X = \U{\gamma}$ and $Y = \U{\delta}$.

  First, note that since $(\gamma, x_0)$ is based and reachable, every cell $x \in X$ has the form $\gamma_1(x_0)(m)$ for some $m\in M$.
  Since $\gamma_{1}$ is free, this $m$ is unique and we can define $f_{t}$ as the unique map with the following graph.
  \begin{equation*}
    \setDef{(x, \delta_{1}(y_{0})(m)) \in X \times Y}{m \in M, x = \gamma_{1}(x_{0})(m)}
  \end{equation*}
  Moreover, $f_t$ is equivariant by definition.

  Let us next show that $t$ is a section for $\dual{f_{t}}$.
  Given a configuration $c$, we need to show that $t(c \comp f_{t})(y) = c(y)$ for all $y \in Y$.
  Since $\delta$ is reachable, there is $m \in M$ with $\delta_{1}(y_{0})(m) = y$.
  Then we have
  \begin{align*}
    t(c \comp f_{t})(y) = s
    & \iff t(c \comp f_{t})(\delta_{1}(y_{0})(m)) = s
      \tag*{def. $m$} \\
    & \iff (\delta, \delta_{1}(y_{0})(m), t(c \comp f_{t})) \sems s
      \tag*{def. semantics} \\
    & \iff (\delta, y_{0}, t(c \comp f_{t})) \sems \spacedia{m} s
      \tag*{def. semantics} \\
    & \iff (\gamma, x_{0}, c \comp f_{t}) \sems \spacedia{m} s
      \tag*{logical transport} \\
    & \iff (\gamma, \gamma_{1}(x_{0})(m), c \comp f_{t}) \sems s
      \tag*{def. semantics} \\
    & \iff (c \comp f_{t})(\gamma_{1}(x_{0})(m)) = s
      \tag*{def. semantics}
  \end{align*}
  From the last line, we obtain
  \begin{equation*}
    s
    = c (f_{t}(\gamma_{1}(x_{0})(m)))
    = c (\delta_{1}(f_{t}(x_{0}))(m))
    = c (\delta_{1}(y_{0})(m))
    = c(y)
  \end{equation*}
  and thus $t(c \comp f_{t}) = c$.

  We even have that $\dual{f_{t}}$ and $t$ are inverses.
  Since $\delta$ is reachable, every $y \in Y$ is of the form $\delta_{1}(y_{0})(m)$.
  Therefore, we have
  \begin{equation*}
    f_{t}(\gamma_{1}(x_{0})(m)) = \delta_{1}(y_{0})(m) = y
  \end{equation*}
  and thus $f_{t}$ is an epimorphism (surjective).
  For all $c, d \in \dual{Y}$ with $\dual{f_{t}}(c) = \dual{f_{t}}(d)$, we obtain $c = d$ because $f_{t}$ is epi.
  Hence, $\dual{f_{t}}$ is a monomorphism (injective), which implies that it is an isomorphism with $t$ as inverse.

  Next, we construct the graph of $t$ as the following pullback.
  \begin{equation*}
    \begin{tikzcd}
      \mathrm{Gr}_{t} \rar{p_{2}} \dar{p_{1}}
      & \dual{Y} \dar{\id} \\
      \dual{X} \rar{t}
      & \dual{Y}
    \end{tikzcd}
  \end{equation*}
  Since the following two diagrams commute by the identity law and $t$ and $\dual{f_{t}}$ being inverses, we get sections $s_{1}$ and $s_{2}$ of the projections $p_{1}$ and $p_{2}$ from the mapping property of the pullback.
  \begin{equation*}
    \begin{tikzcd}
      \dual{X} \dar{\id} \rar{t}
      & \dual{Y} \dar{\id} \\
      \dual{X} \rar{t}
      & \dual{Y}
    \end{tikzcd}
    \qquad
    \begin{tikzcd}
      \dual{Y} \rar{\id} \dar{\dual{f_{t}}}
      & \dual{Y} \dar{\id} \\
      \dual{X} \rar{t}
      & \dual{Y}
    \end{tikzcd}
  \end{equation*}
  This makes $(\mathrm{Gr}_{t}, p, s)$ a logical equivalence, using the assumption that $t$ is a logical transport.
  \Cref{thm:relational-hm} gives us a cellular bisimulation $(\rho, q, \Cons, p', g)$ with $(x_{0}, y_{0}) \in \U{\rho}$.

  Finally, we define $r \from X \to \U{\rho}$ as follows, again appealing to $\gamma_{1}$ being free and to reachability.
  \begin{equation*}
    r(x) = \rho_{1}(x_{0}, y_{0})(m), \quad \text{for } x = \gamma_{1}(x)(m)
  \end{equation*}
  Since $q_{2} \comp r = f_{t}$ by equivariance of $q_{2}$, it remains to show that $r$ is pre-cellular to obtain that $f_{t}$ is a pre-cellular morphism.
  For $x = \gamma_{1}(x)(m)$ we have
  \begin{align*}
    r(\gamma_{1}(x)(m'))
    & = r(\gamma_{1}(\gamma_{1}(x_{0}(m)))(m'))
      \tag*{definition of $x$} \\
    & = r(\gamma_{1}(x_{0})(m' \monop m))
      \tag*{monoid action} \\
    & = \rho_{1}(x_{0}, y_{0})(m' \monop m)
      \tag*{definition of $r$} \\
    & = \rho_{1}(\rho_{1}(x_{0}, y_{0})(m))(m')
      \tag*{monoid action} \\
    & = \rho_{1}(r(x))(m')
      \tag*{definition of $r$}
  \end{align*}
  and
  \begin{align*}
    \rho_{2}(r(x))(f)
    & = \rho_{2}(\gamma_{1}(x_{0}, y_{0})(m))(f)
      \tag*{definition of $r$} \\
    & = \delta_{2}(q_{1}(\gamma_{1}(x_{0}, y_{0})(m)))(f)
      \tag*{$q_{1}$ pre-cellular} \\
    & = \delta_{2}(\delta_{1}(q_{1}(x_{0}, y_{0}))(m))(f)
      \tag*{$q_{1}$ equivariant} \\
    & = \delta_{2}(\delta_{1}(x_{0})(m))(f)
      \tag*{$q_{1}$ projection} \\
    & = \delta_{2}(x)(f)
      \tag*{definition of $x$}
  \end{align*}
  This concludes the proof that $f_{t}$ is a pre-cellular morphism and thus the pair $(f_{t}, t)$ is a cellular morphism.
  \qed
\end{proof}

\begin{proof}[\Cref{thm:relational-hm} on \cpageref{thm:relational-hm}]
  Let $X = \U{\gamma}$ and $Y = \U{\delta}$.
  We define the common orbit of cells $x \in X$ and $y \in Y$ in $\gamma$ and $\delta$ to be the image of the product action $\pair{\gamma_{1}(x)}{\delta_{1}(y)}$, given by
  \begin{equation*}
    O_{(x,y)} = \setDef{(\gamma_{1}(x)(m), \delta_{1}(y)(m))}{m \in M}.
  \end{equation*}
  Define a relation $Q \subseteq X \times Y$ by
  \begin{equation*}
    Q = \setDef{(x, y) \in O_{(x_{0}, y_{0})}}{
      \all{\phi} \all{c \in R} (\gamma, x, p_{1}(c)) \sems \phi \iff (\delta, y, p_{2}(c)) \sems \phi}
  \end{equation*}
  together with projection maps $q_{1} \from Q \to X$ and $q_{2} \from Q \to Y$.
  Clearly, $(x_{0}, y_{0}) \in Q$ because they are in their own orbit and by $R$ being a logical equivalence.
  We prove that $Q$ carries a cellular bisimulation and that the configurations in $R$ are consistent in three steps.

  \paragraph*{Step 1 -- Coalgebra on $Q$.}
  We prove that there is a coalgebra $\rho \from Q \to CQ$.
  In order to define $\rho_{1}$, let $\Delta \from \cowriterfunc X \times \cowriterfunc Y \to \cowriterfunc (X \times Y)$ be the monoidal strength with $\Delta(a, b) = \pair{a}{b}$ and set
  \begin{equation*}
    \rho_{1} = X \times Y \xrightarrow{\gamma_{1} \times \delta_{1}} \cowriterfunc X \times \cowriterfunc Y \xrightarrow{\Delta} \cowriterfunc (X \times Y) \, .
  \end{equation*}
  With this definition, $\rho$ is a monoid action.
  It remains to show that it restricts to an action on $Q$, that is, that for all $(x, y) \in Q$ we will show that $\rho_{1}(x, y)(m) \in R$.
  Since $(x, y) \in O_{(x_{0}, y_{0})}$ and because $\gamma_{1}$ and $\delta_{1}$ are monoid actions, we have
  \begin{align*}
    \rho_{1}(x,y)(m)
    & = \rho_{1}(x,y)(\pair{\gamma_{1}(x_{0})}{\delta_{1}(y_{0})}(m'))(m)
      \tag*{$(x, y) \in O_{(x_{0}, y_{0})}$} \\
    & = (\gamma_{1} \times \delta_{1})(\pair{\gamma_{1}(x_{0})}{\delta_{1}(y_{0})}(m'))(m)
      \tag*{by def.} \\
    & = \pair{\gamma_{1} \comp \gamma_{1}(x_{0})}{\delta_{1} \comp \delta_{1}(y_{0})}(m')(m)
      \tag*{property pairing} \\
    & = \pair{\gamma_{1}(x_{0})}{\delta_{1}(y_{0})}(m \monop m')
      \tag*{monoid action}
  \end{align*}
  and thus $\rho_{1}(x,y)(m) \in O_{(x_{0}, y_{0})}$.
  For any formula $\phi$ and $c \in R$, we have
  \begin{align*}
    (\gamma, \gamma_{1}(x)(m), p_{1}(c)) \sems \phi
    & \iff (\gamma, x, p_{1}(c)) \sems \spacedia{m} \phi
      \tag*{def. semantics} \\
    & \iff (\delta, y, p_{2}(c)) \sems \spacedia{m} \phi
      \tag*{$(x, y) \in Q$} \\
    & \iff (\delta, \delta_{1}(x)(m), p_{1}(c)) \sems \phi
      \tag*{def. semantics}
  \end{align*}
  and thus $\rho_{1}(x,y)(m) \in R$ for all $m \in M$.
  This means that $\rho_{1}$ is a well-defined map $Q \to \cowriterfunc Q$.

  To define $\rho_{2}(x, y)$, we need that $I^{\rho_{1}(x,y)} \subseteq I^{\gamma_{1}(x)}$ for all $(x, y) \in O_{(x_{0}, y_{0})}$.
  Given $f \in I^{\rho_{1}(x,y)}$, we show that $f \in I^{\gamma_{1}(x)}$.
  If $|S| \leq 1$, then this is immediately clear because then $I^{\gamma_{1}(x)} = \intHom{N}{S}$ as there is at most one map into $S$.
  Thus, we assume that there are $a, b \in S$ with $a \neq b$.
  Suppose that $f \in I^{\rho_{1}(x,y)}$, $n,n' \in N$ and $\gamma_{1}(x)(n) = \gamma_{1}(x)(n')$.
  We need to show that $f(n) = f(n')$.
  First, we define a configuration $c_{2} \in \dual{Y}$ by $c_{2}(\delta_{1}(y)(n')) = a$ and $c_{2}(z) = b$ for all other $z \in Y$.
  Then we set $c = s_{2}(c_{2})$ and $c_{1} = p_{1}(c)$.
  Since $(x, y) \in O_{(x_{0}, y_{0})}$, there is some $m \in M$ with $x = \gamma_{1}(x_{0})(m)$ and $y = \delta_{1}(y_{0})(m)$.
  We then have
  \begin{align*}
    & c_{1}(\gamma_{1}(x)(n)) = c_{1}(\gamma_{1}(x)(n'))
    \tag*{since $\gamma_{1}(x)(n) = \gamma_{1}(x)(n')$} \\
    & \iff
      (\gamma, x, c_{1}) \sems \spacedia{n} c_{1}(\gamma_{1}(x)(n'))
      \tag*{def. semantics} \\
    & \iff
      (\gamma, x_{0}, c_{1}) \sems \spacedia{m} \spacedia{n} c_{1}(\gamma_{1}(x)(n'))
      \tag*{def. semantics} \\
    & \iff
      (\delta, y_{0}, c_{2}) \sems \spacedia{m} \spacedia{n} c_{1}(\gamma_{1}(x)(n'))
      \tag*{logical equivalence} \\
    & \iff
      (\delta, y, c_{2}) \sems \spacedia{n} c_{1}(\gamma_{1}(x)(n'))
      \tag*{def. semantics} \\
    & \iff c_{2}(\delta_{1}(y)(n)) = c_{1}(\gamma_{1}(x)(n'))
      \tag*{def. semantics}
  \end{align*}
  and similarly
  \begin{align*}
    & c_{1}(\gamma_{1}(x)(n')) = c_{1}(\gamma_{1}(x)(n'))
    \tag*{reflexivity of equality} \\
    & \iff
      (\gamma, x, c_{1}) \sems \spacedia{n'} c_{1}(\gamma_{1}(x)(n'))
      \tag*{def. semantics} \\
    & \iff
      (\gamma, x_{0}, c_{1}) \sems \spacedia{m} \spacedia{n'} c_{1}(\gamma_{1}(x)(n'))
      \tag*{def. semantics} \\
    & \iff
      (\delta, y_{0}, c_{2}) \sems \spacedia{m} \spacedia{n'} c_{1}(\gamma_{1}(x)(n'))
      \tag*{logical equivalence} \\
    & \iff
      (\delta, y, c_{2}) \sems \spacedia{n'} c_{1}(\gamma_{1}(x)(n'))
      \tag*{def. semantics} \\
    & \iff c_{2}(\delta_{1}(y)(n')) = c_{1}(\gamma_{1}(x)(n'))
      \tag*{def. semantics}
  \end{align*}
  Thus, we obtain
  \begin{equation*}
    c_{2}(\delta_{1}(y)(n)) = c_{1}(\gamma_{1}(x)(n')) = c_{2}(\delta_{1}(y)(n'))
  \end{equation*}
  and so, by definition of $c_{2}$, we must have that $\delta_{1}(y)(n) = \delta_{1}(y)(n')$.
  Using the definition of $\rho$, we therefore obtain
  \begin{equation*}
    \rho(x, y)(n) = (\gamma_{1}(x)(n), \delta_{1}(y)(n)) = (\gamma_{1}(x)(n'), \delta_{1}(y)(n')) = \rho(x, y)(n')
  \end{equation*}
  and, since $f \in I^{\rho(x, y)}$, thus $f(n) = f(n')$.
  Since this holds for all $n, n'$ with $\gamma_{1}(x)(n) = \gamma_{1}(x)(n')$, we obtain $f \in I^{\gamma_{1}(x)}$.

  Since $f \in I^{\rho_{1}(x,y)}$ implies $f \in I^{\gamma_{1}(x)}$, we can set
  \begin{equation*}
    \rho_{2}(x, y)(f) = \gamma_{2}(x)(f) \, .
  \end{equation*}
  Thus, we obtain the sought cellular automaton on $Q$ by defining $\rho(x,y) = (\rho_{1}(x,y), \rho_{2}(x,y))$.

  \paragraph*{Step 2 -- Projections are pre-cellular morphisms.}
  We show that $q_{1} \from Q \to X$ and $q_{2} \from Q \to Y$ are pre-cellular morphisms.
  It is clear that both are equivariant by the definition of $\rho_{1}$.
  Moreover, we have
  \begin{equation*}
    \rho_{2}(x, y)(t_{q_{1}}^{\rho(x,y)} (f)) = \gamma_{2}(x)(f)
  \end{equation*}
  by definition of $\rho_{2}$, and thus $Cq_{2} \comp \rho = \gamma \comp q_{2}$.

  In order to prove that $q_{2}$ is also a pre-cellular morphism it remains to show that $\rho_{2}(x, y)(t_{q_{2}}^{\rho(x,y)} (f)) = \delta_{2}(y)(f)$.
  Since $(x, y) \in O_{(x_{0}, y_{0})}$, we have an $m \in M$ with $(x, y) = \pair{\gamma_{1}(x_{0})}{\delta_{1}(y_{0})}(m)$.
  By \cref{lem:lifting-local-configurations}, there is a $c \in \dual{Y}$ with $c \comp \delta_{1}(y) \comp i = f$.
  Using that the logical equivalence has a section $s_{2} \from \dual{Y} \to R$ for $p_{2}$, we obtain $d = s_{2}(c)$ with $p_{2}(d) = c$.
  For this configuration, we have
  \begin{align*}
    & c \comp \delta_{1}(y) \comp i = f \\
    & \iff (\delta, y, c) \sems \infconj_{n \in N} \spacedia{n} f(n)
      \tag*{by \cref{lem:configuration-test-formula}} \\
    & \iff (\gamma, y_{0}, c) \sems \spacedia{m} \parens*{\infconj_{n \in N} \spacedia{n} f(n)}
      \tag*{$y = \delta_{1}(y_{0})(m)$ and def. semantics} \\
    & \iff (\delta, x_{0}, p_{1}(d)) \sems \spacedia{m} \parens*{\infconj_{n \in N} \spacedia{n} f(n)}
      \tag*{$c = p_{2}(d)$ and logical equivalence} \\
    & \iff (\delta, x, p_{1}(d)) \sems \infconj_{n \in N} \spacedia{n} f(n)
      \tag*{$x = \gamma_{1}(x_{0})(m)$ and def. semantics} \\
    & \iff p_{1}(d) \comp \gamma_{1}(x) \comp i = f
      \tag*{by \cref{lem:configuration-test-formula}}
  \end{align*}
  and thus $p_{1}(d)$ also extends $f$.
  This allows us to finalise the pre-morphism proof.
  For $s \in S$, we define $\phi = \parens*{\infconj_{n \in N} \spacedia{n} f(n)} \to \timedia s$ and then we have
  \begin{align*}
    \delta_{2}(y)(f) = s
    & \iff (\delta, y, c) \sems \phi
      \tag*{by \cref{lem:local-rule-test-formula}} \\
    & \iff (\delta, \delta_{1}(y_{0})(m), c) \sems \phi
      \tag*{since $\delta_{1}(y_{0})(m) = y$} \\
    & \iff (\delta, y_{0}, c) \sems \spacedia{m} \phi
      \tag*{def. semantics} \\
    & \iff (\gamma, x_{0}, p_{1}(d)) \sems \spacedia{m} \phi
      \tag*{logical equivalence} \\
    & \iff (\gamma, x, p_{1}(d)) \sems \phi
      \tag*{$\gamma_{1}(x_{0})(m) = x$ and semantics} \\
    & \iff \gamma_{2}(x)(f) = s
      \tag*{$p_{1}(d) \comp \gamma_{1}(x) \comp i = f$ and \cref{lem:local-rule-test-formula}}
  \end{align*}
  This and the definition of $\rho$ yields
  \begin{equation*}
    \rho_{2}(x, y)(f) = \gamma_{2}(x)(f) = \delta_{2}(y)(f)
  \end{equation*}
  and thus $q_{2}$ is also a pre-cellular morphism and $(\rho, q)$ is a pre-cellular bisimulation $\proMor{\gamma}{\delta}$.
  By \cref{ex:consistent-configuration-bisimulation}, $(\rho, q, \mathrm{Cons}_{\rho,q}, p', g)$ is a cellular bisimulation.

  \paragraph*{Step 3 -- Map into consistent configurations.}
  Lastly, we show that there is a map $h \from R \to \mathrm{Cons}_{\rho,q}$.
  We do this by showing that $\dual{q_{1}} \comp p_{1} = \dual{q_{2}} \comp p_{2}$ and then applying the mapping property of the pullback defining $\mathrm{Cons}_{\rho,q}$.
  Let $c \in R$ and set $c_{k} = p_{k}(c)$.
  For $(x, y) \in Q$ we have by definition that $(\dual{q_{1}} \comp p_{1})(c)(x, y) = c_{1}(x)$ and $(\dual{q_{2}} \comp p_{2})(c)(x, y) = c_{2}(y)$.
  Then for $s \in S$, have
  \begin{align*}
    c_{1}(x) = s
    & \iff (\gamma, x, c_{1}) \sems s
      \tag*{def. semantics} \\
    & \iff (\gamma, x_{0}, c_{1}) \sems \spacedia{m} s
      \tag*{$x = \gamma_{1}(x_{0})(m)$ and semantics} \\
    & \iff (\delta, y_{0}, c_{2}) \sems \spacedia{m} s
      \tag*{logical equivalence} \\
    & \iff (\delta, y, c_{2}) \sems s
      \tag*{$y = \delta_{1}(y_{0})(m)$ and semantics} \\
    & \iff c_{2}(y) = s
      \tag*{def. semantics}
  \end{align*}
  This shows that $\dual{q_{1}} \comp p_{1} = \dual{q_{2}} \comp p_{2}$ and we obtain a unique $h \from R \to \mathrm{Cons}_{\rho,q}$ with $p_{k}' \comp h = p_{k}$.
  \qed
\end{proof}


\end{document}